\pdfoutput=1
\documentclass[a4paper,USenglish,cleveref, autoref, thm-restate,numberwithinsect,nolineno]{socg-lipics-v2021}
\hideLIPIcs

\usepackage{microtype}
\usepackage{mathtools}
\usepackage{amssymb}
\usepackage{amsthm}
\usepackage{bm}
\newcommand{\bbZ}{\mathbb{Z}}
\newcommand{\bbR}{\mathbb{R}}

\newcommand{\bbS}{\mathbb{S}}

\newcommand{\ccH}{\mathcal{H}}
\newcommand{\ccO}{\mathcal{O}}

\newcommand{\AAA}{\mathcal{A}}
\newcommand{\bO}{\mathbf{0}}
\newcommand{\bplus}{\bm{+}}

\newcommand{\ztwo}{\bbZ_2}

\newcommand{\scalar}[2]{#1 \cdot #2}

\DeclarePairedDelimiter{\abs}{\lvert}{\rvert}
\DeclarePairedDelimiter{\norm}{\lVert}{\rVert}

\newcommand{\lineC}{\overline{C}}

\newcommand{\funR}{X}
\newcommand{\funB}{Y}

\newtheorem{fact}[theorem]{Fact}
\theoremstyle{remark}

\newtheorem{problem}{Problem}

\let\epsilon\varepsilon

\DeclarePairedDelimiter{\floor}{\lfloor}{\rfloor}
\DeclarePairedDelimiter{\ceil}{\lceil}{\rceil}

\DeclareMathOperator{\supp}{Supp}
\DeclareMathOperator{\area}{area}

\bibliographystyle{plainurl}

\title{
Eight-Partitioning Points in 3D, and Efficiently Too\thanks{A preliminary version of this work  appeared in \emph{SoCG'24} \cite{socg-version}.}} 

\titlerunning{Eight-partitioning points in 3D}

\author{Boris Aronov}{Department of Computer Science and Engineering, Tandon School of Engineering, New York University, Brooklyn, NY 11201 USA \and \url{https://engineering.nyu.edu/faculty/boris-aronov}}{boris.aronov@nyu.edu}{https://orcid.org/0000-0003-3110-4702}{Work has been supported by NSF grants CCF~15-40656 and CCF~20-08551, and by grant 2014/170 from the US-Israel Binational Science Foundation. Part of this research was conducted while BA was visiting ISTA in the summers of 2022 and 2023. The visit of BA to ISTA in the summer of 2022 was supported by an ISTA Visiting Professorship. Research of BA also partially supported by ERC grant no.~882971, ``GeoScape,'' and by the Erd\H os Center.}
\author{Abdul Basit}{School of Mathematics, Monash University, VIC 3800, Australia \and \url{https://sites.google.com/view/abasit}}{abdul.basit@monash.edu}{https://orcid.org/0000-0003-0754-3203}{Work has been supported by Australian Research Council grant DP220102212.}
\author{Indu Ramesh}{Department of Computer Science and Engineering, Tandon School of Engineering, New York University, Brooklyn, NY 11201 USA.}{ir914@nyu.edu}{https://orcid.org/0009-0008-9967-0819}{Work supported by a Tandon School of Engineering Fellowship and by NSF Grant CCF-20-08551.}
\author{Gianluca Tasinato}{Institute of Science and Technology Austria, Am Campus 1, 3400 Klosterneuburg, Austria \and \url{https://gtasinato.github.io/}}{gianluca.tasinato@ist.ac.at}{https://orcid.org/0009-0008-7231-5753}{}
\author{Uli Wagner}{Institute of Science and Technology Austria, Am Campus 1, 3400 Klosterneuburg, Austria \and \url{https://ist.ac.at/en/research/wagner-group/}}{uli@ist.ac.at}{https://orcid.org/0000-0002-1494-0568}{}

\authorrunning{B. Aronov, A. Basit, I. Ramesh, G. Tasinato, and U. Wagner}

\Copyright{Boris Aronov, Abdul Basit, Indu Ramesh, Gianluca Tasinato, and Uli Wagner} 

\ccsdesc[100]{Theory of computation $\to$ Randomness, geometry and discrete structures $\to$ Computational geometry} 
\ccsdesc[100]{Mathematics of computing $\to$ Discrete mathematics}

\keywords{Mass partitions, partitions of points in three dimensions, Borsuk-Ulam Theorem, Ham-Sandwich Theorem}

\category{}

\relatedversion{} 

\acknowledgements{BA and AB would like to thank William Steiger for insightful initial discussions of the problems addressed in this work.}

\begin{document}
\graphicspath{{./figs/}}
\maketitle

\begin{abstract}
An {\em eight-partition} of a finite set of points (respectively, of a continuous mass distribution) in $\mathbb{R}^3$ consists of three planes that divide the space into $8$ octants, such that each open octant contains at most $1/8$ of the points (respectively, of the mass). In 1966, Hadwiger showed that any mass distribution in $\mathbb{R}^3$ admits an eight-partition; moreover, one can prescribe the normal direction of one of the three planes. The analogous result for finite point sets follows by a standard limit argument.

We prove the following variant of this result: Any mass distribution (or point set) in $\mathbb{R}^3$ admits an eight-partition for which the intersection of two of the planes is a line with a prescribed direction.

Moreover, we present an efficient algorithm for calculating an eight-partition of a set of $n$ points in~$\mathbb{R}^3$ (with prescribed normal direction of one of the planes) in time 
$O^{*}(n^{7/3})$.\end{abstract}

\section{Introduction}

Geometric methods for partitioning space, point sets, or other geometric objects are a central topic in discrete and computational geometry.
Partitioning results are often proved using topological methods and also play an important role in topological combinatorics~\cite{DLGMM,matousek_using_2008,roldan2021survey,Zivaljevic2017Survey}.
A classical example is the famous \emph{Ham-Sandwich Theorem}, which goes back to the work of Steinhaus, Banach, Stone, and Tukey (see \cite[Sec.~1]{roldan2021survey} for more background and references). A ``discrete'' version of this theorem asserts that, given any $d$ finite point sets~$P_1,\dots,P_d$ in~$\bbR^d$, there is an (affine) hyperplane~$H$ that \emph{simultaneously bisects} all $P_i$, i.e., each of the two open half-spaces determined by $H$ contains at most $|P_i|/2$ points, $1\leq i \leq d$. This follows (by a standard limit argument, see \cite[Sec.~3.1]{matousek_using_2008}) from the following ``continuous'' version: Let $\mu_1,\dots,\mu_d$ be \emph{mass distributions} in $\bbR^d$, i.e., finite measures such that every open set is measurable and every hyperplane has measure zero. Then there exists a hyperplane $H$ such that 
$\mu_i(H^+)=\mu_i(H^-)=\frac{1}{2}\mu_i(\bbR^d)$ for $1\leq i\leq d$, where $H^+$ and $H^-$ are the two open half-spaces bounded by $H$.

In this paper, we are interested in another classical equipartitioning problem, first posed by Gr\"unbaum \cite{gru} in 1960: Given a mass distribution (respectively, a finite point set) in $\bbR^d$, can one find $d$ hyperplanes that subdivide $\bbR^d$ into $2^d$ open orthants, each of which contains exactly $1/2^d$ of the mass (respectively, at most $1/2^d$ of the points)? We call such a $d$-tuple of hyperplanes a \emph{$2^d$-partition} of the mass distribution (respectively, of the point set). 

For $d=2$, it is an easy consequence of the planar Ham-Sandwich theorem that any mass distribution (or point set) in $\bbR^2$ admits a four-partition; moreover, the four-partition can be chosen such that one of the lines has a prescribed direction (indeed, start by choosing a first line in the prescribed direction that bisects the given mass distribution; by the Ham-Sandwich Theorem, there exists a second line that simultaneously bisects the two parts of the mass  
on either side of the first line). Alternatively, one can also show that there is always a four-partition such that the two lines are orthogonal. Intuitively, the reason that we can impose such additional conditions is that the four-partitioning problem in the plane is \emph{underconstrained}: A line in the plane can be described by two independent parameters, so a pair of lines have four degrees of freedom, while the condition that the four quadrants have the same mass can be expressed by three equations, leaving one degree of freedom; either one of the additional constraints uses this extra degree of freedom.

In 1966, Hadwiger~\cite{H} gave an affirmative answer to Gr\"unbaum's question for $d=3$ and showed that any mass distribution in $\bbR^3$ admits an eight-partition; moreover, the normal vector of one of the planes can be prescribed arbitrarily. This result was later re-discovered by Yao, Dobkin, Edelsbrunner, and Paterson~\cite{YDEP}.
\begin{theorem}[\cite{H,YDEP}]
\label{thm:hadwiger}
Let $\mu$ be a mass distribution on $\bbR^3$, and let $v \in \bbS^2$.
Then there exists a triple of planes $(H_1, H_2, H_3)$ that form an eight-partition for $\mu$ and such that the normal vector of $H_1$ is $v$.
\end{theorem}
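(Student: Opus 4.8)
The plan is to prove Theorem~\ref{thm:hadwiger} via a Borsuk--Ulam-type argument, following the classical topological approach to mass partitions. First I would set up the configuration space: with the normal direction $v$ of $H_1$ fixed, the plane $H_1$ is determined by a single real parameter (its signed distance from the origin along $v$), but to make the space compact and symmetric it is cleaner to let $H_1$ range over the pencil of planes with normal $v$ parametrized by $\bbS^1$ (a one-point compactification / great-circle trick), and similarly allow the two remaining planes $H_2,H_3$ to be arbitrary oriented planes parametrized by points in $\bbS^3$ (unit vectors $(a,b,c,d)$ giving the plane $ax_1+bx_2+cx_3 = d$, possibly rescaled so the vector has unit norm). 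The total configuration space is then a product of spheres carrying a natural free $\ztwo$-action (or, better, a $(\ztwo)^3$-action) given by flipping the orientations of the three planes.

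Next I would define a test map. For a configuration of three oriented planes, one gets $8$ open octants $O_\sigma$ indexed by sign vectors $\sigma\in\{+,-\}^3$, and one records the vector of mass discrepancies $\bigl(\mu(O_\sigma)-\tfrac18\mu(\bbR^3)\bigr)_\sigma \in \bbR^8$. This lands in (in fact spans only) a $7$-dimensional subspace, and after projecting onto the appropriate subspace on which $(\ztwo)^3$ acts without nonzero fixed points, one obtains an equivariant map from the configuration space to a representation sphere. The heart of the argument is a Borsuk--Ulam / Dold-type theorem asserting that no such equivariant map into a sphere of the right (lower) dimension can be nonzero everywhere; a zero of the map is exactly an eight-partition with $H_1\perp v$. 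One has to be slightly careful about degenerate configurations (e.g.\ coincident or parallel planes, or the compactification points where a plane goes to infinity), handling them either by showing the map extends continuously and a zero at such a point still yields a valid eight-partition in the limiting sense, or by a standard perturbation/continuity argument. For the point-set version one then passes to the limit as in \cite[Sec.~3.1]{matousek_using_2008}.

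The main obstacle, and the step I would expect to require the most care, is the equivariant-topology input: verifying that the relevant $(\ztwo)^3$-representations and sphere dimensions line up so that a Borsuk--Ulam-type nonexistence statement actually applies (this is a dimension count — three planes with a prescribed normal on one of them give $1+3+3 = 7$ degrees of freedom, matching the $7$ independent equi-mass equations, so the problem is exactly constrained and the topology is tight, with no slack), and dealing correctly with the action on the configuration space and the (non-)freeness at degenerate configurations. A secondary technical point is ensuring the test map is well-defined and continuous across the boundary strata of the compactified configuration space. Since the theorem is classical and attributed to Hadwiger~\cite{H} (rediscovered in \cite{YDEP}), I would either cite their argument directly or reproduce the Borsuk--Ulam computation; the novel variant with a prescribed \emph{line} direction (rather than a prescribed plane normal) is what requires the genuinely new configuration-space design described in the body of this paper.
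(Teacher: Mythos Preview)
The paper does not prove Theorem~\ref{thm:hadwiger}; it is quoted as a known result of Hadwiger~\cite{H} (rediscovered by Yao, Dobkin, Edelsbrunner, and Paterson~\cite{YDEP}) and is simply cited. There is therefore no proof in the paper to compare your proposal against.

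Your sketch is a reasonable outline of the standard topological proof, and in fact is close in spirit to what the paper does for its own variant (Theorem~\ref{th:our-8-partition}): set up a configuration space, define a test map via the alternating sums $F_\alpha$ (which is exactly the ``project onto the subspace with no fixed points'' step you describe), check equivariance, and invoke a Borsuk--Ulam-type nonexistence result. One quibble: parametrizing $H_1$ by $\bbS^1$ is not quite right---with the normal $v$ fixed, the bisecting plane orthogonal to $v$ is \emph{unique} (as in Lemma~\ref{lem:uniqueness}), so in Hadwiger's setting one typically fixes $H_1$ outright and works with the two remaining planes parametrized by $\bbS^3\times\bbS^3$, then applies Borsuk--Ulam to the resulting $\ztwo^2$- (or $\ztwo$-) equivariant map into $\bbR^7$ minus the three conditions already guaranteed by the choice of $H_1$. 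But since the paper treats the theorem as background, your plan to ``cite their argument directly'' is exactly what the paper does.
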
 

More recently, Blagojevi\'{c} and Karasev~\cite{blag_karasev_2016} gave a different proof for the existence of eight-partitions and showed the following variant:
\begin{theorem}[\cite{blag_karasev_2016}]
\label{thm:blag_karasev}
Let $\mu$ be a mass distribution on $\bbR^3$.
Then there exists an eight-partition $(H_1, H_2, H_3)$ of $\mu$ such that the plane $H_1$ is perpendicular to both $H_2$ and $H_3$.
\end{theorem}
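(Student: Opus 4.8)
The plan is to recast the statement so that the configuration‑space/test‑map paradigm applies. First note that in any eight‑partition no two of the planes are parallel, so $H_2\cap H_3$ is a line; and, writing $v\in\bbS^2$ for the normal of $H_1$, the conditions $H_1\perp H_2$ and $H_1\perp H_3$ are equivalent to requiring that the normals of $H_2$ and $H_3$ lie in $v^{\perp}$, i.e.\ that $H_2$ and $H_3$ are parallel to $v$ and that their common line $H_2\cap H_3$ is parallel to $v$. Projecting $\bbR^3$ along $v$ onto $v^{\perp}\cong\bbR^2$ turns $H_1$ into a height cut $\{\langle x,v\rangle=t\}$ and turns $H_2,H_3$ into two lines $\bar H_2,\bar H_3$ in the plane, so the theorem is equivalent to: there exist $v\in\bbS^2$, $t\in\bbR$, and two lines in $v^{\perp}$ whose associated eight octants each carry mass exactly $\tfrac{1}{8}$. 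Any such partition forces $H_1$ to bisect $\mu$, so an equivalent phrasing is that for some $v$ the two half‑measures obtained by projecting $\mu|_{\{\langle x,v\rangle>t\}}$ and $\mu|_{\{\langle x,v\rangle<t\}}$ to $v^{\perp}$ admit a \emph{common} four‑partition by two lines. The degrees of freedom balance: $v$ gives $2$, $t$ gives $1$, each line gives $2$, for a total of $7$, matching the $7$ independent mass equations; the planar ``two measures, two lines'' problem is overdetermined on its own, and it is precisely the two extra parameters in $v$ that make it solvable.

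Accordingly, I would take the configuration space $C$ to be an $(\bbS^1\times\bbS^2\times\bbS^2)$‑bundle over $\bbS^2$: the base records the direction $v$, the $\bbS^1$‑factor records a compactified signed offset for $H_1=\{\langle x,v\rangle=t\}$ (with the two added points being $H_1=\bbR^3$ and $H_1=\emptyset$), and the two $\bbS^2$‑factors record compactified oriented lines in $v^{\perp}\cong\bbR^2$ for $\bar H_2$ and $\bar H_3$; thus $\dim C=2+1+2+2=7$. (Equivalently one may fiber an $(\bbS^2\times\bbS^2)$‑bundle over the $\bbS^3$ of oriented affine planes $H_1$.) The relevant symmetry group $G$, of order $16$, is generated by the orientation reversals of $H_1$, of $H_2$ and of $H_3$ together with the exchange $H_2\leftrightarrow H_3$ (so $G\cong\ztwo\times D_4$ with $D_4$ the dihedral group of order $8$); it acts on $C$, freely off a degenerate locus where two planes coincide or a plane escapes to infinity, and it permutes the eight octants. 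The test map $f\colon C\to\bbR^{8}$, $f=\bigl(\mu(O_\sigma)-\tfrac{1}{8}\bigr)_{\sigma\in\{+,-\}^{3}}$ with $O_\sigma$ the octant of sign pattern $\sigma$, is $G$‑equivariant, lands in the sum‑zero subspace $W\cong\bbR^{7}$, and a zero of $f$ at a nondegenerate point is exactly an eight‑partition of the required type. By the standard reduction the problem becomes: show there is \emph{no} $G$‑equivariant map from the nondegenerate part of $C$ to the unit sphere $\bbS(W)$ of $W$; equivalently, that the $G$‑equivariant Euler class of $C\times_G W\to C/G$ is nonzero in degree $7$ with the appropriate twisted coefficients.

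The routine parts are: arranging the compactification so that on the boundary stratum (planes at infinity, coinciding planes, ties among medians) $f$ is provably nonzero in a $G$‑controlled way, so that the obstruction calculation takes place on a closed free $G$‑space — or, alternatively, perturbing $\mu$ to a generic finite point set, running the argument there, and passing to the limit, which also yields the point‑set version stated in the introduction — and checking freeness of the $G$‑action off the degenerate locus. The genuinely hard step is the topological computation: pinning down the $G$‑representation on $C$ and on $W$ exactly, including the orientation character of $H^{7}(C/G)$, and verifying that the resulting characteristic class does not vanish. To make this tractable I would first use the planar Ham--Sandwich theorem to analyze the fibers — for fixed $v$ and fixed bisecting value $t$, the set of pairs of lines four‑partitioning the projected measure is generically a circle — and then push the whole problem down to an $\bbS^1$‑bundle over $\bbS^2$, a $3$‑manifold, carrying a residual $\ztwo\times\ztwo$‑action and a test map into $\bbR^{3}$, where the obstruction becomes a concrete Stiefel--Whitney‑class computation in the spirit of the arguments of Hadwiger and of Blagojevi\'c and Karasev; indeed Theorem~\ref{thm:hadwiger} may be read as the analogous obstruction computation in the case where the normal $v$ is prescribed rather than free. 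I expect this last class computation to be the crux, since it is sensitive to the precise equivariant structure, and since choosing a compactification on which $f$ fails to be boundary‑nonzero is the usual way such arguments collapse.
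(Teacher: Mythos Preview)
The paper does not itself prove Theorem~\ref{thm:blag_karasev}; it is quoted from \cite{blag_karasev_2016}. The paper does, however, prove the closely related Theorem~\ref{thm:top-detailed} using Blagojevi\'c--Karasev's machinery, so your proposal can be measured against that.

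Your write-up is a plan, not a proof. You set up a $7$-dimensional configuration space with a $(\ztwo\times D_4)$-action, reduce the statement to the nonvanishing of an equivariant Euler class, and then explicitly stop, calling that class computation ``the crux'' without carrying it out. That is the gap: a correct parameter count and a correctly identified symmetry group do not by themselves force the obstruction to be nonzero (witness the open $d=4$ case of Gr\"unbaum's problem, where the count also balances). What actually makes the argument tractable---and what your final paragraph only gestures at---is Lemma~\ref{lem:2dpartition}: for a planar measure with connected support, the pair of four-partitioning lines with a prescribed angle bisector is \emph{unique} (not ``generically a circle'') and depends continuously and $\bbZ_4$-equivariantly on the bisector direction (eq.~\eqref{eq:order4action}). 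This collapses your two $\bbS^2$ line-factors to a single $\bbS^1$, after which the residual problem becomes a short degree/Borsuk--Ulam argument, exactly as in Step~2 of the proof of Theorem~\ref{thm:top-detailed} via Propositions~\ref{prop:degree_linear_map} and~\ref{prop:borsuk_ulam}, rather than a Stiefel--Whitney computation on a $7$-manifold. Until you actually execute that reduction and the final degree step, the argument is incomplete.
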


Our first result is the following alternative version of eight-partitioning, which to the best of our knowledge is new:
\begin{theorem} \label{th:our-8-partition} Given a mass distribution $\mu$ in $\bbR^3$ and a vector $v \in \bbS^2$, there exists an eight-partition $(H_1, H_2, H_3)$ of $\mu$ such that the intersection of the two planes  $H_1$ and $H_2$ is a line in the direction of $v$.
\end{theorem}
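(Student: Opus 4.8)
(The statement for a finite point set follows from the one for mass distributions by the standard limiting argument, so I focus on a mass distribution $\mu$.) After a rotation we may assume that $v$ is the vertical direction, so the requirement ``$H_1\cap H_2$ is a line in direction $v$'' says precisely that \emph{$H_1$ and $H_2$ are vertical planes}, i.e.\ both contain the vertical direction. A co-oriented vertical plane is the same datum as a co-oriented line in the horizontal plane $\bbR^2$; compactifying the space of co-oriented lines in $\bbR^2$ by adjoining, for each direction, the two degenerate half-planes $\emptyset$ and $\bbR^2$, one obtains a parameter space homeomorphic to $\bbS^2$ on which reversal of co-orientation acts antipodally. Likewise, co-oriented planes in $\bbR^3$ (degenerate ones included) form an $\bbS^3$ with reversal of co-orientation acting antipodally.

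Set $\mathcal X:=\bbS^2\times\bbS^2\times\bbS^3$, whose points we read as triples $(H_1,H_2,H_3)$ with $H_1,H_2$ co-oriented vertical planes and $H_3$ a co-oriented plane, and let $G:=(\bbZ_2)^3$ act on $\mathcal X$, the $i$-th generator reversing the co-orientation of $H_i$; this action is free and permutes the eight regions $O_\sigma:=H_1^{\sigma_1}\cap H_2^{\sigma_2}\cap H_3^{\sigma_3}$, $\sigma\in\{+,-\}^3$, by the regular action of $G$ on $\{+,-\}^3$. Define the test map
\[
 f\colon \mathcal X\longrightarrow W_0:=\Bigl\{x\in\bbR^{\{+,-\}^3}:\textstyle\sum_{\sigma}x_\sigma=0\Bigr\}\cong\bbR^7,\qquad f(H_1,H_2,H_3):=\bigl(\mu(O_\sigma)-\tfrac18\bigr)_\sigma,
\]
which is $G$-equivariant for the action on $W_0$ induced by the regular representation of $G$ modulo its trivial summand. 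If $f(H_1,H_2,H_3)=0$, all eight $O_\sigma$ have mass $\tfrac18>0$ and are therefore nonempty, which excludes every degenerate configuration (a plane at infinity, two of the planes coinciding or parallel, or the three planes forming a prism all leave some $O_\sigma$ empty); hence $H_1,H_2$ are distinct vertical planes meeting in a vertical line and $(H_1,H_2,H_3)$ is an eight-partition of $\mu$ of the desired type. So it suffices to show $f$ has a zero, for which it is in turn enough that there be no $G$-equivariant map $\mathcal X\to S(W_0)$ (otherwise $f/\norm f$ would be one).

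This last, purely topological assertion is the crux, and it is \emph{not} delivered by a plain Fadell--Husseini index computation. Writing $H^*(BG;\bbF_2)=\bbF_2[t_1,t_2,t_3]$, the Euler class of $W_0$ equals $t_1t_2t_3(t_1{+}t_2)(t_1{+}t_3)(t_2{+}t_3)(t_1{+}t_2{+}t_3)$, and a short calculation shows it lies already in the ideal $(t_1^3,t_2^3,t_3^4)\subseteq\mathrm{Ind}_G(\mathcal X)$, so the index criterion is inconclusive -- this is precisely the obstruction met in the analogous treatment of Hadwiger's Theorem~\ref{thm:hadwiger}. The plan is therefore to adapt the finer argument of Blagojevi\'c and Karasev~\cite{blag_karasev_2016}: their proof of Theorem~\ref{thm:blag_karasev} runs over a configuration space of the same dimension, acted on by the same group $G$ (and, presumably, by $G$ enlarged by the involution that swaps the two ``special'' planes -- here $H_1\leftrightarrow H_2$, again a symmetry of $\mathcal X$), differing from the present one only in the linear conditions placed on the planes (mutual perpendicularities there, ``$H_1,H_2$ both contain $v$'' here). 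The work is to verify that the degree-theoretic core of their argument depends only on the dimension count, on the $G$-action on the eight regions, and on how the three plane-parameter spheres sit equivariantly inside $\mathcal X$ -- all unchanged here -- and I expect this to be the main obstacle. A direct geometric reduction of Theorem~\ref{th:our-8-partition} to Theorem~\ref{thm:hadwiger} would be cleaner, but I do not see one.
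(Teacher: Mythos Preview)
Your proposal stops precisely at the point where the real work begins: you correctly observe that the Fadell--Husseini index of $(\bbZ_2)^3$ acting on $\bbS^2\times\bbS^2\times\bbS^3$ is too large for the criterion to bite, and then defer to an unspecified ``adaptation'' of the Blagojevi\'c--Karasev degree argument, hoping that enlarging $G$ by the swap $H_1\leftrightarrow H_2$ will suffice. That hope is not substantiated, and in fact the paper does \emph{not} proceed this way at all.

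The paper's key move is a dimension reduction that you are missing. Rather than search over all pairs of vertical planes, it first projects $\mu$ to the horizontal plane and invokes a planar lemma (due to Blagojevi\'c--Karasev): for each direction $w\in\bbS^1$ there is a \emph{unique} pair of lines four-partitioning the projected measure with $w$ bisecting the angle between them, and this pair depends continuously on $w$. Lifting the two lines to vertical planes gives, for each $w\in\bbS^1$, a canonical pair $(H_1(w),H_2(w))$ that already four-sects $\mu$ and has vertical intersection line. The configuration space thus collapses from your $\bbS^2\times\bbS^2\times\bbS^3$ to $\bbS^1\times\bbS^3$, and only the four alternating sums with $\alpha_3=-$ remain to be killed, so the target is $\bbR^4$. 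Crucially, rotating $w$ by $\pi/2$ sends $(H_1,H_2)\mapsto(H_2,-H_1)$, which generates a $\bbZ_4$-action rather than the $\bbZ_2\times\bbZ_2$ (or its extension by the swap) that you contemplate; together with the antipodal $\bbZ_2$ on the $\bbS^3$-factor one gets $G=\bbZ_4\times\bbZ_2$. The nonexistence of a $G$-map $\bbS^1\times\bbS^3\to\bbS^3$ then follows from an elementary degree argument: each slice $\Psi_a\colon\bbS^3\to\bbS^3$ is antipodal (hence of nonzero degree), all slices are homotopic, but the $\bbZ_4$-generator acts on the target $\bbS^3$ with degree $-1$, forcing $[\Psi_a]=-[\Psi_a]$, a contradiction.

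So the paper's route is not a refinement of the obstruction theory on your seven-dimensional space; it is a different scheme altogether, trading the difficult equivariant topology for a geometric uniqueness lemma in the plane that produces a much smaller, more rigid configuration space. If you want to salvage your approach, you would need to actually carry out the equivariant computation with the enlarged group on $\bbS^2\times\bbS^2\times\bbS^3$ and show it obstructs; as written, the proposal does not do this.
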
 

As in the case of the Ham-Sandwich Theorem, each of the three theorems above also implies the existence of the corresponding type of eight-partition for finite point sets, again by a standard limit argument (see Lemma~\ref{lem:limiting_lemma}).

We remark that, in general, $d$ hyperplanes in $\bbR^d$ are described by $d^2$ independent parameters, while the condition that $2^d$ orthants have equal mass can be expressed by $2^d-1$ equations. For $d=3$, this leaves $9-7=2$ degrees of freedom, which allows for any one of the additional conditions imposed in Theorems~\ref{thm:hadwiger}, \ref{thm:blag_karasev}, and \ref{th:our-8-partition}, respectively.
On the other hand, for $d\geq 5$, we have $d^2<2^d-1$, so intuitively Gr\"unbaum's problem is overconstrained. Avis~\cite{av} made this precise and constructed explicit counterexamples using the well-known \emph{moment curve}
$\gamma=\{(t,t^2,\dots,t^d)\colon t\in \bbR\}$ in $\bbR^d$.  
The crucial fact is that any hyperplane intersects the moment curve $\gamma$ in at most $d$ points (\cite[Lemma~1.6.4]{matousek_using_2008}). Thus, for $d\geq 5$, a mass distribution supported on $\gamma$ admits no $2^d$-partition because any $d$ hyperplanes intersect~$\gamma$ in at most $d^2$ points, which subdivide~$\gamma$ into at most $d^2+1$ intervals, hence there are always at least $2^d-d^2-1>0$ orthants that do not intersect~$\gamma$ and hence contain no mass.
The last remaining case $d=4$ of Gr\"unbaum's problem, i.e., the question whether any mass distribution in $\bbR^4$ admits a $16$-partition by four hyperplanes, remains stubbornly open  
(see~\cite{blagojevic_topology_2018}, \cite[Conjecture~7.2]{DLGMM}, \cite[pp.~50--51]{matousek_using_2008}, and \cite[Problem 2.1.4]{roldan2021survey}  for more background and related open problems). 
\medskip

We now turn to the algorithmic question of computing eight-partitions in $\bbR^3$.
\begin{problem} \label{prb:8-partition}
Given a set $P$ of $n$ points in $\bbR^3$, in sufficiently general position, compute three planes $H_1,H_2,H_3$ that form an eight-partition of the points.
\end{problem}

As noted above, the corresponding problem of computing a four-partition of a planar point set can be reduced to finding a Ham-Sandwich cut of two planar point sets that are separated by a line. Megiddo~\cite{Meg} showed that this can be done in linear time.

To characterize the complexity of  Problem~\ref{prb:8-partition}, we introduce the following concept. A \emph{halving line} (resp., \emph{halving plane}) for an $n$-point set in $\bbR^2$ (resp., $\bbR^3$) in general position is a line (resp., plane) that passes through two (resp., three) of the points and divides the remaining ones as equally as possible.
Let $h_2(n)$ (resp., $h_3(n)$) denote the maximum number of halving lines (resp., planes) for an $n$-point set in $\bbR^2$ (resp., $\bbR^3$). The best known upper and lower bounds for $h_2(n)$ are $O(n^{4/3})$, due to Dey~\cite{dey1998}, and $\Omega(n e^{\sqrt{\log n}})$, due to T\'{o}th~\cite{toth2001point}, respectively. For $h_3(n)$, the best-known bounds are are $O(n^{5/2})$, due to Sharir, Smorodinsky, and Tardos~\cite{k-sets-3d}, and $\Omega(n^2 e^{\sqrt{\log n}})$, due to T\'{o}th~\cite{toth2001point}. 

By a result of Lo, Matou{\v{s}}ek, and Steiger~\cite[Proposition~2]{LMS}, a Ham-Sandwich cut of $n$ points in $\bbR^3$ can be computed in time $O^*(h_2(n)) = O^*(n^{4/3})$ (see also \cite{halperin2017arrangements}), where the $O^*(\cdot)$-notation suppresses polylogarithmic factors. 
However, computing eight-partitions in $\bbR^3$ appears to be significantly more difficult. Unlike the planar four-partition problem, there is no known way of reducing it to the computation of a Ham-Sandwich cut. In particular, given two planes $H_1$ and $H_2$ that four-partition a finite point set~$P$ in~$\bbR^3$ (in the sense that every one of the four open quadrants determined by $H_1$ and $H_2$ contains at most $|P|/4$ points), there generally need not exist a third plane $H_3$ such that $H_1,H_2,H_3$ form an eight-partition.

We note that, for fixed dimension $d \geq 3$, the best known algorithm for computing Ham-Sandwich cuts in~$\bbR^d$ runs in time $O(n^{d - 1 - \alpha_d})$ where $\alpha_d > 0$ is a constant depending only on~$d$~\cite{LMS}. When the dimension is part of the input, a decision variant of the problem becomes computationally hard, see, e.g., \cite{knauer2011computational}.

A brute-force algorithm that checks every triple of halving planes solves Problem \ref{prb:8-partition} in time comparable to $O(h_3(n)^3) = O(n^{15/2})$. Yao et al.~\cite{YDEP} and Edelsbrunner \cite{edelsbrunner1986edge} gave a $O(n^6)$-time algorithm that computes an eight-partition (with a prescribed normal direction for one of the planes, as in Theorem~\ref{thm:hadwiger}) by an exhaustive search, using the fact that only two planes need to be identified.  Fixing one plane and performing a brute-force search for the remaining two would yield an algorithm with a running time comparable to $O(h_3(n)^2) = O(n^5)$.

Here, we present, to our knowledge, the fastest known algorithm for Problem~\ref{prb:8-partition}.  
Roughly speaking, our algorithm runs in time near-linear in $h_3(n)$ rather than quadratic in it.  Slightly more precisely, our algorithm runs in time near-linear in $nh_2(n)$, which is not known to be $o(h_3(n))$, but for which the best known upper bound is strictly stronger; see Theorem~\ref{thm:alg-detailed} and Fact~\ref{fact:intersectioncomplexity2}:
\begin{theorem}[Algorithm]       
  \label{th:our-algorithm}
  An eight-partition of $n$ points in general position in~$\bbR^3$, with a prescribed normal vector for one of the planes, can be computed in time $O^*(nh_2(n))$, hence $O^*(n^{7/3})$; here, the $O^*(\cdot)$-notation suppresses polylogarithmic factors.
\end{theorem}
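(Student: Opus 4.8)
The plan is to fix the plane $H_1$ with the prescribed normal vector $v$ so that it bisects $P$, and then search efficiently over the pair $(H_2,H_3)$ among the halving planes of the two point sets lying on either side of $H_1$. After translating so that $H_1$ passes through the median of $P$ in direction $v$, write $P = P^+ \cup P^-$ for the points above and below $H_1$, each of size roughly $n/2$. A triple $(H_1,H_2,H_3)$ is an eight-partition precisely when $H_2$ and $H_3$, restricted to each of the half-spaces $H_1^+$ and $H_1^-$, form a four-partition of $P^+$ and of $P^-$ respectively. So the task reduces to: find two planes $H_2,H_3$ that are \emph{simultaneously} four-secting for both $P^+$ and $P^-$. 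Following the topological proof of Theorem~\ref{thm:hadwiger} (or rather its discrete shadow), one knows such a pair exists; the algorithmic content is to locate it without enumerating all $\Theta((h_3(n))^2)$ candidate pairs.

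First I would set up a one-parameter family: for a direction $u$ in the upper half of $\bbS^2$, let $H_2 = H_2(u)$ be \emph{a} plane with normal roughly $u$ that bisects $P$ (there is an interval of such planes; pick the one through the appropriate median point). This plane, together with $H_1$, cuts $P^+$ into two halves $A_1(u), A_2(u)$ and $P^-$ into two halves $B_1(u), B_2(u)$. Now by the three-dimensional Ham-Sandwich theorem applied inside $H_1^+$ there is a plane that bisects $A_1(u)$ and $A_2(u)$; call one such plane $H_3^{+}(u)$. Similarly there is a plane $H_3^{-}(u)$ bisecting $B_1(u)$ and $B_2(u)$ inside $H_1^-$. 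We are done as soon as we find $u$ for which $H_3^+(u)$ and $H_3^-(u)$ can be taken to be the \emph{same} plane; a continuity/degree argument (the content of the existence proof) guarantees such a $u$ exists, and in fact guarantees that an appropriate signed discrepancy function of $u$ changes sign, so we can binary-search on $u$. The key point is that we do not maintain these planes explicitly for all $u$; instead, for each queried $u$ we compute $H_3^+(u)$ and $H_3^-(u)$ by a Ham-Sandwich computation in $\bbR^3$, each costing $O^*(n^{3/2})$ by~\cite{LMS}, and compare them. Since the discrepancy function is piecewise of bounded descriptive complexity as $u$ varies, a parametric-search / explicit-binary-search over the $O^*(n^{5/2})$ combinatorially distinct values of $u$ (the breakpoints are determined by triples of points, but organised via the halving-plane structure they number $O^*(n^{5/2})$) locates the sign change; each probe costs $O^*(n^{3/2})$, for a total of $O^*(n^{5/2} \cdot \operatorname{polylog})$ — but more carefully, binary search over a sorted set of $O^*(n^{5/2})$ events with $O^*(n^{3/2})$ per probe still gives $O^*(n^{5/2})$ after we observe that listing and sorting the events, not the probing, dominates; alternatively one applies Megiddo-style parametric search so that only $O(\log n)$ probes of the $O^*(n^{3/2})$-time sub-algorithm are needed on top of an $O^*(n^{5/2})$ preprocessing to enumerate the critical directions.

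More concretely, the enumeration of critical events should be organised as follows. As $u$ rotates, the median-defining points of $P$, of $A_i(u)$, and of $B_i(u)$ change only when $H_2(u)$ passes through an extra point, i.e., at directions where some point of $P$ is equidistant, in the relevant ordering, from two others — these are exactly the directions dual to halving planes of subsets, and there are $O(h_3(n)) = O^*(n^{5/2})$ of them; between consecutive events the planes $H_3^{\pm}(u)$ and hence the discrepancy are given by a fixed low-degree algebraic formula in $u$, so each interval contributes $O(1)$ additional sign-change candidates solvable in constant time. I would precompute and sort this event list in $O^*(n^{5/2})$ time using the $k$-set machinery of~\cite{k-sets-3d}, then run a binary search, resolving each comparison by one $O^*(n^{3/2})$-time Ham-Sandwich computation in each half-space; the grand total is $O^*(n^{5/2})$.

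The main obstacle I anticipate is making the ``discrepancy function changes sign, hence binary search applies'' step rigorous and monotone enough to search: a priori the existence proof only gives a \emph{zero} of a map $\bbS^2 \to \bbR$ (or of a section of a bundle), not a \emph{monotone} sign change along a single arc, and $H_3^\pm(u)$ are only defined up to the (possible) non-uniqueness of Ham-Sandwich cuts, which can make the discrepancy multivalued or discontinuous. Handling this cleanly — either by choosing canonical representatives of the Ham-Sandwich planes, by perturbing $P$ into sufficiently general position so that all the relevant cuts are unique, or by replacing the naive binary search with a genuine parametric search that tracks the combinatorial type — is where the real work lies; the rest is a fairly direct combination of Theorem~\ref{thm:hadwiger}'s proof with the sub-quadratic Ham-Sandwich algorithm of~\cite{LMS} and the halving-plane bound of~\cite{k-sets-3d}.
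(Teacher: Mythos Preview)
Your setup misidentifies the constraints on $H_2$. If $(H_1,H_2,H_3)$ is an eight-partition then, as the paper's Observation~\ref{obs:mass-8-partition} makes explicit, $H_2$ (and $H_3$) must bisect $P^+$ and $P^-$ \emph{separately}, not merely $P$. If your $H_2(u)$ only bisects $P$, then the two ``halves'' $A_1(u),A_2(u)$ of $P^+$ need not have equal size, and even a plane $H_3$ that simultaneously bisects $A_1,A_2,B_1,B_2$ will leave an octant with $|A_1|/2>n/8$ points. So the zero of your discrepancy function is not, in general, an eight-partition. Relatedly, your ``three-dimensional Ham-Sandwich theorem applied inside $H_1^+$'' is not the Ham-Sandwich theorem: bisecting two point sets in $\bbR^3$ imposes only two constraints on a plane and leaves a one-parameter family, so $H_3^+(u)$ is not a well-defined object to compare with $H_3^-(u)$.

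Once you impose the correct constraint, the space of admissible $H_2$ is one-dimensional---in the dual it is exactly the intersection $L$ of the median level of $\AAA(R)$ with that of $\AAA(B)$---and the same is true of $H_3$. You are then searching for a pair $(p,q)\in L\times L$ at which a \emph{two}-dimensional map $(\funR,\funB)$ vanishes; this is not a one-dimensional sign change, so no form of binary search or parametric search over a single arc applies. The paper's actual algorithmic idea is quite different from what you outline: it computes $L$ explicitly in $O^*(n+m)$ time (Lemma~\ref{lemma:compute-L}), shows that the image under $\pi=(\funR,\funB)$ of a triangular boundary curve in the grid $[m]^2$ has odd winding number about $\bO$ (Lemma~\ref{fact:T}), and then recursively halves the enclosed region, keeping the piece whose boundary still has odd winding number, until it isolates a grid point mapping to $\bO$. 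There is no call to the Ham-Sandwich algorithm of~\cite{LMS} anywhere; the $n^{5/2}$ comes entirely from the complexity $m=O(h_3(n))$ of~$L$ via Fact~\ref{fact:intersectioncomplexity}.
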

Our algorithm can be seen as a constructive version of Hadwiger's proof \cite{H}. We start by bisecting the point set by a plane with a fixed normal direction, which partitions the initial point set into two subsets of ``red'' and ``blue'' points, respectively, of equal size. After that, our algorithm finds two more planes that simultaneously four-partition both the red and the blue points.

It remains an open question whether Theorem~\ref{thm:blag_karasev} or our own Theorem~\ref{th:our-8-partition} can also be used to obtain an efficient algorithm for Problem \ref{prb:8-partition}. It would also be interesting to decide whether there is an algorithm for Problem~\ref{prb:8-partition} with running time $o(nh_2(n))$.
 
\section{The topological result}
\label{sec:topology}

\subsection{Notation and preliminaries}
In what follows, it will often be convenient to assume that the mass distributions we work with have \emph{connected support}, where the support of a mass distribution~$\mu$ is $\supp(\mu) \coloneqq \{x\in \bbR^3 : \mu(B_r(x))> 0 \text{ for every }r>0 \}$ and $B_r(x)$ denotes the ball of radius $r>0$ centered at $x$.

By a standard limit argument (see Lemma~\ref{lem:con_support}), the existence of eight-partitions for mass distributions with connected support implies the existence of eight-partitions for the general case. Hereafter, unless stated otherwise, we assume, without loss of generality, that every mass distribution has connected support.

We denote the \emph{scalar product} of two vectors $x, y \in \bbR^3$ by $\scalar{x}{y}\coloneqq\sum_{i=1}^3 x_iy_i.$
A vector $v \in \bbR^3 \setminus \{ \bO \}$ and a scalar $a \in \bbR$ determine an (affine) plane
\[
  H = H_v(a) := \{ x \in \bbR^3: \scalar{x}{v} = a\},
\]
together with an orientation of $H$ (given by the direction of the normal vector $v$). We denote by $-H := H_{-v}(-a)$ the affine plane with the same equation as $H$ but with opposite orientation. The oriented 
plane $H$ determines two open half-spaces, denoted by
\[
  H^+ := \{ x \in \bbR^3: \scalar{x}{v} > a \} \quad\text{and}\quad H^- := \{ x \in \bbR^3: \scalar{x}{v} < a \}.
\]
More generally, let $\ccH = (H_1, \dots, H_k)$ be an ordered $k$-tuple of (oriented) planes in $\bbR^3$, $k\leq 3$. In what follows, it will be convenient to identify the set $\{+,-\}$ with the group $\bbZ_2$ (where the group operation is multiplication of signs). Elements of $\{+,-\}^k=\bbZ_2^k$ are strings of signs of length $k$, and we will denote by $\bplus = +\dots+$ the identity element of $\bbZ_2^k$.

For $\alpha=(\alpha_1,\dots,\alpha_k) \in \bbZ_2^k= \{+,-\}^k$, we define the \emph{open orthant} determined by $\ccH$ and $\alpha$ as
  $\ccO_\alpha^\ccH := H_1^{\alpha_1} \cap \dots \cap H_k^{\alpha_k}$.
Given a mass distribution $\mu$ in $\bbR^3$, we say that an ordered $k$-tuple $\ccH = (H_1, \dots, H_k)$ of planes ($k \leq 3$) forms a \emph{$2^k$-partition} of $\mu$ if every orthant contains $1/2^k$ of the mass, i.e., $\mu(\ccO_\alpha^{\ccH}) =  \mu(\bbR^3)/2^k$ for every $\alpha \in \{+,-\}^k$. For $k=1,2,3$, this corresponds to the notions of bisecting, four-partitioning, and eight-partitioning $\mu$ as mentioned in the introduction. Analogously, we say that $\ccH$ forms a \emph{$2^k$-partition} of a \emph{finite} point set $P$ in $\bbR^3$ if $\lvert P\cap \ccO^\ccH_\alpha\rvert \leq \frac{\lvert P\rvert }{2^k}$ for all $\alpha$.

We will parameterize oriented planes in $\bbR^3$ by $\bbS^3$, where the north pole $e_4$ and the south pole $-e_4$ map to the plane at infinity with opposite orientations. 
For this we embed $\bbR^3$ into $\bbR^{4}$ via the map $ (x_1,x_2, x_3) \mapsto (x_1,x_2,x_3,1).$ An oriented plane in $\bbR^3$ is mapped to
an oriented affine $2$-dimensional subspace of $\bbR^{4}$ and is extended (uniquely)
to an oriented linear hyperplane. The unit normal vector on the positive side of the linear hyperplane defines a point on the sphere $\bbS^3$. Hence, there is a one-to-one correspondence between points $v$ in $\bbS^3 \setminus \{e_{4}, -e_{4}\}$ and oriented affine planes $H_v$ in $\bbR^3$. The positive side of the plane at infinity is $\bbR^3$ for $v=e_4$ and $\emptyset$ for $v =-e_4$. Hence $H_{-v}^+ = H_v^-$ for every $v$.
Note that planes at infinity cannot arise as solutions to the measure partitioning problem, since they produce empty orthants. Therefore we do not need to worry about the fact that the sphere includes these.

We parameterize triples of planes (called \emph{plane configurations}) in $\bbR^3$ by $(\bbS^3)^3$, and denote by $\ccH_v$ the triple corresponding to $v \in (\bbS^3)^3$.
Given a mass distribution $\mu$ on $\bbR^3$, for each $v \in (\bbS^3)^3$ and $\alpha \in \ztwo^3\setminus\{\bplus\}$, we set
\[ F_\alpha(v, \mu) = \sum_{\beta \in\bbZ_2^3} (-1)^{p(\alpha,\beta)}\mu(\ccO_\beta^{\ccH_v}).
\]
where $p(\alpha, \beta)$ is the number of coordinates where both $\alpha$ and $\beta$ are $-$. 
The functions $F_\alpha$ were also utilized in the proof of Theorem~\ref{thm:hadwiger} in \cite{YDEP}.

As an example, with $\ccH := \ccH_v = (H_1, H_2, H_3)$ and $\alpha = --+ \in \bbZ_2^3 \setminus \{\bplus\}$, we obtain
\begin{align*}
  F_{--+}(\ccH, \mu) & = \sum_{\beta \in\bbZ_2^3} (-1)^{p(\alpha,\beta)}\mu(\ccO_\beta^{\ccH})
  = \sum_{\beta \in\bbZ_2^3:\,p(\alpha, \beta) = 0} \mu(\ccO_\beta^{\ccH}) - \sum_{\beta \in\bbZ_2^3:\,p(\alpha, \beta) = 1} \mu(\ccO_\beta^{\ccH})\\
                     & = \left(\strut\mu(\ccO^{\ccH}_{+++}) + \mu(\ccO^{\ccH}_{++-}) +  \mu(\ccO^{\ccH}_{--+}) + \mu(\ccO^{\ccH}_{---})\right)\\
                     & \quad - \left(\strut\mu(\ccO^{\ccH}_{-++}) + \mu(\ccO^{\ccH}_{-+-}) +  \mu(\ccO^{\ccH}_{+-+}) + \mu(\ccO^{\ccH}_{+--})\right)\\
                     & = \mu(H_1^+ \cap H_2^+) + \mu(H_1^- \cap H_2^-) - \mu(H_1^- \cap H_2^+) - \mu(H_1^+ \cap H_2^-).
\end{align*}
When $\mu$ is clear from context,  we write $F_\alpha(\ccH)$ instead of $F_{\alpha}(\ccH, \mu)$. The definitions of alternating sums for a pair of planes or a single plane are analogous. 

The alternating sums have the following properties which will play an important role in the proof below.
\begin{observation} \label{obs:alt_sums}
  Let $\mu$ be a mass distribution and fix $k=2,3$.
  \begin{enumerate}[(i)]
    \item\label{count:trivial} Let $\alpha \in \ztwo^{k-1}\setminus\{\bplus\}$ and let $\ccH = (H_1, \dots, H_k)$ be a $k$-tuple of planes. Then $F_{+\alpha}(\ccH) = F_{\alpha}((H_2,\dots, H_k))$ (the equivalent statement holds for any other entry of a $k$-tuple $(\alpha_1, \cdots, \alpha_k)$ instead of just for $\alpha_1$).
    \item\label{count:iff} A $k$-tuple $\ccH$ of planes $2^k$-partitions if and only if $F_\alpha(\ccH) = 0$ for every $\alpha \in \ztwo^k\setminus\{\bplus\}$.
  \end{enumerate}
\end{observation}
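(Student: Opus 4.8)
The plan is to derive both parts from two elementary ingredients: the fact that a mass distribution assigns measure zero to every plane, and the standard Fourier/Hadamard structure of the sign coefficients $(-1)^{p(\alpha,\beta)}$ on $\ztwo^k$.

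For the first part, I would expand $F_{+\alpha}(\ccH)=\sum_{\beta\in\ztwo^k}(-1)^{p(+\alpha,\beta)}\mu(\ccO_\beta^{\ccH})$ and split each index as $\beta=\beta_1\beta'$ with $\beta_1\in\{+,-\}$ and $\beta'\in\ztwo^{k-1}$. Since the first coordinate of $+\alpha$ equals $+$, it never contributes to $p$, so $p(+\alpha,\beta)=p(\alpha,\beta')$ depends only on $\beta'$. Performing the inner sum over $\beta_1$ and using $\ccO_{\beta_1\beta'}^{\ccH}=H_1^{\beta_1}\cap\ccO_{\beta'}^{(H_2,\dots,H_k)}$ together with $\mu(H_1)=0$ — so that $\mu(H_1^+\cap S)+\mu(H_1^-\cap S)=\mu(S)$ for every measurable $S$ — collapses the expression to $\sum_{\beta'}(-1)^{p(\alpha,\beta')}\mu(\ccO_{\beta'}^{(H_2,\dots,H_k)})=F_\alpha((H_2,\dots,H_k))$. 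The variant in which the distinguished $+$ occupies any other coordinate follows by relabeling the planes.

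For the second part, I would identify $\{+,-\}$ with $\ztwo=\{0,1\}$ (with $-\leftrightarrow 1$), under which $p(\alpha,\beta)\equiv\sum_i\alpha_i\beta_i=:\langle\alpha,\beta\rangle\pmod 2$, so the functions $\beta\mapsto(-1)^{\langle\alpha,\beta\rangle}$, $\alpha\in\ztwo^k$, are precisely the characters of $\ztwo^k$. Hence the $2^k\times 2^k$ sign matrix $M=\bigl((-1)^{\langle\alpha,\beta\rangle}\bigr)_{\alpha,\beta}$ — the $k$-fold tensor power of $\left(\begin{smallmatrix}1&1\\1&-1\end{smallmatrix}\right)$ — is symmetric with $M^2=2^kI$, so the linear map sending $\bigl(\mu(\ccO_\beta^{\ccH})\bigr)_\beta$ to $\bigl(F_\alpha(\ccH)\bigr)_\alpha$ is invertible. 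Since the $2^k$ open orthants cover $\bbR^3$ up to a set of measure zero, $F_\bplus(\ccH)=\sum_\beta\mu(\ccO_\beta^{\ccH})=\mu(\bbR^3)$. If $F_\alpha(\ccH)=0$ for all $\alpha\neq\bplus$, Fourier inversion $\mu(\ccO_\beta^{\ccH})=2^{-k}\sum_\alpha(-1)^{\langle\alpha,\beta\rangle}F_\alpha(\ccH)$ gives $\mu(\ccO_\beta^{\ccH})=2^{-k}\mu(\bbR^3)$ for every $\beta$, i.e.\ $\ccH$ is a $2^k$-partition; conversely, if $\ccH$ is a $2^k$-partition then $\mu(\ccO_\beta^{\ccH})$ is constant in $\beta$ and $\sum_\beta(-1)^{\langle\alpha,\beta\rangle}=0$ for $\alpha\neq\bplus$ (pair $\beta$ with $\beta+e_i$ for a coordinate $i$ with $\alpha_i=1$), so $F_\alpha(\ccH)=0$.

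I do not expect a genuine obstacle here: once the Hadamard structure of the coefficients is recognized, both parts are mechanical. The only points requiring care are the index bookkeeping $\beta=\beta_1\beta'$ in the first part and remembering to invoke the measure-zero-on-hyperplanes property of mass distributions, so that the open orthants really do carry the full mass $\mu(\bbR^3)$; without that hypothesis $F_\bplus(\ccH)$ need not equal $\mu(\bbR^3)$ and the equivalence could fail.
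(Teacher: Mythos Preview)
Your argument is correct. Part~(\ref{count:trivial}) is essentially identical to the paper's proof: both split $\beta=\beta_1\beta'$, observe that $p(+\alpha,\beta)=p(\alpha,\beta')$, and collapse the sum over $\beta_1$ using $\mu(H_1)=0$.

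For part~(\ref{count:iff}) the two arguments genuinely diverge. The paper proceeds by a short direct induction on $k=1,2,3$: for $k=1$ the single equation $F_-(H)=0$ is literally the bisection condition; for $k=2$ one first applies part~(\ref{count:trivial}) to conclude that each $H_i$ bisects, and then a one-line computation with $\lambda:=\mu(\ccO_{++}^\ccH)$ shows $F_{--}=4\lambda-1$; for $k=3$ the same scheme (all pairs four-partition, then $F_{---}=8\lambda-1$) finishes the job. Your proof instead recognizes the coefficients $(-1)^{p(\alpha,\beta)}$ as the characters of $\ztwo^k$, so that $(F_\alpha)_\alpha$ is the Walsh--Hadamard transform of $(\mu(\ccO_\beta^\ccH))_\beta$, and the equivalence follows from Fourier inversion together with $F_\bplus=\mu(\bbR^3)$. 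The trade-off is clear: the paper's computation is completely elementary and tailored to the small $k$ actually needed, while your argument is cleaner, uniform in $k$, and explains structurally \emph{why} the alternating sums are the right test functions. One minor bookkeeping remark: the paper only introduces $F_\alpha$ for $\alpha\neq\bplus$, so your use of $F_\bplus$ is a (harmless) extension of the definition; it may be worth saying so explicitly.
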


    \begin{proof}
      (\ref{count:trivial}) 
      Since every hyperplane has null measure it follows that, for any $\beta\in \ztwo^{k-1}$
      \[
        \mu(\ccO^{(H_2, \dots, H_k)}_{\beta}) = = \mu(\ccO^{\ccH}_{+\beta})+\mu(\ccO^{\ccH}_{-\beta}).
      \]

      If $\tilde\alpha=+\alpha$ then, from the definition of $F_{\tilde\alpha}$, for any $\beta\in \ztwo^{k-1}$ the two orthants $\ccO^\ccH_{+\beta}$ and $\ccO^\ccH_{-\beta}$ are counted with the same sign in the sum, therefore
      \[
        F_{\tilde\alpha}(\ccH) = \sum_{\substack{\beta\in \ztwo^{k-1}\\ p(\alpha, \beta) = 0}} \left(\mu(\ccO^\ccH_{+\beta}) + \mu(\ccO^\ccH_{-\beta})\right) - \sum_{\substack{\beta\in \ztwo^{k-1}\\ p(\alpha, \beta) = 1}} \left(\mu(\ccO^\ccH_{+\beta}) + \mu(\ccO^\ccH_{-\beta})\right) = F_{\alpha}(\left(H_2, \dots, H_k\right)).      \]

\smallskip

      \noindent
      (\ref{count:iff}) It is clear that, if $\ccH$ is a $2^k$-partition, then all the alternating sums are $0$. We will prove the other implication. 
      
      Suppose first that $k=1$ and that $\ccH = H$. The only alternating sum is $ F_-(H) = \mu(H^+) - \mu(H^-)$ and $F_-(H) = 0$ implies that $H$ bisects $\mu$. 

      Suppose now that $k=2$ and that $\ccH = (H_1, H_2)$. By (\ref{count:trivial}) and the statement for a single plane, $F_{+-}(\ccH) = 0$ and $F_{-+}(\ccH) = 0$ imply that both $H_1$ and $H_2$ bisect. Therefore, if $\lambda \coloneqq \mu(\ccO^\ccH_{++})$, we have that
      \[
        0 = F_{--}(\ccH) = \mu(\ccO^\ccH_{++}) + \mu(\ccO^\ccH_{--}) - \mu(\ccO^\ccH_{-+}) - \mu(\ccO^\ccH_{+-}) = \lambda +\lambda -(\frac12 - \lambda) - (\frac12 -\lambda) = 4\lambda -1;
      \]
      hence $\lambda = \frac14$ as desired.

      Finally, suppose that $k=3$ and that $\ccH = (H_1, H_2, H_3)$. By (\ref{count:trivial}) and the statement for single planes and for pairs of planes, we have that all planes $H_i$ bisect and all pairs $(H_i, H_j)$ four-partition. Therefore, if $\lambda \coloneqq \mu(\ccO^{\ccH}_{+++})$, we have that
      \begin{align*}
        0 = F_{---}(\ccH) &= \mu(\ccO^{\ccH}_{+++}) + \mu(\ccO^{\ccH}_{+--}) +\mu(\ccO^{\ccH}_{-+-}) + \mu(\ccO^{\ccH}_{--+})\\
        & \quad - \mu(\ccO^{\ccH}_{-++}) -\mu(\ccO^{\ccH}_{+-+})-\mu(\ccO^{\ccH}_{++-})-\mu(\ccO^{\ccH}_{---})\\
        &= \lambda + \lambda +\lambda +\lambda -(\frac14 -\lambda ) - (\frac14 -\lambda ) -(\frac14 -\lambda ) -(\frac14 -\lambda ) = 8\lambda -1;
      \end{align*}
      hence $\lambda =\frac18$ as claimed.
    \end{proof}

\subsection{The main topological result}

Our goal is to prove the following result, which is a more precise statement of Theorem~\ref{th:our-8-partition}: 
\begin{theorem}
\label{thm:top-detailed}
  Given a mass distribution $\mu$ and a direction $u \in \bbS^2$, there exists a triple $\ccH=(H_1, H_2, H_3)$ of oriented planes that eight-partition $\mu$ so that the oriented direction of the intersection $H_1\cap H_2$ is $u$.
\end{theorem}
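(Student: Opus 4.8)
\emph{The plan} is to phrase the problem as a zero-finding problem for an equivariant test map built from the alternating sums $F_\alpha$, having first absorbed all of the ``cheap'' partition conditions into the configuration space. First I would reduce to $\mu$ of connected support by Lemma~\ref{lem:con_support}, so that for every direction $w\in\bbS^2$ there is a \emph{unique} plane with normal $w$ bisecting $\mu$ (and similarly one dimension down), and by a further standard perturbation/compactness argument I would assume $\mu$, and its shadow in the horizontal plane, is in general position. After rotating we may take $p=e_3$. A plane contains the direction $p$ exactly when it is vertical, i.e.\ equals $\bar H\times\bbR$ for a line $\bar H$ in the horizontal plane $\Pi\cong\bbR^2$; writing $\bar\mu$ for the pushforward of $\mu$ to $\Pi$, a pair of vertical planes four-sects $\mu$ iff the corresponding pair of lines four-sects $\bar\mu$, and then $H_1\cap H_2$ is automatically a line of direction $p$.

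The configuration space would be $X:=Q\times B$, where $B\cong\bbS^2$ is the space of oriented planes bisecting $\mu$ (parameterized by their normals) and $Q$ is the space of ordered, oriented four-partitions $(\bar H_1,\bar H_2)$ of $\bar\mu$, which by the planar four-partition theorem and general position is a closed $1$-manifold. Putting $H_3\in B$ enforces $F_{++-}=0$, and taking $(H_1,H_2)$ from $Q$ (lifted to vertical planes) enforces $F_{-++}=F_{+-+}=F_{--+}=0$, so by Observation~\ref{obs:alt_sums}(ii) a point of $X$ is an eight-partition of the required type exactly when it is a zero of the remaining alternating sums, assembled into
\[
F'=(F_{-+-},\,F_{+--},\,F_{---})\colon X\longrightarrow\bbR^3.
\]
Here $\dim X=1+2=3$ matches the target, consistent with the ``$9-7=2$'' count in the introduction, the two spare degrees of freedom having been spent making $H_1,H_2$ vertical.

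Next I would exploit the symmetry group $G:=D_8\times\ztwo$, where $D_8$ (the dihedral group of order $8$, the symmetries of an ``oriented cross'' of two lines) acts on $Q$ by swapping $H_1\leftrightarrow H_2$ and by reversing their orientations, while the extra $\ztwo$ reverses the orientation of $H_3\in B$; in general position this action is free. A direct check shows $F'$ is $G$-equivariant for the linear $G$-action on $\bbR^3$ in which reversing $H_3$ acts by $-\Id$, reversing $H_1$ (resp.\ $H_2$) acts by $\operatorname{diag}(-1,1,-1)$ (resp.\ $\operatorname{diag}(1,-1,-1)$), and swapping $H_1\leftrightarrow H_2$ interchanges the first two coordinates --- that is, $\bbR^3$ is (the standard $2$-dimensional representation of $D_8$) $\oplus$ (a character), tensored with the sign of the $\ztwo$. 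It then suffices to prove there is \emph{no} $G$-equivariant map $X\to S(\bbR^3)=\bbS^2$, equivalently that the mod-$2$ Euler class $w_3$ of the bundle $X\times_G\bbR^3$ over the closed $3$-manifold $X/G\cong(Q/D_8)\times(B/\ztwo)\cong\bbS^1\times\mathbb{RP}^2$ is nonzero; being top-dimensional, $w_3$ is the only obstruction to a nowhere-zero section. Pulling $w_3$ back along $X/G\to BG=BD_8\times\mathbb{RP}^\infty$ and noting that $H^{\ge2}(BD_8)$ dies in $H^*(\bbS^1)$, the only surviving term is a degree-$1$ class from $BD_8$ times $c^2$ (with $c\in H^1(\mathbb{RP}^2)$), and a brief bookkeeping identifies it with $\chi_{\mathrm{swap}}(m)\cdot uc^2\in H^3(\bbS^1\times\mathbb{RP}^2)$, where $u$ generates $H^1(\bbS^1)$, $m\in D_8$ is the monodromy of the bundle $Q\to Q/D_8$, and $\chi_{\mathrm{swap}}\colon D_8\to\ztwo$ records whether a group element swaps the two lines.

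The crux --- and the step I expect to be the real obstacle --- is the planar input that $m$ lies in the swapping coset, i.e.\ that going once around the circle $Q/D_8$ of unordered, unoriented four-partitions of $\bar\mu$ interchanges $\bar H_1$ and $\bar H_2$; without the swap symmetry (using only the orientation-reversing $\ztwo^3$) the analogous obstruction vanishes, so this is where the argument is really doing work. To verify it I would use that assigning to a bisecting line $\ell$ of $\bar\mu$ the unique line $\Phi(\ell)$ simultaneously bisecting the two halves of $\bar\mu$ cut off by $\ell$ defines a \emph{fixed-point-free} involution $\Phi$ on the circle $\mathbb{RP}^1$ of bisecting lines of $\bar\mu$ (a half-plane is never bisected by its own boundary line), whose orbits $\{\ell,\Phi(\ell)\}$ are precisely the generic four-partitions, so that $Q/D_8\cong\mathbb{RP}^1/\Phi$ and one loop around it carries $(\ell,\Phi(\ell))$ to $(\Phi(\ell),\ell)$ --- a genuine swap. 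Then $\chi_{\mathrm{swap}}(m)=-1$, so $w_3$ is the nonzero generator, $F'$ vanishes somewhere, and that zero is the desired eight-partition. The remaining work --- the general-position reductions and freeness of the $G$-action, confirming $B$, $Q$, and $X/G$ are the claimed manifolds, and the Euler-class bookkeeping (in particular pinning down $\chi_{\mathrm{swap}}$ and the determinant character of the standard $D_8$-representation) --- is routine.
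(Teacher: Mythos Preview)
Your approach is correct and takes a genuinely different route from the paper's. The paper parameterizes four-partitions of the planar shadow \emph{explicitly} by the angle-bisector direction $v\in\bbS^1$ via Lemma~\ref{lem:2dpartition} (Blagojevi\'c--Karasev, needing only connected support), leaves the third plane unconstrained in $\bbS^3$, and builds a $(\bbZ_4\times\bbZ_2)$-equivariant test map $\bbS^1\times\bbS^3\to\bbR^4$ from all four alternating sums with $\alpha_3=-$; the topological step is then an elementary degree argument (the $\bbZ_4$-generator acts on the target sphere with degree $-1$, forcing each antipodal slice $\bbS^3\to\bbS^3$ to be null-homotopic, contradicting Borsuk--Ulam). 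You instead absorb $F_{++-}=0$ into the configuration space (replacing $\bbS^3$ by the $\bbS^2$ of bisecting planes), treat $Q$ abstractly as a closed $1$-manifold with a free $D_8$-action, and compute a mod-$2$ Euler class over $\bbS^1\times\mathbb{RP}^2$. The two setups are closely related --- the paper's $\bbS^1$ is one $\bbZ_4$-invariant component of your $Q$, and their $\bbZ_4$ is the rotation subgroup of your $D_8$ --- but the trade-offs differ: your route has a tidier dimension count and exploits the full dihedral symmetry, at the cost of heavier topology (Stiefel--Whitney classes in place of degree) and the extra general-position perturbation you flag, which the paper's explicit angle-bisector parameterization sidesteps. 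Your Euler-class bookkeeping is right, incidentally: splitting the bundle as (standard $2$-dimensional $D_8$-representation tensored with the $\bbZ_2$-sign) plus a line, the identity $w_2(E\otimes L)=w_2(E)+w_1(E)w_1(L)+w_1(L)^2$ yields $w_3=\chi_{\mathrm{swap}}(m)\,uc^2$ on the nose, so your $\Phi$-involution computation of the monodromy is exactly what is needed.
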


By Lemma~\ref{lem:con_support}, it is sufficient to prove Theorem~\ref{thm:top-detailed} for mass distributions with connected support. We require the following technical lemma about partitioning a mass distribution on $\bbR^2$, due to Blagojevi\'c and Karasev~\cite{blag_karasev_2016}. For completeness, the proof is given in Appendix~\ref{app:planepartitions}.
\begin{restatable}[Four-partitioning a mass distribution in $\bbR^2$~\cite{blag_karasev_2016}]{lemma}{blagkarasev}
\label{lem:2dpartition}
    Let $\mu^\#$ be a mass distribution (with connected support) on $\bbR^2$ and $v\in \bbS^1$. 
    Then there exists a pair $(\ell_1, \ell_2)$ of lines in $\bbR^2$ 
    that four-partitions $\mu^\#$ and such that $v$ bisects the angle between $\ell_1$ and~$\ell_2$.

    Moreover, if we orient $\ell_1$ and $\ell_2$ so that $\ell_1$ is in the first direction 
    clockwise from $v$, and $\ell_2$ is in the first direction counterclockwise, the oriented pair is unique and the lines depend continuously on $v$.
\end{restatable}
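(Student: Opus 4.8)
The plan is to prove Lemma~\ref{lem:2dpartition} by the standard ``compose two Ham--Sandwich cuts'' argument, but with the first cut constrained to respect the prescribed bisecting direction $v$, and then to derive continuity and uniqueness by a careful bookkeeping of orientations. Fix $v\in\bbS^1$ and, for an angle parameter $\theta\in[0,\pi)$, let $\ell_1(\theta)$ be the oriented line making angle $+\theta$ clockwise from $v$ and $\ell_2(\theta)$ the oriented line making angle $\theta$ counterclockwise from $v$, so that $v$ bisects the (smaller) angle between them by construction. For each direction I insist that the line actually \emph{bisects} $\mu^\#$: since $\mu^\#$ has connected support, for every direction there is a \emph{unique} bisecting line with that normal, it depends continuously on the direction, and these two bisecting lines intersect in a single point $c(\theta)$ (for $\theta\neq 0$; at $\theta=0$ the two lines coincide and we treat that degenerate case separately or perturb). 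Thus $(\ell_1(\theta),\ell_2(\theta))$ is a pair of bisecting lines whose angle is bisected by $v$, and the only thing left to achieve is that the four quadrants they cut out have equal mass, which by Observation~\ref{obs:alt_sums}(ii) (in its two-dimensional incarnation) is equivalent to the single equation $F_{--}(\ell_1(\theta),\ell_2(\theta))=0$.

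The core step is an intermediate-value argument on the function $\theta\mapsto F_{--}(\ell_1(\theta),\ell_2(\theta))$. When $\theta$ is very small, $\ell_1$ and $\ell_2$ are nearly parallel: the quadrant $\ccO_{++}$ is a thin wedge whose mass tends to $0$ (or to the full mass, depending on which way the wedge opens), whereas the two ``large'' quadrants each carry nearly half the mass; hence $F_{--}$ approaches $-\tfrac12\mu^\#(\bbR^2)$ (sign to be pinned down from the orientation conventions). As $\theta$ increases toward its maximal value — either $\pi/2$, where $\ell_1\perp\ell_2$, or a full half-turn, depending on how the parameterization is set up — the roles of the ``thin'' and ``large'' quadrants swap, so $F_{--}$ approaches $+\tfrac12\mu^\#(\bbR^2)$. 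Since $F_{--}$ depends continuously on $\theta$ (each bisecting line moves continuously, the intersection point moves continuously as long as the lines are not parallel, and $\mu^\#$ assigns zero mass to lines), the intermediate value theorem produces some $\theta^\ast$ with $F_{--}=0$, and the corresponding pair four-partitions $\mu^\#$. The only subtlety is making sure the endpoints of the interval are not themselves the degenerate configuration; connectedness of the support guarantees that for any non-degenerate pair of crossing bisecting lines the ``thin wedge'' genuinely captures an extreme fraction of the mass only in the limit, so the sign change is strict and occurs at an interior $\theta$.

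For the ``moreover'' part, uniqueness and continuity in $v$ should follow from a monotonicity observation: with $\ell_1$ oriented as the first line clockwise from $v$ and $\ell_2$ the first counterclockwise, the quantity $F_{--}$ is in fact \emph{strictly monotone} in $\theta$ on the relevant interval (again using connected support, so that rotating the wedge strictly transfers mass across one of the lines), which forces $\theta^\ast$ to be unique. Uniqueness of each bisecting line (given its direction) then makes the whole oriented pair $(\ell_1,\ell_2)$ a single-valued function of $v$. Continuity in $v$ is then a standard consequence: the map $v\mapsto\theta^\ast(v)$ is the zero set of a continuous function that is strictly monotone in $\theta$ and jointly continuous in $(\theta,v)$, so by the implicit-function-type argument (or simply by the fact that a strictly monotone continuous family of functions has continuously varying roots) it is continuous, and composing with the continuous dependence of bisecting lines on direction finishes the proof.

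I expect the main obstacle to be the careful handling of orientations and of the degenerate endpoint $\theta=0$ (where $\ell_1$ and $\ell_2$ collapse to a single line and ``the intersection point'' is undefined): one has to phrase the endpoint behaviour of $F_{--}$ as a genuine one-sided limit and check that the sign there is as claimed, uniformly in $v$, so that the intermediate value argument and the subsequent continuity claim both go through. Pinning down exactly which wedge is ``thin'' and with which sign it enters $F_{--}$, consistently with the clockwise/counterclockwise labelling of $\ell_1$ and $\ell_2$, is the bookkeeping that makes the uniqueness statement (not just existence) come out cleanly.
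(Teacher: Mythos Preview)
Your existence argument is essentially the paper's: parameterize by the half-angle, take the unique bisecting line in each of the two symmetric directions, and run the intermediate value theorem on the single remaining quantity (the paper phrases it as $x-y$ rather than $F_{--}$, but these are the same up to a constant).

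The gap is in uniqueness. You assert that $\theta\mapsto F_{--}(\ell_1(\theta),\ell_2(\theta))$ is \emph{strictly monotone}, justified only by ``rotating the wedge strictly transfers mass across one of the lines.'' This ignores that as $\theta$ varies, each bisecting line not only rotates but also \emph{translates} (its position is dictated by the bisection constraint, which depends on the measure), so the apex of the wedge moves as well. There is no evident reason why the net effect on, say, the north quadrant's mass should be monotone; your one-line justification does not address this, and your continuity argument then rests entirely on this unproved monotonicity.

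The paper avoids monotonicity altogether. For uniqueness it argues directly: if two four-partitions existed at angles $\alpha'<\alpha$, let $p$ be the intersection point of the $\alpha$-lines and do a case analysis on which $\alpha'$-quadrant contains $p$; in every case one open $\alpha$-quadrant is strictly contained in the corresponding $\alpha'$-quadrant, so by connected support their measures differ, contradicting that both are four-partitions. For continuity the paper uses the standard fact that a single-valued map into a compact space with closed graph is continuous. Both steps sidestep the issue you left open.
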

\begin{proof}[Proof of Theorem~\ref{thm:top-detailed}]
  Without loss of generality, let $u = (0,0,1)$. Our proof proceeds in two steps. In the first step, we construct a map $\Phi \colon \bbS^1\times \bbS^3 \rightarrow \bbR^4$ whose zeros codify eight-partitions of $\mu$; then we prove that $\Phi$ is equivariant with respect to a suitable choice of actions of $G := \bbZ_4\times \bbZ_2$ on the two spaces.
  In the second step we show that any continuous $G$-equivariant map $\Psi\colon \bbS^1\times \bbS^3 \rightarrow \bbR^4$ has to have a zero.
\medskip

\noindent
{\bf Step 1: }
  The key step in constructing the map $\Phi$ is to show that we can parameterize pairs of planes that have intersection direction $u$ and four-partition $\mu$, by a vector in $\bbS^1$.
  
  We project $\mu$ onto $u^\bot$, the plane orthogonal to $u$, to obtain a mass distribution $\mu^\#$ on $\bbR^2$. Specifically, identifying the plane with $\bbR^2$, let
  $A \subseteq \bbR^2$ and set $A\times\bbR $ to be the cylinder over $A$ in the $u$-direction. Then $\mu^\#(A) = \mu(A \times \bbR)$.
  
  Let $v\in \bbS^1\subseteq \bbR^2$. By Lemma~\ref{lem:2dpartition}, there are two oriented lines $\ell_1(v) $ and $\ell_2(v)$ (that we can interpret as points $(\ell^i_1, \ell^i_2, a^i)\in \bbS^2$) in the plane $u^\bot$ such that $v$ bisects the angle between the two and $\ell_1, \ell_2$ four-partition the projected measure~$\mu^\#$.
   Define $H_i(v) \coloneqq(\ell_1^i(v),\ell^i_2(v), 0, a^i(v))\in \bbS^3$ to be $\ell_i(v)\times \bbR$ the (oriented) span of $\ell_i(v)$ and $u$; the two planes now four-partition $\mu$ and have the desired intersection direction.

  Now let $g_1$ be a generator of $\bbZ_4 \times \{\bplus\}\subseteq G$ and define its action on $\bbS^1$ by a counterclockwise rotation by $\frac{\pi}{2}$. We use $g_1 \cdot v$ to denote the action of $g_1$ on $v$. Then, by the uniqueness in Lemma~\ref{lem:2dpartition}, we have that (see Figure~\ref{fig:2d_example}):
  \begin{equation}\label{eq:order4action}
    \vec{\ell}_1(g_1\cdot v) = \vec{\ell}_2(v) \quad\text{and}\quad
    \vec{\ell}_2(g_1\cdot v) = -\vec{\ell}_1(v).
  \end{equation}

  \begin{figure}
    \centering
    \includegraphics[width=.75\textwidth]{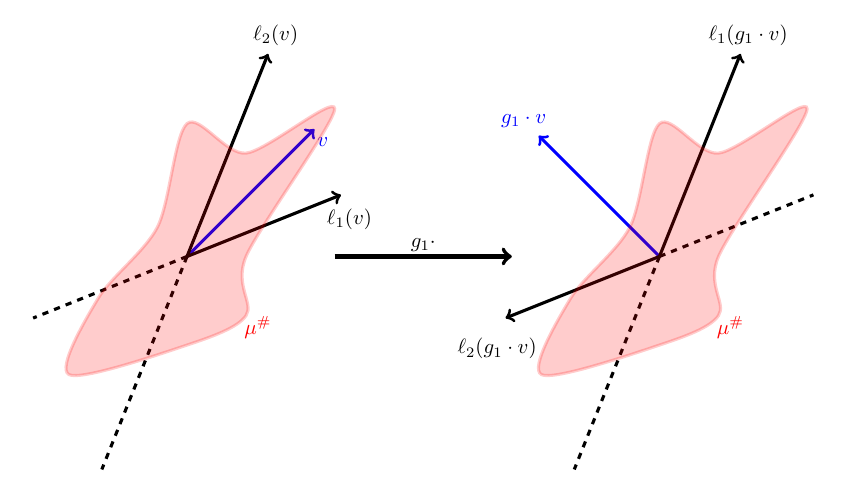}
    \caption{Example of the action of $g_1$.
    }
    \label{fig:2d_example}
  \end{figure}

  Using this construction, we can define a function $\bbS^1\rightarrow \bbS^3\times \bbS^3$ by $v \mapsto (H_1(v), H_2(v))$. It follows from eq.~\eqref{eq:order4action} that $g_1\cdot v$ is mapped to $(H_2(v), -H_1(v))$. Therefore, if we fix the corresponding action\footnote{Formally, for any $(x, y)\in \bbS^3\times \bbS^3$ the generator $g_1$ of $\bbZ_4\times \{+\}\subseteq G$ acts by $g_1\cdot(x, y) = (y, -x)$.} of $\bbZ_4$ on $\bbS^3\times \bbS^3$, the map is $\bbZ_4$-equivariant. 

  The group $\{e\}\times \ztwo$ acts by antipodality on $\bbS^3$; therefore, if $G$ acts on $\left(\bbS^3\times \bbS^3\right)\times \bbS^3$ component-wise, the map $\Phi \colon \bbS^1\times \bbS^3\to \left(\bbS^3\times \bbS^3\right)\times \bbS^3$  defined as $\Phi(v, w) \coloneqq (H_1(v), H_2(v), H_w)$ is $G$-equivariant. 

  By construction, the first two planes are always a four-partition of the mass distribution, therefore by Observation~\ref{obs:alt_sums}, a configuration $\Phi(v, w)$ is an eight-partition if and only if the four alternating sums with $\alpha_3=-$ (i.e., $\alpha = ++-$, $-+-$, $+--$ and $---$) are $0$.

  To compute the action of $G$ on the alternating sums, it is enough to specify what happens on $g_1$ (a generator of $\bbZ_4\times \{e\}$) and $g_2$ (the generator of $\{e\}\times \ztwo$). Recalling that $g_1 \cdot \Phi(v, w) = (H_2(v), -H_1(v), H_w)$ and applying Observation~\ref{obs:alt_sums}(\ref{count:trivial}), we obtain
  \begin{align*}
  & ~~~~ F_{+--}(g_1\cdot \Phi(v,w))  = F_{+--}((H_2(v), -H_1(v), H_w)) = F_{--}((-H_1(v), H_w)) \\
  & = \mu(-H_1(v)^+ \cap H_w^+) + \mu(-H_1(v)^- \cap H_w^-) - \mu(-H_1(v)^- \cap H_w^+) - \mu(-H_1(v)^+ \cap H_w^-)\\
  & = \mu(H_1(v)^- \cap H_w^+) + \mu(H_1(v)^+ \cap H_w^-) - \mu(H_1(v)^+ \cap H_w^+) - \mu(H_1(v)^- \cap H_w^-)\\
  & = - F_{-+-}(\Phi(v,w))
  \end{align*}
  Similar computations imply that, if we act with $g_1$, we obtain
  \begin{align*}
    F_{++-}(g_1\cdot \Phi(v,w)) &=  F_{++-}(\Phi(v,w)),\\
    F_{-+-}(g_1\cdot \Phi(v,w)) &=  F_{+--}(\Phi(v,w)), \text{ and}\\
    F_{---}(g_1\cdot \Phi(v,w)) &=  - F_{---}(\Phi(v,w)),
 \end{align*}
while acting with $g_2$ produces
 \begin{align*}
    F_{++-}(g_2\cdot \Phi(v,w)) &=  - F_{++-}(\Phi(v,w)),\\
    F_{+--}(g_2\cdot \Phi(v,w)) &=  - F_{-+-}(\Phi(v,w)),\\
    F_{-+-}(g_2\cdot \Phi(v,w)) &=  - F_{+--}(\Phi(v,w)), \text{ and}\\
    F_{---}(g_2\cdot \Phi(v,w)) &=  - F_{---}(\Phi(v,w)),
  \end{align*}
  for every $\left(v, w\right)\in \bbS^1\times \bbS^3$.

  Finally, we can choose a linear $G$-action on $\bbR^4$ that is consistent with the previous equations. In particular, if we define
  \[
    g_1\cdot (x, y, z, u) = (x, -z, y, -u) \quad\text{and}\quad
    g_2 \cdot (x, y, z, u) = (-x, -y, -z, -u),
  \]
  then the map $\Psi \colon \bbS^1\times \bbS^3 \rightarrow \bbR^4$, given by
  \[
    (v, w) \mapsto \left(F_{++-}(v,w), F_{+--}(v,w), F_{-+-}(v,w), F_{---}(v,w)\right)
  \]
  is $G$-equivariant, i.e., $\Psi(g_1\cdot v, w) = g_1\cdot \Psi(v, w)$ and $\Psi(v, -w) = -\Psi(v, w)$. By Observation~\ref{obs:alt_sums}, the zeros of $\Psi$ are exactly the configurations of planes that eight-partition the measure and have the desired intersection property.

\medskip

\noindent
  {\bf Step 2: }
  Suppose now, for a contradiction, that $\Psi$ does not have a zero. This means that it is possible to define a $G$-equivariant map $\overline{\Psi} \colon \bbS^1\times \bbS^3 \rightarrow \bbS^3$ by $ \overline{\Psi}(v, w) \coloneqq \frac{\Psi(v, w)}{\norm{\Psi(v, w)}}$.

  Denote by $\Psi_a$, for $a\in \bbS^1$, the map $\Psi_a\colon \bbS^3 \rightarrow \bbS^3$, $\Psi_a(w) = \overline{\Psi}(a, w)$; this function has two key properties:
  \begin{enumerate}[(i)]
  \item for any $a \in \bbS^1$, $\Psi_a$ is antipodal;
  \item for any $a, b \in \bbS^1$, let $\gamma: [0,1]\to \bbS^1$ be a parametrization of the arc between $\gamma(0) = a$ and $\gamma(1)=b$. Then $\Psi_a$ and $\Psi_b$ are homotopic via $H: \bbS^3\times [0,1]\to \bbS^3$ with $H(t, x) = \Psi_{\gamma(t)}(x)$.
  \end{enumerate}

   For any $n\geq 1$, the group of orthogonal matrices $O(n)$ contains exactly two connected components, distinguished by the sign of the determinant. Since the map $g_1:S^3\to S^3$ is induced by a matrix with determinant $-1$, it is homotopic to any other orthogonal linear map with determinant $-1$. In particular, it is homotopic to the reflection $r$ along the last coordinate and, thus, $\deg(g_1) = \deg(r) = -1$.
   Combining everything together we have:
  \[
    \deg(\Psi_a) = \deg(\Psi_{g_1\cdot a})= \deg(g_1\cdot \Psi_a) = \deg(g_1)\deg( \Psi_a) = - \deg(\Psi_a).
  \]
  Hence $\deg(\Psi_a) = 0$, contradicting the Borsuk-Ulam theorem (see \cite[Theorem 2.1.1]{matousek_using_2008}).
\end{proof}

Theorem~\ref{thm:top-detailed}, along with Lemma~\ref{lem:limiting_lemma}, immediately implies the following.
\begin{theorem}
    Let $P \subseteq \bbR^3$ be a finite set of points and $p\in \bbS^2$ a fixed direction. Then there exists a triple $\ccH = (H_1, H_2, H_3)$ of oriented planes that eight-partitions $P$, so that the oriented direction of the intersection $H_1\cap H_2$ is $p$.
\end{theorem}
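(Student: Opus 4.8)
The plan is to derive the finite point-set statement from the continuous Theorem~\ref{thm:top-detailed} by a standard limiting argument, exactly as advertised by the reference to Lemma~\ref{lem:limiting_lemma}. First I would, given the finite set $P = \{x_1, \dots, x_n\} \subseteq \bbR^3$, replace each point $x_i$ by a tiny ball $B_{\epsilon}(x_i)$ carrying uniform mass $1/n$, and let $\mu_{\epsilon}$ be the resulting mass distribution (for $\epsilon$ small enough that the balls are pairwise disjoint). Each $\mu_{\epsilon}$ has connected support only after a harmless modification (e.g. connecting the balls by thin tubes of negligible mass, or simply invoking Lemma~\ref{lem:con_support} so that connectedness is not needed); in any case Theorem~\ref{thm:top-detailed} applies and yields a triple $\ccH^{(\epsilon)} = (H_1^{(\epsilon)}, H_2^{(\epsilon)}, H_3^{(\epsilon)})$ of oriented planes that eight-partitions $\mu_{\epsilon}$ and whose intersection line $H_1^{(\epsilon)} \cap H_2^{(\epsilon)}$ has oriented direction $p$.

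Next I would pass to a convergent subsequence. Using the parameterization of oriented planes by $\bbS^3$ from the excerpt, each $\ccH^{(\epsilon)}$ is a point of the compact space $(\bbS^3)^3$; letting $\epsilon \to 0$ along a suitable sequence, we obtain a limit triple $\ccH = (H_1, H_2, H_3)$. One must check three things in the limit: (a) no $H_i$ is the plane at infinity — this follows because a plane at infinity produces an empty orthant, which for $\epsilon$ small would force more than $n/8$ points of $P$ (indeed all of them) into the complementary region, contradicting that $\ccH^{(\epsilon)}$ is an eight-partition of $\mu_{\epsilon}$; more carefully, if some $H_i^{(\epsilon)}$ escaped to infinity, the mass on one side would tend to $0$, violating the bisection property that an eight-partition entails (Observation~\ref{obs:alt_sums}\eqref{count:iff} together with part~\eqref{count:trivial}); (b) the oriented direction of $H_1 \cap H_2$ is still $p$, by continuity of the intersection-direction map on the open set of pairs of planes that are not parallel and not at infinity — and $H_1, H_2$ cannot become parallel in the limit, since a pair of parallel planes also produces empty orthants and hence cannot four-sect $\mu_{\epsilon}$ for small $\epsilon$; (c) the eight-partition property transfers to $P$.

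For part~(c), the key point is the usual semicontinuity estimate: for any fixed open orthant $\ccO_{\alpha}^{\ccH}$ of the limit configuration and any point $x_i \in P \cap \ccO_{\alpha}^{\ccH}$, the point $x_i$ lies in the interior, so for $\epsilon$ small enough in the subsequence the whole ball $B_{\epsilon}(x_i)$ lies inside $\ccO_{\alpha}^{\ccH^{(\epsilon)}}$; hence $\lvert P \cap \ccO_{\alpha}^{\ccH}\rvert \cdot \tfrac1n \le \mu_{\epsilon}(\ccO_{\alpha}^{\ccH^{(\epsilon)}}) = \tfrac18$, which gives $\lvert P \cap \ccO_{\alpha}^{\ccH}\rvert \le n/8$, i.e.\ the definition of an eight-partition of a finite point set. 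This is precisely the content I would package as Lemma~\ref{lem:limiting_lemma} and simply cite; since that lemma is stated earlier in the paper, the proof reduces to the single sentence ``Apply Theorem~\ref{thm:top-detailed} to a mass distribution approximating $P$ and invoke Lemma~\ref{lem:limiting_lemma}.''

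The main obstacle — the only place where genuine care is needed rather than routine bookkeeping — is ruling out the degenerate limits in part~(b): making sure that in the limit $\epsilon \to 0$ the two planes $H_1, H_2$ neither become parallel nor escape to the plane at infinity, so that the prescribed intersection direction $p$ survives the limit. Everything else (compactness, subsequence extraction, the one-sided inequality for orthant counts) is standard, and indeed this is exactly why the authors relegate it to Lemma~\ref{lem:limiting_lemma}; I would likewise keep the argument short and lean on that lemma.
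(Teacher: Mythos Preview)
Your proposal is correct and is exactly the paper's approach: the paper's entire proof is the single sentence ``Theorem~\ref{thm:top-detailed}, along with Lemma~\ref{lem:limiting_lemma}, immediately implies the following.'' You in fact go further than the paper by flagging and sketching a resolution of the degenerate-limit issue in part~(b) (parallel or coincident $H_1,H_2$ in the limit), a subtlety the paper's one-line citation simply glosses over.
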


\section{Levels in arrangements of planes}
\label{sec:levels}

Let $P \subseteq \bbR^3$ be a set of $n$ points in general position. Specifically, we assume that the points in $P$ satisfy the following: no four in a plane, no three in a vertical plane, and no two in a horizontal plane.
Recall that a \emph{halving plane} for a point set in $\bbR^3$ in general position is a plane that passes through three of the points and divides the remaining points as equally as possible; $h_3(n)$ is the maximum number of halving planes for an $n$-point set in $\bbR^3$.  

The duality transform maps points in $\bbR^3$ to planes in $\bbR^3$ and vice versa. Specifically, the point $p = (p_1, p_2, p_3) \in \bbR^3$ is mapped to the non-vertical plane $p^* \colon z = p_1x + p_2y - p_3$ in~$\bbR^3$, and vice versa. See \cite[Chapter 25.2]{har2011geometric} for standard properties of the duality transform.

Let $H$ be a set of planes in $\bbR^3$ in general position. Specifically, we assume that the planes are non-vertical, every triple of planes in $H$ meets in a unique point, and no point in $\bbR^3$ is incident with more than three of the planes.
The planes in $H$ partition $\bbR^3$ into a complex of convex cells, called the \emph{arrangement} of $H$ and denoted by $\AAA(H)$. The \emph{$k$-level} in $\AAA(H)$ is defined as the closure of the set of all points which lie on a unique plane of the arrangement and have exactly $k-1$ planes below it. Note that the $k$-level is a piecewise linear surface in~$\bbR^3$ whose faces are contained in planes of $H$. The \emph{complexity} of a level is the total number of vertices, edges and faces contained in the level.
When $k = \floor{(|H|+1)/2}$, the $k$-level is called the \emph{median level} of the arrangement.

Duality, $k$-levels, and complexity of levels are defined analogously in $\bbR^2$. 
Let $g_2(n)$ (resp., $g_3(n)$) be the maximum complexity of any $k$-level in an arrangement of $n$ planes in~$\bbR^2$ (resp., $\bbR^3$).
It is well-known that $h_2(n) = \Theta(g_2(n))$ and that $h_3(n)=\Theta(g_3(n))$
(see~\cite[Theorem~3]{andrzejak1998results}, \cite[Theorem~3.3]{E}).

The main object of interest in bounding the complexity of our algorithm is the intersection of median levels of two arrangements of disjoint sets of planes. We first show that the complexity of this is proportional to $h_3(n)$, in the worst case.

\begin{fact}
  \label{fact:intersectioncomplexity}
  Let $\ell(n)$ be the maximum complexity of the intersection of a level in the arrangement $\AAA(R)$ and a level in the arrangement $\AAA(B)$ of disjoint sets~$R$ and~$B$ of planes in~$\bbR^3$, with $H\coloneqq A\cup B$ in general position and $n\coloneqq\abs{H}$.  
  Then $\ell(n)=\Theta(h_3(n))$.

\end{fact}
\begin{proof}
  Let $L$ be the intersection of level~$k$ of~$\AAA(R)$ and level~$k'$ of~$\AAA(B)$. As the planes of $H$ are in general position, $L$ is (a disjoint union of) a collection of (open) edges and vertices in $\AAA(R\cup B)$. In fact, $L$ is a collection of cycles and bi-infinite curves so its complexity is asymptotically determined by the number of its edges.

  A point on an edge of $L$ has the property that it lies on one plane of $R$ and one plane of $B$, and has exactly $k+k'-2$ planes of $R \cup B$ below it. 
  Hence $L$ is contained in the $(k+k'-1)$-level of $\AAA(R \cup B)$.  It follows that the complexity of $L$ is bounded above by the complexity of a level in an arrangement of $n$ planes, implying $\ell(n) \leq g_3(n) = O(h_3(n))$. This proves the upper bound.

  For the lower bound, suppose first that $n$ is of the form $4k+6$ and let $P \subseteq \bbR^3$ be a set $n/2=2k+3$ points in general position achieving the maximum number $h_3(2k+3)$ of halving planes. Let $P'$ be a copy of $P$ translated by a sufficiently small distance $\epsilon > 0$ in a generic direction. We then slightly perturb the points of $P \cup P'$ to ensure general position. For a point $p \in P$, we denote by $p'$ its copy (or \emph{twin}) in $P'$. Consider the sets $R^*$ (red) and $B^*$ (blue) of $k$ points each obtained as follows: for each pair $(p, p')$,
  we assign one point to $R^*$ and the other to $B^*$ uniformly and independently at random.

  Recall that a halving plane is defined by a triple of points and divides the remaining points as evenly as possible. Let $\pi_1$ be a halving plane of~$P$ defined by a triple $(a,b,c)$,
  so that $\pi_1$ has exactly $k$ points of~$P$ on each side.
  Consider the set $S \coloneqq 
  \{a,b,c,a',b',c'\}$. We claim that, with constant probability, there exists a plane $\pi_2$ that passes through a red point and a blue point of $S$ that are not twins, and has precisely one red and one blue point of $S$ on each side.
  
  As our perturbation $\epsilon$ is arbitrarily small, $\pi_2$ partitions the remaining points of $R^* \cup B^*$ in the same manner as $\pi_1$ partitions $P$. Let $R$ and $B$ be the sets of planes dual to points in~$R^*$ and~$B^*$, respectively.
  In the dual, the plane $\pi_2$ corresponds to a point $\pi_2^*$ on an edge of the arrangement $\AAA(R \cup B)$
  that lies on level~$k$ of~$\AAA(R)$ and of~$\AAA(B)$. 
  Therefore it lies on the curve~$L$ of intersection of two $k$-levels and this curve contains all the points $\pi_2^*$ (as we range over halving planes $\pi_1$ of~$P$). 
  
  Since different halving planes correspond to partitioning $P$ in different ways, the planes $\pi_2$ partition $P \cup P'$ in different ways and, hence, points $\pi_2^*$ arising from different halving planes of $P$ lie on different edges of  $\AAA(R \cup B)$.
  By construction, the number of edges of $L$ is $\Omega(h_3(n))$ in expectation, completing the lower bound proof for $n = 4k + 6$.

  Finally, for a general $n$, we write $n=4k+6+c$, for $0\leq c \leq 5$, we apply the above construction to $n - c$ and, at the end, add $c$ planes in general position to the set $R$ lying above all vertices of $\AAA(R \cup B)$.  It is easily checked that this addition does not reduce the size of~$L$. 
\end{proof}

Note, however, that in our application, the sets of points $R^*$ and $B^*$ are strictly separated, which is not the case in the above lower-bound construction.  We show that, with this additional constraint, the complexity of $L$ is $O(nh_2(n))$ and that there exist pairs of separated point sets $R^*$ and $B^*$ that achieve this bound.

\begin{fact}
\label{fact:intersectioncomplexity2}
  Let $\ell^*(n)$ be the maximum complexity of the intersection of two levels in $\AAA(R)$ and $\AAA(B)$ as in Fact~\ref{fact:intersectioncomplexity}, with the additional constraint that  the dual sets $R^*$ and $B^*$ are strictly separated by a plane.  Then $\ell^*(n)=\Theta(n h_2(n))$.
\end{fact}

\begin{proof}

\emph{Upper bound}:  
Let $L$ be the intersection of level~$k$ in $\AAA(R)$ and level~$k'$ in $\AAA(B)$.
Without loss of generality, we consider vertices of $L$ that are the intersection of one plane in $R$ and two planes in $B$. Such a vertex has $k - 1$ planes of $R$ and either $k-1$ or $k-2$ planes of $B$ below it. 

We work with the dual point sets $R^*$ (red) and $B^*$ (blue), strictly separated by the plane~$\pi$. Here, a vertex as above corresponds to a plane passing through one red point and two blue points, with $k - 1$ red points and $k-1$ or $k-2$ blue points above it. We count all such planes $\sigma_a$ passing through a fixed red point $a \in R^*$. 

Let $B'$ be a set of points obtained by radially projecting points of $B^*$ onto $\pi$ with center~$a$. Let $\ell_a$ be the line $\pi \cap \sigma_a$. Since $R^*$ and $B^*$ are separated by $\pi$, $\ell_a$ has $k-1$ or $k-2$ points of $B'$ on one side of it. Hence, the dual of $\ell_a$ is a vertex on level $k$ (or its complement) in the planar arrangement of the dual of $B'$ in $\pi$. Therefore, there are at most $2g_2(|B'|)$ choices for the plane $\sigma_a$, and at most $|R|\cdot2g_2(|B|) = O(n h_2(n))$ such planes overall.

Planes passing through exactly one blue point and two red points are handled symmetrically, completing the proof of the upper bound.

  \smallskip

  \noindent
  \emph{Lower bound}: Once again, it is sufficient to make the argument for $n$ of the form $4k+3$ for a positive integer $k$; the general case is handled as in the lower bound proof of Fact~\ref{fact:intersectioncomplexity}.  Consider a set $R^*$ of $2k + 2$ points realizing $h_2(2k+2)$ lying in the $xy$-plane, scaled to fit in the rectangle $(0,1)\times(0,\epsilon)$, for a sufficiently small $\epsilon>0$.

  Let $B^*$ be a set of $2k+1$  points equally spaced on a unit circle in the plane $x=0$, centered at the origin, so that no point lies in the $xy$-plane. Note that $R^*$ and $B^*$ are separated by the plane $x=\delta$, for a sufficiently small $\delta>0$.

  By making $\epsilon$ small enough, we can ensure that any halving line of $R^*$ passes arbitrarily close to the origin.  In particular, any pair of a halving line of $R^*$ and a point of $B^*$ define a halving plane that passes through two points of $R^*$ and one of $B^*$, and has exactly $2k$ points of $R^* \cup B^*$ on each side.  The number of such halving planes is $(2k+1)h_2(2k+2)=\Omega(n h_2(n))$.
  Finally, the points of $R^* \cup B^*$ can be perturbed to satisfy the general position assumption without reducing this number.

  So we have constructed two separated sets of points $R^*$ and $B^*$ with the property that there are $\Omega(n h_2(n))$ halving planes spanned by three of the points that simultaneously bisect both sets. 
  The lower bound follows by considering the dual sets $R$ and $B$.
\end{proof}

\section{The algorithm}

We can deduce the existence of eight-partitions of a finite point set $P \subset \bbR^3$ of a certain advantageous form from Theorem~\ref{thm:hadwiger}. 
  
\begin{observation}
  \label{obs:mass-8-partition}
  Let $k > 0$ be an integer and $P  \subseteq \bbR^3$ be a set of $n=8k+7$ points in general position.
  Then, there exists a triple of planes $(H_1,H_2,H_3)$ that eight-partitions $P$ with the following properties:
  \begin{enumerate}[(i)]
  \item \label{cont-h1} $H_1$ is horizontal (i.e., parallel to the $xy$-plane) and passes through the $z$-median point of~$P$. From here on, we refer to the $4k+3$ points that lie below (resp., above) $H_1$ as \emph{red} (resp., \emph{blue}) points and denote the sets $R$ (resp. $B$).
  \item \label{cont-k} $H_2$ and $H_3$ each contain exactly three points, and each open octant contains exactly $k$ points.
  \item \label{cont-h2h3} $H_2, H_3$ each bisect $R$ and $B$, and the pair $(H_2, H_3)$ four-partitions both $R$ and $B$. Furthermore, $H_2$ and $H_3$ contain at least one point of each color.
  \end{enumerate}
\end{observation}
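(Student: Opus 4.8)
The plan is to obtain the desired triple from Theorem~\ref{thm:hadwiger} by a standard perturbation/limiting argument and then read off the combinatorial structure in the limit. For small $\epsilon>0$, let $\mu_\epsilon$ be the mass distribution obtained from $P$ by replacing each point $p$ with the uniform probability measure on $B_\epsilon(p)$, rescaled to total mass $1$, and apply Theorem~\ref{thm:hadwiger} with the vertical direction $v=(0,0,1)$ to get a triple $\ccH^\epsilon=(H_1^\epsilon,H_2^\epsilon,H_3^\epsilon)$ that eight-partitions $\mu_\epsilon$ with $H_1^\epsilon$ horizontal. Since $H_1^\epsilon$ is one of the planes of an eight-partition it bisects $\mu_\epsilon$; as the $z$-coordinates of the points of $P$ are distinct and a horizontal plane through the centre of a round ball bisects it, a short computation shows that for all small $\epsilon$ the unique horizontal bisector of $\mu_\epsilon$ is the plane $H_1=\{z=z_0\}$ through the $z$-median point $p_0$ of $P$. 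So $H_1^\epsilon=H_1$ is fixed, and after passing to a subsequence I may assume $H_2^\epsilon\to H_2$ and $H_3^\epsilon\to H_3$ (compactness is not an issue since any plane bisecting $\mu_\epsilon$ must meet a bounded neighbourhood of $P$; alternatively one invokes the limiting argument behind Lemma~\ref{lem:limiting_lemma}). It then remains to verify that $(H_1,H_2,H_3)$ satisfies (i)--(iii).

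Part (i) is immediate: $H_1$ is horizontal, passes through $p_0$, and since heights are distinct $H_1\cap P=\{p_0\}$, which defines $R$ and $B$, of size $4k+3$ each. For part (ii) I use the standard limiting estimate: a point of $P$ lying strictly in an open octant $\ccO$ of $\ccH$ has its $\epsilon$-ball inside the corresponding octant of $\ccH^\epsilon$ for all small $\epsilon$, so $|P\cap\ccO|\le n\cdot\mu_\epsilon(\ccO^{\ccH^\epsilon})=n/8<k+1$, hence $|P\cap\ccO|\le k$. Summing over the eight octants, at least $n-8k=7$ points of $P$ lie on $H_1\cup H_2\cup H_3$; but general position gives $|H_1\cap P|\le1$ and $|H_2\cap P|,|H_3\cap P|\le3$, so the union bound forces $|H_2\cap P|=|H_3\cap P|=3$ with $H_1\cap P$, $H_2\cap P$, $H_3\cap P$ pairwise disjoint, and consequently each open octant contains exactly $k$ points.

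Part (iii) is the only place where one must go back to the mass distribution. Because $\ccH^\epsilon$ eight-partitions $\mu_\epsilon$, Observation~\ref{obs:alt_sums} gives $\mu_\epsilon\bigl((H_1^\epsilon)^{\pm}\cap(H_2^\epsilon)^{s}\bigr)=\tfrac14$ for each sign $s$ (a union of two octants), and likewise for $H_3$. Feeding the limit planes into the estimate above, and using that every point of $R$ lies strictly below $H_1=H_1^\epsilon$ while every point of $B$ lies strictly above, gives $|R\cap H_2^{s}|\le n/4<2k+2$ and $|B\cap H_2^{s}|\le n/4<2k+2$, hence both are $\le 2k+1\le|R|/2$; the same holds for $H_3$. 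Thus $H_2$ and $H_3$ each bisect $R$ and $B$. That $(H_2,H_3)$ four-partitions $R$ and $B$ follows at once from part (ii), since each open quadrant of $(H_2,H_3)$ inside $H_1^{-}$ (resp.\ inside $H_1^{+}$) is an open octant of $\ccH$ and therefore contains exactly $k\le|R|/4$ points of $R$ (resp.\ of $B$). Finally, $R\cap H_2^{+}$ and $R\cap H_2^{-}$ partition the $|R|-|R\cap H_2|$ points of $R$ off $H_2$, each part of size $\le 2k+1$, so $|R\cap H_2|\ge (4k+3)-2(2k+1)=1$; the same argument with $B$, and with $H_3$ in place of $H_2$, shows $H_2$ and $H_3$ each contain at least one red and one blue point.

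The step I expect to be the main obstacle is part (iii): the bisection of the individual colour classes does not follow from the point-set eight-partition alone --- a crude count of how the points on $H_2$ and $H_3$ split is off by a constant --- so one has to work with $\mu_\epsilon$, exploit the exact $\tfrac18$ octant masses, and carry out the limiting argument while keeping track of which octant each ball $B_\epsilon(p)$ lands in, together with the small but essential computation pinning $H_1^\epsilon$ to the median height $z_0$.
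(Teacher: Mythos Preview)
Your proof is correct and follows essentially the same approach as the paper. The paper packages the limiting argument into Lemma~\ref{lem:limiting_lemma} and Corollary~\ref{cor:all_partitions} (which yields that each pair $(H_i,H_j)$ four-partitions $P$, hence $|R\cap H_2^{s}|\le 2k+1$), whereas you carry out the same estimate explicitly via $\mu_\epsilon$; your extra observation that $H_1^\epsilon$ is pinned to the median height for all small~$\epsilon$ is a nice shortcut but not a genuinely different idea.
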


\begin{proof}
  Since the set $X \coloneqq \{ (H_1, H_2, H_3) : H_1 \text{ is horizontal} \} \subset (\bbS^3)^3$ is compact, by Theorem~\ref{thm:hadwiger} and Lemma~\ref{lem:limiting_lemma}, there exists a configuration $\ccH_\infty = (H_1, H_2, H_3)$ that eight-partitions the point set with $H_1$ horizontal. Along with the general position assumption, this implies that $H_1$ contains only the $z$-median point. This proves (\ref{cont-h1}).

  To see (\ref{cont-k}), note that any eight-partition has at most $k$ points of $P$ in each of the eight open octants, one point in $H_1$, and at most three points in each of $H_2$ and $H_3$, by general position, for a total of at most $8k+1+2\cdot3=8k+7=n$ points.  So, in fact, all the inequalities are equalities: there \emph{must} be \emph{exactly} $k$ points in each open quadrant and exactly three points of $R \cup B$ in each of $H_2$ and $H_3$.

  It remains to show (\ref{cont-h2h3}). By the preceding paragraph, it is straightforward to see that $(H_2, H_3)$ four-partitions both $R$ and $B$. By Corollary~\ref{cor:all_partitions}, we have that any pair $(H_i, H_j)$ four-partitions $P$. Since $(H_1, H_2)$ four-partitions $P$, each quadrant formed by $(H_1,H_2)$ has at most $\ceil{(8k+7)/4} = 2k + 1$ points. 
  $H_1$ has $4k+3$ points on each side. Hence, we obtain that $H_2$ bisects $R$ and $B$, and, in particular, contains at least one point of each color. A symmetric argument shows that $H_3$ bisects both $R$ and $B$, and contains at least one point of each color.  This completes the proof.
\end{proof}

\begin{theorem}[Computation of an eight-partition]
  \label{thm:alg-detailed}
  Let  $P \subseteq \bbR^3$ be a set of $n > 0$ points in general position and $v \in \bbS^2$. An eight-partition $(H_1, H_2, H_3)$ of $P$, with $v$ being the normal vector of $H_1$,
  can be computed in time $O^*(n+\ell^*(n))$.
\end{theorem}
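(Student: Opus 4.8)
The plan is to make Hadwiger's proof of Theorem~\ref{thm:hadwiger} constructive, following the three-part structure of Observation~\ref{obs:mass-8-partition}. After a rotation I may assume $v=(0,0,1)$, and after a standard reduction handling the residue of $n$ modulo $8$ (for instance by adding $O(1)$ auxiliary points far away, placed generically so that they land in prescribed octants of every candidate partition) I may assume $n=8k+7$; this costs only $O(n)$ time. Computing the horizontal plane $H_1$ through the $z$-median of $P$ takes $O(n\log n)$ time and splits $P$ into the ``red'' set $R$ below $H_1$ and the ``blue'' set $B$ above, with $|R|=|B|$. By Observation~\ref{obs:mass-8-partition}, it then suffices to compute two planes $H_2,H_3$ that each bisect both $R$ and $B$ and whose pair four-partitions both $R$ and $B$.

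Next I would pass to the dual, turning $R$ and $B$ into sets $R^*,B^*$ of $|R|$ planes each. A plane bisecting both $R$ and $B$ dualizes to a point lying on the median level of $\AAA(R^*)$ and on the median level of $\AAA(B^*)$, so the candidates for $H_2$ and $H_3$ each lie on the curve $L$ whose worst-case complexity $m$ appears in the statement (cf.\ Fact~\ref{fact:intersectioncomplexity}). I would compute $L$ explicitly as a family of polygonal cycles and unbounded curves with $O(m)$ vertices in total, using the standard machinery for traversing intersections of levels in arrangements of planes in $\bbR^3$ — the same tools that underlie the $O^*(n^{3/2})$-time Ham--Sandwich algorithm of~\cite{LMS} — which, with point-location/ray-shooting structures built in $O^*(n)$ preprocessing time, runs in $O^*(n+m)$ overall. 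Along the traversal one maintains the sets $R\cap H^+$ and $B\cap H^+$, which change by $O(1)$ elements at each vertex of $L$.

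It remains to find a pair $(H_2,H_3)\in L\times L$ whose four quadrants each hold a quarter of $R$ and a quarter of $B$; since both planes already bisect $R$ and $B$, this amounts to driving two integer discrepancies $\delta_R(H_2,H_3)$ and $\delta_B(H_2,H_3)$ to zero. Existence of such a pair is the content of Hadwiger's theorem (with $H_1$ forced to the $z$-median plane by general position), and the point is to extract from his Borsuk--Ulam argument the sharper claim that a suitable \emph{one}-parameter discrepancy must change sign along $L$: for each position of $H_2$ on $L$ one picks, among the $H_3\in L$ that together with $H_2$ four-partition $R$, a canonically determined one (a Ham--Sandwich-type choice), and records its discrepancy against $B$; the antipodal symmetry along $L$ then forces this quantity to vanish somewhere. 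Algorithmically, $H_2$ walks once along $L$ while a second pointer maintains the companion $H_3$, which also moves monotonically along $L$ so that the walk sees only $O(m)$ combinatorial events in total and updates $\delta_R,\delta_B$ in amortized polylogarithmic time per event; at a sign change one outputs $(H_2,H_3)$. Reversing the duality and the initial reduction yields $H_1,H_2,H_3$, all within time $O^*(n+m)$.

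The hard part will be this last step: avoiding a naive $\Theta(m^2)$ scan of $L\times L$. On the topological side one must identify the right one-parameter discrepancy and prove, from Hadwiger's proof, that it genuinely changes sign along (a component of) $L$, then transport this to the discrete, general-position setting — it is the parity/antipodality in the Borsuk--Ulam argument, not mere existence, that guarantees a zero is actually encountered during the traversal, and the possible non-uniqueness of the auxiliary Ham--Sandwich-type choice has to be dealt with here. On the algorithmic side one must show that when $H_2$ advances by one event along $L$ the companion plane $H_3$ advances by only $O(1)$ events, so the coordinated two-pointer walk stays linear in $|L|$. The remaining ingredients — the rotation, the $O(1)$ auxiliary points, the duality, and the parity bookkeeping for the points lying exactly on $H_1,H_2,H_3$ — are routine.
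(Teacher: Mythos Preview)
Your setup through the construction of $L$ tracks the paper, modulo one omission: the paper first shears so that $R\subset\{x>0\}$ and $B\subset\{x<0\}$, which forces $L$ to be a single connected $y$-monotone curve (Lemma~\ref{lem:L-monotone}), not a ``family of polygonal cycles and unbounded curves.'' The divergence is in the final step. Your two-pointer sweep --- walk $H_2$ along $L$, maintain a canonical companion $H_3\in L$ with $\delta_R=0$, and detect a sign change in $\delta_B$ --- rests on two claims you correctly flag as unproven and that the paper does not supply either: a canonical (let alone unique) choice of such an $H_3$, and a monotonicity lemma guaranteeing $H_3$ advances $O(1)$ steps per step of $H_2$. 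The one-parameter sign change you hope to extract from Hadwiger's Borsuk--Ulam argument is also not made precise; the antipodality there acts on the configuration space of the \emph{third} plane, not along $L$. Without monotonicity the coordinated walk can incur $\Theta(m^2)$ events, so this is a genuine gap, not a mere detail.

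The paper avoids the issue by working two-dimensionally on $L\times L$. Enumerating $L$ as $x_1,\dots,x_m$, it sets $\pi(i,j)=(\funR(x_i,x_j),\funB(x_i,x_j))\in\bbZ^2$ and shows that $\pi$ sends grid curves to grid curves (each elementary step changes one coordinate by at most one). The image of the boundary $T$ of the triangle $\{(i,j):j\le i\}$ has odd winding number about $\bO$: the identity $\pi(x_i,x_1)=-\pi(x_m,x_i)$ makes the horizontal and vertical sides of $\pi(T)$ $180^\circ$-rotations of one another, while the diagonal side stays near $(N,N)$ far from $\bO$. One then repeatedly cuts the enclosed region by an axis-parallel chord into two trapezoids; winding numbers add, so at least one piece keeps odd winding number, and one recurses there. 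After $O(\log m)$ rounds the region is a single grid square, whose $\pi$-image cannot have odd winding number unless it already meets $\bO$. Each round traces a boundary of length $O(m)$ and computes its winding number in $O(m+n)$ time, giving $O^*(n+m)$ overall. This replaces your unproven monotone pairing by a robust degree argument that needs nothing about how a ``companion'' $H_3$ moves.
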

\begin{remark*}
   Since $\ell^*(n) = \Theta(nh_2(n)) = O(n^{7/3})$ by Fact~\ref{fact:intersectioncomplexity2}, we can compute an eight-partition in time $O^*(n^{7/3})$.
\end{remark*}
The rest of this section is devoted to the proof of Theorem~\ref{thm:alg-detailed}. We assume, without loss of generality, that $v = e_3 = (0,0,1)$ is the vertical vector, so $H_1$ is required to be horizontal.
If $n \le 7$, the statement holds trivially --- set $H_1$ to be the horizontal plane containing any point of $P$, and $H_2, H_3$ to contain at most three distinct points each, so that the octants do not contain any points. From here on, we will assume that $n = 8k+7$, for an integer $k > 0$. If $n$ is not of this form, we may add dummy points to $P$ (in general position) until the number of points is of the required form and run the algorithm. Once the algorithm terminates, we discard the dummy points, resulting in an eight-partition with at most $k$ points in each octant.

We now describe the algorithm to construct an eight-partition of $P$ satisfying the properties in Observation~\ref{obs:mass-8-partition}. Let $H_1$ be the horizontal plane containing the $z$-median point of $P$, and, without loss of generality, identify $H_1$ with the $xy$-plane. Now consider the sets $R$ and $B$ of $4k+3$ points each lying below and above, respectively, $H_1$. We further assume, without loss of generality, that $B$ is contained in the half-space $x < 0$ and $R$ is contained in the half-space $x > 0$. Otherwise, since no point in $R \cup B$ has $z = 0$ by the general position assumption, there exists a plane $H$ containing the $y$-axis and with sufficiently small negative slope in the $x$ direction such that all red points are below $H$ and all blue points are above $H$. Applying a generic sheer transformation (so as not to violate the general position assumption) that fixes the $xy$-plane and maps $H$ to the plane $x = 0$, we obtain point sets with the required properties.

Let $R^* = \{p^*: p \in R\}$ be the set of red planes dual to points in $R$ and set $\AAA(R) := \AAA(R^*)$ to be the arrangement formed by the set $R^*$. The set of blue planes $B^*$ and the blue arrangement $\AAA(B)$ are defined analogously. We will write $\AAA \coloneqq \AAA(R\cup B)$ for the arrangement formed by the planes in~$R^*\cup B^*$.  For a (dual) point $p \in \bbR^3$, we set $R^+_p, R^-_p \subseteq R^*$ to be the set of red planes lying strictly above and below $p$, respectively. For a pair $p, q$ of (dual) points, put
\[
  \funR(p,q) \coloneqq |R^+_p\cap R^+_q|-|R^+_p\cap R^-_q|-|R^-_p\cap R^+_q|+|R^-_p\cap R^-_q|.
\]
The sets $B^+_p, B^-_p \subseteq B^*$ and the function $\funB(p,q)$  are defined analogously for~$B^*$.

Let $L$ be the intersection of the median levels of $\AAA(B)$ and $\AAA(R)$. Let $m$ be the complexity, i.e., the number of vertices and edges, of $L$. By the following lemma, we have that $L$ is a connected $y$-monotone polygonal curve and is an alternating sequence of edges and vertices of $\AAA$ terminated by half-lines. 
\begin{lemma}
  \label{lem:L-monotone}
  $L$ is a connected $y$-monotone curve.
\end{lemma}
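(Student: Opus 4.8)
\medskip
\noindent\textbf{Proof plan.}\ The strategy is to exploit the normalization built into the construction, namely $R\subseteq\{x>0\}$ and $B\subseteq\{x<0\}$. Under the duality transform the plane dual to $p=(p_1,p_2,p_3)\in R$ is $p^*\colon z=p_1x+p_2y-p_3$, so every red plane has \emph{positive} $x$-slope $p_1$ and, symmetrically, every blue plane has \emph{negative} $x$-slope. I will use this to show that the vertical projection of $L$ to the $xy$-plane is the graph of a continuous function $x=\phi(y)$; connectedness and $y$-monotonicity of $L$ then follow immediately.

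First I would recall the standard description of a level. Since the planes of $R^*$ are non-vertical, the median level of $\AAA(R)$ is the graph $z=f_R(x,y)$, where $f_R(x,y)$ is the $j$-th smallest of the $|R|$ affine values $\{p_1x+p_2y-p_3:p\in R\}$ and $j$ is the index of the median level; likewise the median level of $\AAA(B)$ is the graph $z=f_B(x,y)$. Both $f_R$ and $f_B$ are continuous and piecewise linear on $\bbR^2$, so $L=\{(x,y,z):z=f_R(x,y)=f_B(x,y)\}$ and the projection $L\to\bbR^2$, $(x,y,z)\mapsto(x,y)$, is a bijection onto the zero set of $g:=f_R-f_B$.

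The heart of the argument --- and the only place where the sign conditions on the slopes are used --- is to show that for each fixed $y_0$ the function $x\mapsto g(x,y_0)$ is strictly increasing and attains every real value. On any maximal $x$-interval on which the combinatorial structure of the sliced arrangement is constant, $f_R(x,y_0)$ agrees with a single affine function $p_1x+(p_2y_0-p_3)$ of positive slope; as $f_R(\cdot,y_0)$ is also continuous, it is strictly increasing, and likewise $f_B(\cdot,y_0)$ is strictly decreasing, so $g(\cdot,y_0)$ is strictly increasing. The range is all of $\bbR$ because of the squeeze $\min_p(p_1x+p_2y_0-p_3)\le f_R(x,y_0)\le\max_p(p_1x+p_2y_0-p_3)$ (all slopes positive, so $f_R(x,y_0)\to\pm\infty$ as $x\to\pm\infty$) together with the analogous estimate for $f_B$, which forces $g(x,y_0)\to\pm\infty$ as $x\to\pm\infty$. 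Hence $g(\cdot,y_0)$ has a unique zero $\phi(y_0)$; continuity of $\phi$ is routine, since for $\varepsilon>0$ we have $g(\phi(y_0)-\varepsilon,y_0)<0<g(\phi(y_0)+\varepsilon,y_0)$, and by continuity of $g$ these strict inequalities persist for $y$ near $y_0$, pinning $\phi(y)$ to the interval $(\phi(y_0)-\varepsilon,\phi(y_0)+\varepsilon)$.

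Combining these facts, $L$ is precisely the image of the continuous map $\bbR\to\bbR^3$, $y\mapsto(\phi(y),\,y,\,f_R(\phi(y),y))$; its image is therefore connected, and since the $y$-coordinate of the image of $y$ is just $y$ itself, $L$ meets every plane $\{y=c\}$ in exactly one point. Thus $L$ is a connected $y$-monotone curve. That it is in addition a polygonal curve alternating between edges and vertices of $\AAA=\AAA(R^*\cup B^*)$ and terminated by two half-lines then follows automatically from the piecewise-linear structure of $f_R$ and $f_B$ together with the general-position assumptions. I expect the strict monotonicity of $g(\cdot,y_0)$ to be the one genuinely load-bearing step; everything preceding it is standard arrangement theory, and everything after it is bookkeeping.
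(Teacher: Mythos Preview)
Your proof is correct and follows essentially the same approach as the paper: both arguments slice by the hyperplanes $y=\text{const}$, use the sign of the $x$-slopes (positive for $R^*$, negative for $B^*$) to deduce that the red and blue median levels in each slice are strictly increasing and decreasing respectively, and conclude that they meet in a unique point. The paper phrases this more tersely and then invokes general position to identify $L$ as a single bi-infinite chain, whereas you give an explicit continuous parameterization $y\mapsto(\phi(y),y,f_R(\phi(y),y))$; these are equivalent routes to the same conclusion.
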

\begin{proof}
  Recall that $B$ lies in the quadrant $x<0, z<0$ and $R$ lies in the quadrant $x>0, z>0$.  Hence, the dual planes in $B^*$ and $R^*$ have equations of the form $z = ax + by + c$ with $a < 0$ and $a > 0$, respectively.

  Consider the intersection of $R^* \cup B^*$ with the plane $\Pi_d\colon y=d$. The intersection of the plane $z = ax + by + c$ is the line $z = ax + (bd + c)$, so planes in $B^*$ correspond to lines with negative slope and planes in $R^*$ correspond to lines with positive slope.
  In particular, the median levels of lines corresponding to $B^*$ and $R^*$ are graphs of piecewise-linear total functions  that are decreasing and increasing, respectively. It follows that the two curves intersect exactly once. This intersection point corresponds to the intersection of $L$ with the plane $\Pi_d$.

  By general position, $L$ is a union of vertex-disjoint cycles and bi-infinite paths composed of edges and vertices of $\AAA$, since incident to every vertex of $A$ contained in $L$ are precisely two edges belonging to $L$.  By monotonicity in $y$, $L$ must be a single bi-infinite chain.
\end{proof}

In fact, $L$ can be computed efficiently using standard tools~\cite{dynamic-AgMat,TMChan}, which we outline now.
\begin{lemma}[Computing the intersection of two levels \cite{dynamic-AgMat,TMChan}]\label{lemma:compute-L}
The intersection curve $L$ of the median level of $\AAA(B)$ and the median level of $\AAA(R)$ can computed in time $O^*(n+m)$, where $m$ is the complexity of the curve and $O^*(\cdot)$ notation hides polylogarithmic factors.
\end{lemma}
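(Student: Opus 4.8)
The plan is to realise $L$ as a single $y$-monotone polygonal chain and to trace it from one end to the other, charging only polylogarithmic overhead to each of its $\Theta(m)$ edges and vertices by invoking known data structures for levels in arrangements of planes.

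\textbf{Structure of the curve.} First I would record the following. By Lemma~\ref{lem:L-monotone}, $L$ is connected and $y$-monotone; since it lies in $\AAA=\AAA(R\cup B)$, it is an alternating sequence of edges and vertices of $\AAA$, unbounded at both ends. A generic point of $L$ lies on the median level of $\AAA(R)$ \emph{and} on the median level of $\AAA(B)$, hence on exactly one red plane $r^{*}$ and one blue plane $b^{*}$; so each edge of $L$ is a subsegment of the bichromatic line $r^{*}\cap b^{*}$, along which there are exactly $(|R|-1)/2$ red planes below and $(|B|-1)/2$ blue planes below. Crossing one further plane changes exactly one of these two counts by one, so the $+y$-endpoint of the edge is precisely the \emph{first} plane of $R^{*}\cup B^{*}$ met by the ray along $r^{*}\cap b^{*}$ from the current point: if it is a red plane ${r'}^{*}$ the curve turns onto ${r'}^{*}\cap b^{*}$, and symmetrically for blue. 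Hence computing $L$ reduces to finding its lower unbounded edge and then repeating this deterministic ``advance'' step.

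\textbf{Tracing.} To start, I would locate the lower unbounded edge: it is the unique bichromatic edge-line of $\AAA$ carrying the correct red- and blue-below counts for all sufficiently negative $y$, and in a cross-section $\Pi_{d}$ with $d$ below every vertex of $\AAA$ the red lines have positive and the blue lines negative slope, so their median levels are increasing resp.\ decreasing $x$-monotone functions; the unique $x$ where these agree -- together with the lines realising the two medians there -- is found by standard prune-and-search over $R^{*}\cup B^{*}$ in $O^{*}(n)$ time, giving the supporting planes $r_{0}^{*},b_{0}^{*}$. For the advance step I would preprocess $R^{*}$ and $B^{*}$ into the dynamic / $(\le k)$-level data structures of Agarwal--Matou\v{s}ek and Chan~\cite{dynamic-AgMat,TMChan} in $O^{*}(n)$ time; these support the query needed to locate, given the supporting bichromatic line of the current edge, the next plane it crosses -- equivalently, to track the supporting plane of a median level along a segment -- in polylogarithmic time. (Equivalently one may picture a kinetic sweep of $\Pi_{d}$ with $d$ increasing, in which the lines $R^{*}\cap\Pi_{d}$, $B^{*}\cap\Pi_{d}$ translate linearly in $d$ while only the two planes supporting the current edge of $L$ are maintained, so that the sole events are the $m$ vertices of $L$.) Since $L$ has $\Theta(m)$ features, the trace costs $O^{*}(m)$, for a total of $O^{*}(n+m)$.

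\textbf{Main obstacle.} The combinatorial skeleton -- that $L$ is a single monotone chain and that the advance step is deterministic and well defined -- is immediate from Lemma~\ref{lem:L-monotone} and the level count above. The real work is to package the advance step as a polylogarithmic query against~\cite{dynamic-AgMat,TMChan}: I would need to check that tracking the supporting plane of a level along a line, under the linear motion of the cross-sections, fits the $(\le k)$-level / dynamic halfspace-range machinery, and then dispose of the usual general-position caveats and of the behaviour at $y=\pm\infty$. Given that machinery these steps are routine, which is why the lemma is stated as an application of standard tools rather than proved from scratch.
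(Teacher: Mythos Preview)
Your approach is correct and matches the paper's: locate the starting edge of $L$ in a far cross-section (the paper uses Megiddo's linear-time algorithm~\cite{Meg}) and then trace $L$ edge by edge, finding each successive vertex via the dynamic data structures of~\cite{dynamic-AgMat,TMChan}. The paper resolves your ``main obstacle'' concretely: it partitions $R^{*}\cup B^{*}$ (minus the two planes carrying the current edge) into the set $U$ of planes above and $D$ of planes below the current point, so that the advance step is simply two dynamic ray-shooting queries along the current edge-line into the intersection of the lower half-spaces of $U$ and the intersection of the upper half-spaces of $D$, with $O(1)$ updates to $U$ and $D$ per vertex---this is precisely the query supported in $O^{*}(1)$ time by the cited structures, so no $(\le k)$-level machinery is needed.
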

\begin{proof}
    We use the standard dynamic data structure for ray-shooting queries in the intersection of half-spaces; the currently fastest algorithm is due to Chan~\cite{TMChan}, see also earlier work of Agarwal and Matou{\v s}ek \cite{dynamic-AgMat}.

    A starting ray of $L$ can be computed by computing the intersection of the median levels in the vertical plane $\Pi_d\colon y = d$ for a small enough $d$, defined as in the proof of Lemma~\ref{lem:L-monotone}, in linear time, using an algorithm of Megiddo \cite{Meg}.
    
    Consider a point $p$ on the initial edge of $L$ (infinite in the $-y$-direction).  It lies on one plane of $\pi \in B$ and one plane $\pi' \in R$.
    Let $\ell$ be the intersection line of $\pi$ and $\pi'$, and consider the half line $\rho$ of $\ell$ starting at $p$ and infinite in the $+y$-direction. The planes of $B \cup R$ (besides $\pi$ and $\pi'$) are classified into those lying above $p$ and those lying below it.  Call the first set $U$ and the second $D$. We preprocess the intersection of the set of lower half-spaces defined by the planes of~$U$ and the intersection of the set of upper half-spaces defined by the planes of~$D$ for dynamic ray shooting and shoot with~$\rho$.  The earlier of the two intersections identifies the first plane $\pi_2$ of $(B\cup R) \setminus \{\pi,\pi'\}$ that $\rho$ meets.  This is the next vertex of $\AAA(B\cup R)$ on~$L$; $L$~turns here.  If $\pi_2$ belongs to $B$, $L$ now follows the intersection line of $\pi_2$ and $\pi'$.  Otherwise it follows the intersection line of $\pi$ and $\pi_2$.  Past the intersection, the sets $U$ and $D$ need to be updated and we continue, following the next ray, until we trace all of $L$.

    The only cost besides the initial computation of $p$ are identifying $U$ and $D$ in $O(n)$ time, initializing the dynamic structure, in $O^*(n)$ time, and performing two ray shots and $O(1)$ updates on $U$ and $D$ per vertex of $L$, each at a cost of $O^*(1)$.
\end{proof}

\begin{note*} 
  As we construct $L$, we can store
  it as a sequence of vertices and edges.  Each edge is associated with the red-blue pair of planes containing it.  An endpoint of an edge is contained in an additional plane.  For each consecutive edge/vertex pair $(e,v)$, in either direction, we record which new plane contains $v$ together with its color.
\end{note*}

We now return to the computation of the eight-partition $(H_1,H_2,H_3)$.  By the general position assumption, $H_2$ and $H_3$ cannot be vertical, so $H_2$ and $H_3$  correspond to vertices in~$\AAA$, by Observation~\ref{obs:mass-8-partition}. With the above setup, we can reformulate the problem of computing $H_2$ and $H_3$ as follows.
\begin{claim}[The dual alternating sign functions]
  \label{cl:XY}
  Computing $H_2$ and $H_3$ is equivalent to identifying a pair of vertices $p, q \in L$ such that $\funB(p,q)=\funR(p,q)=0$. 
\end{claim}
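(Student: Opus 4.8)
The plan is to translate the geometric requirements on $H_2$ and $H_3$ (from Observation~\ref{obs:mass-8-partition}) into the dual statement, using the correspondence between points on the median level of an arrangement and halving planes, and then to match the alternating-sign functions $F_\alpha$ of the primal problem with the combinatorial functions $\funR$ and $\funB$ defined on dual points.

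\medskip

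First I would unwind the duality. A non-vertical plane $H_i$ in the primal corresponds to a point $p_i \in \bbR^3$ in the dual; a primal point $x$ lies above/on/below $H_i$ exactly when the dual plane $x^*$ lies below/on/above $p_i$. By Observation~\ref{obs:mass-8-partition}(\ref{cont-h2h3}), $H_2$ bisects $R$ (contains one red point, with $(|R|-1)/2 = 2k+1$ red points on each side) and likewise bisects $B$; so the dual point $p_2$ lies on the median level of $\AAA(R)$ and simultaneously on the median level of $\AAA(B)$, i.e., $p_2 \in L$, and similarly $p_3 \in L$. Conversely, any vertex of $L$ is the dual of a plane through one red and one blue point that bisects both $R$ and $B$ (this is exactly where the note-recorded red-blue pair on each edge, plus the extra plane at each vertex, is used to check that the vertex lies on a red plane and a blue plane). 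This establishes the bijection between candidate pairs $(H_2,H_3)$ and pairs of vertices $(p,q)$ of $L$.

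\medskip

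Next I would verify that the eight-partition condition is captured by $\funR(p,q)=\funB(p,q)=0$. Since $H_1$ separates $R$ from $B$, each orthant $\ccO_\beta^{\ccH}$ with $\beta = \beta_1\beta_2\beta_3$ contains only red points if $\beta_1 = +$ (say) and only blue points if $\beta_1 = -$; writing $P = R \sqcup B$, the four alternating sums with $\alpha_3 = -$ that must vanish (by Observation~\ref{obs:alt_sums}, since $(H_2,H_3)$ already four-partitions each of $R$ and $B$, hence $P$, so all $F_\alpha$ with $\alpha_3=+$ vanish automatically) split as a sum of a "red part" and a "blue part". Concretely $F_{\alpha_1\alpha_2-}$ restricted to $R$ is, up to sign, the count $|R^+_{p_2}\cap R^+_{p_3}| - |R^+_{p_2}\cap R^-_{p_3}| - |R^-_{p_2}\cap R^+_{p_3}| + |R^-_{p_2}\cap R^-_{p_3}| = \funR(p_2,p_3)$, with the sign depending on $\alpha_1,\alpha_2$ via the combinatorics of which side of $H_2,H_3$ a red point lies on versus which side of $p_2,p_3$ its dual plane lies on; and similarly the blue part is $\pm\funB(p_2,p_3)$. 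Because these four sums differ only by signs of $\funR$ and $\funB$, all four vanish iff $\funR(p_2,p_3)=\funB(p_2,p_3)=0$. Combined with Observation~\ref{obs:mass-8-partition} (which guarantees such an eight-partition exists and has $H_2,H_3$ of the required combinatorial type), this yields the claimed equivalence.

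\medskip

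The main obstacle I anticipate is bookkeeping the signs correctly: matching the $\pm 1$ pattern in the definition of $F_\alpha$ (governed by $p(\alpha,\beta)$) against the $\pm 1$ pattern in the definition of $\funR$ and the orientation flip between "point above plane $H_i$" and "dual plane below dual point $p_i$", and confirming that the mismatch between $\{+++,-+-,+--,---\}$-type sums and $\funR,\funB$ is exactly an overall sign (so that the zero sets coincide). Once the dictionary "primal point above $H_i$ $\leftrightarrow$ $p_i$ above dual plane, flipped" is fixed, each of the four sums $F_{++-},F_{+--},F_{-+-},F_{---}$ can be expanded over the eight orthants, the red orthants collected into $\pm\funR(p_2,p_3)$ and the blue ones into $\pm\funB(p_2,p_3)$, and the equivalence follows; there is no genuine difficulty beyond this case analysis, which I would carry out once carefully rather than reproduce in full here.
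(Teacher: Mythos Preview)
Your forward direction is sound (and the paper argues it even more directly: every octant has exactly $k$ points, so $\funR=\funB=k-k-k+k=0$). The backward direction, however, has a genuine gap.

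You invoke Observation~\ref{obs:alt_sums} to reduce ``eight-partition'' to the vanishing of alternating sums, but that observation is proved for mass distributions, where planes have measure zero. For point sets the points of $R\cup B$ that lie \emph{on} $H_2$ or $H_3$ spoil the bookkeeping. Concretely, a vertex $p\in L$ lies on three planes of $R^*\cup B^*$, possibly two of the same colour; when $p$ lies on two red planes one has $\{|R_p^+|,|R_p^-|\}=\{2k,2k+1\}$, so $H_2$ does \emph{not} bisect $R$ exactly and $F_{-+}((H_2,H_3),R)=\pm 1$, not $0$. Your parenthetical ``since $(H_2,H_3)$ already four-partitions each of $R$ and $B$'' is therefore unjustified---only bisection by each plane separately is given by $p,q\in L$. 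Moreover, your claim that each $F_{\alpha_1\alpha_2-}$ restricted to $R$ equals $\pm\funR$ is simply false: for instance $F_{++-}$ reduces (via Observation~\ref{obs:alt_sums}(\ref{count:trivial})) to $F_-((H_3),P)$, a bisection condition for $H_3$, not a combination of $\funR$ and $\funB$.

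The paper's proof takes a completely different route for the backward direction. Rather than passing through alternating sums, it argues directly by contradiction: assume some quadrant count, say $a_r=|R_p^+\cap R_q^+|$, exceeds $k$. Using $a_r+b_r\le |R_p^+|\le 2k+1$, $a_r+c_r\le 2k+1$, and $a_r+d_r=b_r+c_r$ (from $\funR=0$), one bounds $a_r+b_r+c_r+d_r$ from above and concludes that too many red planes (more than general position allows) pass through $p$ and $q$. The cases $a_r\ge k+2$ and $a_r=k+1$ are handled separately, the latter also bringing in the analogous blue inequalities. This exact-counting argument is precisely what absorbs the off-by-one issues your alternating-sum shortcut cannot.
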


\begin{proof}

By Observation~\ref{obs:mass-8-partition}(\ref{cont-k}), the eight-partition $(H_1, H_2, H_3)$ has exactly $k$ points in each of the eight open octants. Setting $p \coloneqq H_2^*$ and $q \coloneqq H_3^*$, we obtain that $|R^\pm_p\cap R^\pm_q|=|B^\pm_p\cap B^\pm_q|=k$ for all combinations of signs. Therefore $\funB(p,q)=\funR(p,q)=0$, as claimed.  

We now argue the other direction. Let $p, q \in L$ be vertices such that $\funR(p, q) = \funB(p, q) = 0$. Since $p$ and $q$ lie on $L$, $H_2 \coloneqq p^*$ and $H_3 \coloneqq q^*$ bisect both $R$ and $B$ and contain exactly three points each, at least one of each color. Hence, it suffices to show that $(H_2, H_3)$ is a four-partition of both $R$ and $B$, i.e., $|R_p^\pm \cap R_q^\pm|, |B_p^\pm \cap B_q^\pm| \leq k$ for all combinations of signs. Indeed, this implies that each octant formed by $(H_1, H_2, H_3)$ contains exactly $k$ points, completing the proof.

Let $a_r \coloneqq |R^+_p\cap R^+_q|$, $b_r \coloneqq |R^+_p\cap R^-_q|$, $c_r\coloneqq |R^-_p\cap R^+_q|$, and $d_r\coloneqq |R^-_p\cap R^-_q|$. Define $a_b, b_b, c_b, d_b$ analogously for the blue planes. Without loss of generality, for a contradiction, suppose $a_r>k$.

  We first consider the case $a_r \geq k+2$. Since $p$ lies on the median level of $\AAA(R)$, we have $a_r + b_r \leq |R_p^+| = 2k+1$, implying $b_r \leq k-1$. Similarly, since $q$ lies on the median level of $\AAA(R)$, we have $c_r \leq k-1$.
  Recall that, by assumption, $\funR(p, q) = a_r+d_r-b_r-c_r= 0 $, implying $d_r = b_r + c_r - a_r \leq k-4$.  
  Hence, $a_r + b_r + c_r + d_r \leq 4k-4$, so $p$ and $q$ together are contained in $4k+3 - (a_r + b_r + c_r + d_r) \geq 7$ red planes, contradicting the general position assumption. 

  We may now assume $a_r = k+1$. Following the same reasoning we obtain $b_r \leq k$, $c_r \leq k$, and $d_r=b_r+c_r-a_r\leq k-1$. This implies $a_r + b_r + c_r + d_r \leq 4k$, and, in particular, that $p$ and $q$ together are contained in at least 3 red planes.
  Now consider the blue planes and note that $a_b + b_b + c_b + d_b \leq 4k$ --- this is clear if each of sets $B^\pm_p \cap B^\pm_q$ contains at most $k$ blue planes, otherwise it follows by the same argument as above. Hence, $p$ and $q$ together are contained in $4k+3 - (a_b + b_b + c_b + d_b) \ge 3$ blue planes.

  By Observation~\ref{obs:mass-8-partition}(\ref{cont-k}), $p$ and $q$ are contained in at most $6$ planes of $R^* \cup B^*$. Combined with the argument above, this implies $p$ and $q$ together are contained in exactly 3 planes of each color. It follows that $a_r + b_r + c_r + d_r = a_b + b_b + c_b + d_b = 4k$, which, by the assumption $a_r = k+1$, implies $b_r = c_r = k$ and $d_r = k-1$. Since $|R^-_p| = 2k+1$ and $b_r + d_r = 2k-1$, there are exactly 2 red planes containing $q$ below $p$. Similarly, since $|R^-_q| = 2k+1$ and $b_r + d_r = 2k-1$, there are exactly 2 red planes containing $p$ below $q$. But then $p$ and $q$ are contained in a total of $4$ red planes, a contradiction.

  This exhausts all possibilities and, hence,  $|R_p^\pm \cap R_q^\pm|, |B_p^\pm \cap B_q^\pm| \leq k$ for all combinations of signs, completing the proof.
\end{proof}

To summarize, once we construct $L$ in time $O^*(n+m)$, to compute an eight-partition, it is sufficient, by Claim~\ref{cl:XY}, to find two vertices $p,q \in L$ satisfying $\funR(p,q) = \funB(p,q) = 0$. 
In particular, it is possible to construct an eight-partition by enumerating all the $\Theta(m^2)$ pairs of vertices in $L$; the exact running time depends on how efficiently one can check candidate pairs.
Below, we describe
how to reduce the amount of remaining work to $O((m+n)\log m)$.

\subparagraph*{Speed up}
For simplicity of later calculations, we orient $L$ in the $y$-direction (which is possible by Lemma~\ref{lem:L-monotone}) and view it as an alternating sequence of edges and vertices, starting and ending with a half-line. We denote these elements by $x_1, x_2, \dotsc, x_m$. Recall that the goal is to identify $i,j \in [m]$ so that $x_i, x_j$ are vertices and $\funR(x_i,x_j)=\funB(x_i,x_j) = 0$. 

We extend the definition of $\funR, \funB$ as follows.
If $x_i, x_j$ are both edges, we pick arbitrary points $p$ and $q$ in the open edges $x_i$ and $x_j$, respectively, and
set $\funR(x_i,x_j) \coloneq \funR(p, q)$ and $\funB(x_i,x_j) \coloneq \funB(p, q)$; the cases where $x_i$ is an edge or $x_j$ is an edge, but not both, are handled analogously.
Note that specifying the (open) edges containing $p$ and $q$ is sufficient to determine $\funR$ and $\funB$, hence the definition is unambiguous. Define $\pi\colon [m]^2 \to \bbZ^2$ by
\[
  \pi(i,j)\coloneq(\funR(x_i,x_j),\funB(x_i,x_j)).
\]
With this setup, our goal is to identify a point $(i, j) \in [m]^2$ (corresponding to a pair of vertices on $L$) such that $\pi(i, j) = \bO$.

We define a \emph{grid curve} $C$ to be a sequence of points $(i_1, j_1),\dots,(i_t,j_t)$ in $\bbZ^2$ such that $(i_{\ell+1},j_{\ell+1}) \in \{ (i_\ell,j_\ell), (i_{\ell\pm 1},j_\ell),  (i_\ell,j_{\ell\pm 1})\}$ for each $\ell \in [t-1]$.
In words, a grid curve is a walk in $\bbZ^2$ which, at each step, does not move at all or moves by exactly one unit up/down/left/right.
A curve is \emph{closed} if $(i_1,j_1)=(i_t,j_t)$. A grid curve is \emph{simple} if non-consecutive points are distinct (we think of the start and end points as consecutive) --- so the curve does not revisit a point after it moves away from the point.

To each grid curve $C$, we associate a piecewise linear curve $\lineC$ in $\bbR^2$, consisting of line segments connecting consecutive points $(i_{\ell},j_{\ell}), (i_{\ell+1},j_{\ell+1})$ of $C$ for each $\ell \in [t - 1]$. For a curve $\lineC$ not passing through the origin $\bO$, the winding number $w(\lineC)$ about $\bO$ is defined in the standard way. Slightly abusing notation, we set $w(C) := w(\lineC)$.
In particular, provided $\lineC$ misses the origin,
\[ w(C) = w(\lineC) = \begin{cases*}
0  & if $\lineC$ does not wind around $\bO$, \\
n > 0 & if $\lineC$ winds around $\bO$ $n$ times counterclockwise,\\
n < 0 & if $\lineC$ winds around $\bO$ $-n$ times clockwise.
\end{cases*} \]
We omit the rigorous definition of $w(C)$ as a contour integral in the complex plane since it does not add to the discussion and, instead, refer the reader to \cite[Chapter 4.4.4]{krantz1999handbook}.

Our algorithm proceeds as follows: 
\begin{enumerate}[Step 1]
  \item Set $C \coloneq T$ (see Definition~\ref{def:T}). If $\pi(C)$ meets $\bO$, then stop --- we have found a point that maps to $\bO$ (see Lemma~\ref{fact:hit-it}).  Otherwise $\pi(\lineC)$ has odd winding number, by Lemma~\ref{fact:T}.
  \item\label{algstep2} Construct two simple closed curves $C_1$, $C_2$ so that (a)~$\lineC = \lineC_1 + \lineC_2$, (b)~at least one of $\pi(C_1), \pi(C_2)$ has odd winding number (unless they meet $\bO$), (c)~the regions enclosed by $\lineC_1$ and $\lineC_2$ partition the region enclosed by $\lineC$, and (d)~the area enclosed by each of $\lineC_1, \lineC_2$ is a fraction of that enclosed by $\lineC$ (see Lemma~\ref{fact:curve-splitting}).
  \item If $\pi(C_1)$ or $\pi(C_2)$ meets $\bO$, then stop --- we found a point that maps to $\bO$, by Lemma~\ref{fact:hit-it}. 
  \item Compute $w(\pi(C_1))$ and $w(\pi(C_2))$, and replace $C$ with the one with the odd winding number. Goto Step \ref{algstep2}.
\end{enumerate}

We now proceed to fill in the details, starting with the definition of the initial curve $T$.
\begin{definition}[The triangular grid curve $T$]
\label{def:T}
The simple closed grid curve $T$ traverses a \emph{triangular} path defined as follows:
\begin{itemize}
 \item Starting with the bottom horizontal side of the grid $[m]^2$, $T$ traverses the points 
 \[
   (x_1, x_1), (x_2, x_1), \dots, (x_m, x_1),
 \]
 \item continuing along the right vertical side of the grid $[m]^2$ along the points
 \[
   (x_m, x_1), (x_m, x_2), \dots, (x_m, x_m),
 \]
 \item finally, traversing back \emph{diagonally} along
 \[
   (x_m, x_m), (x_{m-1}, x_{m}), (x_{m-1}, x_{m-1}), (x_{m-2}, x_{m-1}), \dots, (x_1, x_2), (x_1, x_1).
 \]
\end{itemize}
\end{definition}
Along the diagonal side of $T$, we are really only interested in points of the form $(x_\ell, x_\ell)$ with $\ell \in [m]$. However, since this doesn't give a grid curve, we ``patch'' it up by introducing intermediate points. Fortunately, this does not change the desired properties of $T$.

\begin{lemma}
  \label{fact:grid-to-grid}
  If $C$ is a grid curve, then $\pi(C)$ is a grid curve.  Moreover, if $L$ has already been computed, $\pi(C)$ can be computed in time $O(n+|C|)$.
\end{lemma}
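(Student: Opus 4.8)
\textbf{Proof proposal for Lemma~\ref{fact:grid-to-grid}.}

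The plan is to verify the two assertions separately: (a) that $\pi$ maps grid curves to grid curves, and (b) the efficiency claim. For (a), recall that $\pi(i,j) = (\funR(x_i,x_j), \funB(x_i,x_j))$, so I must show that a single unit step (or a null step) in $\bbZ^2$ causes $\pi$ to change by at most one unit in one coordinate. A null step is immediate. By symmetry in the two coordinates, it suffices to consider a step from $(i,j)$ to $(i+1,j)$, i.e., passing from $x_i$ to the next element $x_{i+1}$ along $L$ while $x_j$ stays fixed. The key point is that $x_i$ and $x_{i+1}$ are an adjacent (edge, vertex) pair of $L$, and the only difference between them is that one additional plane — recorded together with its color in the Note* preceding the lemma — contains the vertex but not the (open) edge. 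Thus when we pass from $x_i$ to $x_{i+1}$, exactly one plane switches status with respect to the moving point: either a red plane moves between $R^+$ and $R^-$ of the point, or a blue plane moves between $B^+$ and $B^-$. In the red case, $\funB(x_i,x_j) = \funB(x_{i+1},x_j)$ (no blue plane changed), and $\funR$ changes by the difference of two $\pm1$'s that differ in exactly one of their defining sign-memberships, hence by exactly $\pm 2$ — wait, this needs care.

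Let me be more precise about the red case. Write $p, p'$ for points on the (open) edge $x_i$ and vertex $x_{i+1}$ (or vice versa), and let $q$ be a point representing $x_j$. The sets $R^\pm_q$ are fixed. Exactly one red plane $\pi_0$ has $p \in (\pi_0)^+$ but $p' \in (\pi_0)^-$ (say); for every other red plane the side of $p$ and of $p'$ agree. So $R^+_{p'} = R^+_p \setminus \{\pi_0\}$ and $R^-_{p'} = R^-_p \cup \{\pi_0\}$. Now
\[
  \funR(p,q) = |R^+_p\cap R^+_q| - |R^+_p\cap R^-_q| - |R^-_p\cap R^+_q| + |R^-_p\cap R^-_q|,
\]
and the analogous expression for $p'$. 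Subtracting, every term involving a plane other than $\pi_0$ cancels, and we are left with a contribution of $-\varepsilon$ from moving $\pi_0$ out of $R^+_p$ and $+\varepsilon$ from moving it into $R^-_{p'}$, where $\varepsilon = +1$ if $\pi_0 \in R^+_q$ and $\varepsilon = -1$ if $\pi_0 \in R^-_q$ (recall $\pi_0$ is a red plane distinct from those through $q$, since $p'$ is a vertex of $\AAA$ on $L$ and $q$ is another such vertex or edge, using general position). So $\funR(p',q) - \funR(p,q) = -\varepsilon - \varepsilon = \pm 2$. Hmm — that gives a jump of $2$, not $1$, which is \emph{not} a valid grid step. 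So I need to reconsider: evidently the intended resolution is that the ``patching'' intermediate points inserted along the diagonal, and more importantly the fact that each such plane-swap should be refined into \emph{two} consecutive grid steps, or that the natural parametrization of $L$ already splits each vertex-crossing appropriately. Let me reconsider the structure: between consecutive vertices of $L$ there is an edge, and the edge itself is a valid ``position''; so the sequence is $\ldots,\text{edge},\text{vertex},\text{edge},\ldots$, and the plane whose status changes at a vertex is the one newly incident there. Crossing from the edge \emph{before} a vertex to the edge \emph{after} it flips one plane and changes $\funR$ (or $\funB$) by $\pm2$; but crossing from the edge to the vertex itself changes it by $\pm1$ (the vertex lies \emph{on} $\pi_0$, so $\pi_0$ is in neither $R^+$ nor $R^-$ of the vertex, contributing $0$ instead of $\pm 1$). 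Hence each of the two half-steps (edge $\to$ vertex, vertex $\to$ next edge) changes the relevant coordinate of $\pi$ by exactly $\pm1$, and the other coordinate not at all. This is exactly a unit grid step. So the correct statement to prove is: \emph{consecutive elements $x_i, x_{i+1}$ of $L$ differ by one plane being incident to one of them (the vertex) and strictly above/below the other (the edge), whence $\pi(i+1,j) - \pi(i,j)$ is a unit vector or zero for any fixed $j$}; then a general unit grid step $(i,j)\to(i',j')$ either changes $i$ or $j$ by one, and the argument applies coordinatewise.

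For the efficiency claim (b): assume $L$ has been computed and stored as in the Note* — as the sequence $x_1,\ldots,x_m$ with, for each consecutive (edge, vertex) pair, the new incident plane and its color. First, in $O(n+|C|)$ preprocessing, pick a base position, say $(1,1)$, compute $\funR(x_1,x_1)$ and $\funB(x_1,x_1)$ directly in $O(n)$ time, and then precompute two arrays giving $\funR(x_\ell, x_1)$ and $\funB(x_\ell, x_1)$ for all $\ell$ — each obtained from the previous by adding the unit change dictated by the stored incident plane and its color, total $O(m)$, hence $O(n)$. Actually what is needed is the ability to compute $\pi(C) = (\pi(C(1)),\pi(C(2)),\ldots)$ along the given grid curve $C$; since consecutive points of $C$ differ by a unit step, and by part (a) the value of $\pi$ changes by a unit vector that is determined solely by the stored combinatorial data at the single element $x_i$ (resp.\ $x_j$) being crossed — namely which plane becomes/ceases to be incident, its color, and on which side of the \emph{other} fixed point $x_j$ (resp.\ $x_i$) that plane lies — we can maintain the value of $\pi$ incrementally along $C$ in $O(1)$ time per step, provided we can answer ``which side of $x_j$ does plane $\pi_0$ lie on'' in $O(1)$. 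This last query is answered in $O(1)$ by a single plane-point sidedness test once we know the geometric location of (a point representing) $x_j$, which we track. The only genuinely global cost is the initial $O(n)$ to evaluate $\pi$ at the starting point of $C$ from scratch, plus $O(|C|)$ for the walk. Hence $\pi(C)$ is computed in total time $O(n+|C|)$, as claimed.

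The main obstacle I anticipate is precisely the ``jump by $2$ vs.\ jump by $1$'' subtlety above: getting the bookkeeping right so that the alternating edge/vertex structure of $L$ is exploited to make each step of $\pi$ a genuine unit grid step, and confirming (via general position: no four points of $P$ coplanar, so at most three planes of $R^*\cup B^*$ through a vertex, and the vertices $p, q$ on $L$ are not simultaneously on a common fourth plane) that the plane switching status at a vertex is never one of the planes through the other point — so its contribution to $\funR$ or $\funB$ is a clean $\pm1$ and not something degenerate. The rest is routine incremental maintenance.
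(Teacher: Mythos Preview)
Your proposal is correct and follows the same approach as the paper: analyze a single edge$\to$vertex (or vertex$\to$edge) step along $L$, observe that exactly one plane changes its incidence status with the moving point, and conclude that exactly one of $\funR,\funB$ changes by at most one; the efficiency claim is likewise handled by an $O(n)$ brute-force initialization followed by $O(1)$ incremental updates per step. One small correction: your attempt to rule out, via general position, the case where the switching plane $\pi_0$ passes through the other point $q$ is both unnecessary and not actually a consequence of general position (two distinct vertices of $\AAA$ can certainly share a common plane); the paper simply observes that in that case ``nothing changes,'' i.e., the resulting step of $\pi(C)$ is a null step, which the definition of grid curve explicitly permits.
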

\begin{proof}
  Consider a step in $C$ from $(x_i,x_j)$ to $(x_{i+1},x_j)$, where $x_i$ is an edge of $L$ and $x_{i+1}$ is a vertex. Then $x_{i+1}$ is contained in the planes that contain $x_i$ and one additional plane $H$. Suppose, without  loss of generality, that $H$ is red. This means that the cardinality of one of the sets $R_p^\pm$ changes by one. Hence, the cardinality of $R_p^\pm \cap R_q^\pm$, for each combination of signs, changes by at most one --- if $H$ contains $q$, nothing changes. It follows that the function $\funR$ changes by at most one, and the function $\funB$ remains unchanged.

  Note that, up to symmetry, only one such transition or its reverse occurs in a single step of $C$. We've shown that each step causes either $\funR$ or $\funB$ (but not both) to change by at most one, and, hence, $\pi(C)$ is a grid curve.

  The computation can be carried out in constant time per incident edge-vertex pair of $C$, since $L$ has been already computed, after a $O(n)$-time initialization that computes $\funR,\funB$ at an arbitrary starting point of $C$ by brute force.
\end{proof}

Lemma~\ref{fact:grid-to-grid} immediately implies the following.
\begin{lemma}
  \label{fact:hit-it}
  If $\overline{\pi(C)}$ meets $\bO$, then some point of $C$ is mapped to $\bO$.
\end{lemma}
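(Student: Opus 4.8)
The plan is to invoke Lemma~\ref{fact:grid-to-grid} directly. By that lemma, $\pi(C)$ is a grid curve, i.e.\ a sequence of lattice points $(i_1,j_1),\dots,(i_t,j_t)$ in which consecutive points either coincide or differ by one unit in a single coordinate. The associated piecewise-linear curve $\overline{\pi(C)}$ is the union of the segments joining consecutive points, so every point of $\overline{\pi(C)}$ lies on such a segment $[(i_\ell,j_\ell),(i_{\ell+1},j_{\ell+1})]$.

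First I would observe that, because $(i_\ell,j_\ell)$ and $(i_{\ell+1},j_{\ell+1})$ differ in at most one coordinate and by at most one unit, the segment connecting them is either a single point or a closed unit segment parallel to an axis. Hence the only lattice point on this segment is $(i_\ell,j_\ell)$ itself (and $(i_{\ell+1},j_{\ell+1})$, if distinct). In particular, if $\bO$ lies on $\overline{\pi(C)}$, then $\bO$ lies on one of these segments, and since $\bO$ is a lattice point it must equal one of the endpoints; that is, $(i_\ell,j_\ell)=\bO$ for some $\ell$.

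Finally, recall that $\pi(C)$ is by definition the image of the grid curve $C=(x_{a_1},x_{a_1})\text{-style}$ list of points $(a_1,b_1),\dots,(a_t,b_t)$ under the map $\pi$, so $(i_\ell,j_\ell)=\pi(a_\ell,b_\ell)$. Thus $(i_\ell,j_\ell)=\bO$ means $\pi(a_\ell,b_\ell)=\bO$, i.e.\ the point $(a_\ell,b_\ell)$ of $C$ is mapped to $\bO$, which is exactly the conclusion. Since this is an entirely elementary observation about lattice segments, there is no real obstacle here; the statement is a routine corollary of the grid-curve structure established in Lemma~\ref{fact:grid-to-grid}, and the only thing to be a little careful about is the degenerate case where consecutive points of $\pi(C)$ coincide (so the segment is a single point), which is handled by the same argument.
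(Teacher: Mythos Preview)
Your argument is correct and is exactly the reasoning the paper intends: it states that the lemma follows immediately from Lemma~\ref{fact:grid-to-grid}, and you have spelled out the elementary lattice-segment observation behind that implication. There is nothing to add.
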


A key property of the triangular grid curve $T$ is the following.
\begin{lemma}
  \label{fact:T}
If $\bO \not\in \pi(T)$, then $w(T)$ is odd.
\end{lemma}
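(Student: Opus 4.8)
The plan is to prove that the piecewise-linear curve $\overline{\pi(T)}$ has odd winding number about $\bO$, which is the meaning of the conclusion ``$w(T)$ is odd'' (cf.\ Step~1 of the algorithm). I would decompose $T$ into its three sides: let $\beta$, $\gamma$, $\delta$ be the piecewise-linear curves obtained by applying $\overline{\pi(\cdot)}$ to the bottom side, the right side, and the diagonal side of $T$ respectively, so that $\overline{\pi(T)}$ runs along $\beta$, then $\gamma$, then $\delta$. The three corners $(x_1,x_1)$, $(x_m,x_1)$, $(x_m,x_m)$ are shared between consecutive sides, so $\beta\ast\gamma$ and $\delta$ are each \emph{closed} curves, and $w(\overline{\pi(T)})=w(\beta\ast\gamma)+w(\delta)$. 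It thus suffices to show $w(\delta)=0$ and that $w(\beta\ast\gamma)$ is odd.

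The heart of the matter is an antipodal relation between $\beta$ and $\gamma$. Recall that $x_1$ (resp.\ $x_m$) is the half-line of $L$ running to infinity in the direction of decreasing (resp.\ increasing) $y$, and that a point on its open edge lies on exactly one red plane and one blue plane. Slicing $\bbR^3$ by the plane $y=d$ as in the proof of Lemma~\ref{lem:L-monotone} and letting $d\to\pm\infty$, one checks that the red plane through $x_1$ and the red plane through $x_m$ are both the dual of the red point with the median $y$-coordinate (the $y$-coordinates of the red points being distinct by general position) --- hence the same plane --- and that the red planes lying above $x_1$ are precisely those lying below $x_m$, and conversely; the analogous statements hold for the blue planes. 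Thus $R^{\pm}_{x_m}=R^{\mp}_{x_1}$ and $B^{\pm}_{x_m}=B^{\mp}_{x_1}$, so unwinding the definitions of $\funR$ and $\funB$ gives $\funR(x_i,x_m)=-\funR(x_i,x_1)$ and $\funB(x_i,x_m)=-\funB(x_i,x_1)$ for every $i$; together with the symmetry $\funR(p,q)=\funR(q,p)$ (and likewise for $\funB$) this says exactly that $\gamma(t)=-\beta(t)$ under the natural parametrizations. Moreover, a point $p$ on the open edge $x_1$ lies on one red and one blue plane, so $\funR(x_1,x_1)=|R^+_p|+|R^-_p|=4k+2$ and $\funB(x_1,x_1)=4k+2$; hence $\pi(x_1,x_1)=\pi(x_m,x_m)=(4k+2,4k+2)=:P_0$ and $\pi(x_m,x_1)=\pi(x_1,x_m)=-\pi(x_1,x_1)=-P_0$. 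Therefore $\beta$ runs from $P_0$ to $-P_0$, and the closed loop $\ell:=\beta\ast\gamma$ satisfies $\ell(t+\tfrac12)=-\ell(t)$.

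It remains to handle the diagonal and the loop $\ell$. At a true diagonal grid point $(x_\ell,x_\ell)$ we have $\funR(x_\ell,x_\ell)=|R^+_{x_\ell}|+|R^-_{x_\ell}|$, which equals $4k+3$ minus the number of red planes through $x_\ell$; since $x_\ell$, being an edge or a vertex of $L$, lies on at most two red planes, $\funR(x_\ell,x_\ell)\ge 4k+1$, and likewise $\funB(x_\ell,x_\ell)\ge 4k+1$. Each interpolating ``patch'' point $(x_{\ell-1},x_\ell)$ is a single grid step from $(x_\ell,x_\ell)$, so by Lemma~\ref{fact:grid-to-grid} its image under $\pi$ differs from $\pi(x_\ell,x_\ell)$ in only one coordinate and by at most one. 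Hence every vertex of $\delta$, and therefore all of $\delta$ by convexity, lies in the convex set $U:=\{(a,b)\in\bbR^2:a\ge 4k,\ b\ge 4k\}$, which avoids $\bO$ since $k\ge 1$; so $w(\delta)=0$. Finally, $\bO\notin\overline{\pi(T)}\supseteq\ell$, so $\widehat\ell:=\ell/\norm{\ell}\colon\bbS^1\to\bbS^1$ is well defined and satisfies $\widehat\ell(-z)=-\widehat\ell(z)$; lifting $\widehat\ell$ to a map $\tilde\ell\colon\bbR\to\bbR$ of universal covers (so that $\widehat\ell(e^{2\pi i t})=e^{2\pi i\tilde\ell(t)}$), this oddness forces $\tilde\ell(t+\tfrac12)=\tilde\ell(t)+\tfrac12+n$ for a fixed $n\in\bbZ$, whence $w(\ell)=\deg\widehat\ell=\tilde\ell(1)-\tilde\ell(0)=2n+1$ is odd (equivalently, this is the one-dimensional case of Proposition~\ref{prop:borsuk_ulam}). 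Combining, $w(\overline{\pi(T)})=w(\ell)+w(\delta)$ is odd.

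The step I expect to be the main obstacle is the antipodal identity $\gamma=-\beta$: it requires pinning down exactly which red and blue planes lie above and below each of the two half-line ends of $L$, by examining the two median levels in the slices $y=d$ as $d\to\pm\infty$. Everything else is short.
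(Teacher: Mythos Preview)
Your argument is essentially the paper's: show that the diagonal side of $T$ maps into a region far from $\bO$ (hence contributes winding number zero), that the bottom and right sides are mapped antipodally to each other, and hence that their concatenation, being an odd self-map of $\bbS^1$, has odd winding number.

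One small correction: your claim that the red plane through $x_1$ and $x_m$ is the dual of the red point with \emph{median $y$-coordinate} is not quite right, because the half-lines $x_1,x_m$ need not be parallel to the $y$-axis (along $x_m$ the $x$-coordinate grows linearly with $y$, so the asymptotic ordering of the red planes is by $a_i\alpha+b_i$ for the slope $\alpha$ of $x_m$, not by $b_i$ alone). The fact you actually need, and which the paper simply asserts, is that $x_1$ and $x_m$ are the two halves of the \emph{same} line: going from one end of a line to the other reverses the ordering of the $4k+3$ red planes, and the median of an odd list is invariant under reversal, so the median red plane (and likewise the median blue plane) at the $-y$ end of the line carrying $x_m$ is the same as at the $+y$ end; by $y$-monotonicity of $L$ that end must be $x_1$. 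With this fix your antipodal identities $R^{\pm}_{x_1}=R^{\mp}_{x_m}$ and $B^{\pm}_{x_1}=B^{\mp}_{x_m}$ follow, and the rest of your proof goes through verbatim.
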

\begin{proof}
  Let $N \coloneqq 4k+2$, and let $H, V, D$ be the images (under $\pi$) of the horizontal, vertical, diagonal sides of $T$, respectively. Note that $\pi(T)$ is the concatenation of $H, V,$ and $D$ in that order.
  
  Observe that if $p = q = x_i$ with $i \in [m]$, then $|R_p^+ \cap R_q^-| = |R_p^- \cap R_q^+| = 0$. Hence, $X(x_i, x_i) \in \{ 4k+1, 4k+2 \}$ depending on whether $x_i$ is contained in one or two red planes. Similarly, $Y(x_i, x_i) \in \{ 4k+1, 4k+2 \}$. Hence $\pi(x_i, x_i) \in \{ (N, N), (N-1, N-1)\}$ and, in particular, $\pi(x_i, x_i) = (N, N)$ if $x_i$ is an edge. Along with Lemma~\ref{fact:grid-to-grid}, this implies that the grid curve $D$ is a closed walk on the points in $\{N-2, N-1, N, N+1\}^2 \setminus \{ \bO \}$ starting and ending at the point $(N, N)$.
    
  Noting that $x_1$ and $x_m$ are half-lines contained in the same red plane, and that every red plane that lies above $x_1$ lies below $x_m$ and vice versa, we obtain $\pi(x_m, x_1) = (-N, -N)$. Hence, $H$ is a grid curve from the point $(N, N)$ to $(-N, -N)$ and $V$ is a grid curve from the point $(-N, -N)$ to $(N, N)$.
   
  The discussion above implies that $w(T)$ is equal to the winding number of 
  the concatenation of $V$ and $H$. We argue below that $V$ is the image of $H$ under a rotation by 180\textdegree{} around the origin, i.e., the map $(x,y) \mapsto (-x,-y)$. 
  Since, by assumption, neither $H$ nor $V$ contain $\bO$, the concatenation of $H$ and $V$ has odd winding number as claimed.

  Specifically, we need to show that $\pi(x_i, x_1) = -\pi(x_m, x_i)$. Since $\pi$ is symmetric in the two arguments, it suffices to show that $\pi(x_1, x_i) = -\pi(x_m, x_i)$. As mentioned before, every plane that lies above $x_1$ lies below $x_m$ and vice versa. That is, $R_{x_1}^+ = R_{x_m}^-$ and $R_{x_1}^- = R_{x_m}^+$, and similarly $B_{x_1}^+ = B_{x_m}^-$ and $B_{x_1}^- = B_{x_m}^+$. The claim is now obvious from the definition of the functions $\funR$ and $\funB$. 
\end{proof}

\begin{lemma}
\label{fact:promisingmapstozero}
  If $w(\pi(C))$ is odd, then there is a point $(i, j)\in\bbZ^2$ enclosed by $\lineC$ with $\pi(i, j) = \bO$.
\end{lemma}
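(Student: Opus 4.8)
The plan is to argue by contradiction via a cell‑by‑cell computation of the winding number (a discrete Stokes‑type argument). Suppose, for contradiction, that no integer point enclosed by $\lineC$ is sent to $\bO$ by $\pi$. Recall that throughout the algorithm $C$ is a simple closed grid curve (it is produced from $T$ by Lemma~\ref{fact:curve-splitting}), so by the Jordan curve theorem $\lineC$ bounds a compact region $D$; and since $\lineC$ lies in the edge‑skeleton of the integer grid, $D$ is a union of unit cells. First I would record a reduction: since $w(\pi(C))$ is odd it is in particular well‑defined, so $\overline{\pi(C)}$ --- a concatenation of unit axis‑parallel lattice segments --- avoids $\bO$; hence no vertex of $\pi(C)$ equals $\bO$, i.e.\ no integer point lying on $\lineC$ maps to $\bO$. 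Combined with the contradiction hypothesis, no integer point of $D$ at all maps to $\bO$.

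The core of the argument is the claim that $w(\pi(\partial Q)) = 0$ for every unit cell $Q \subseteq D$. Oriented counterclockwise, $\partial Q$ is a closed grid curve of length $4$, so by Lemma~\ref{fact:grid-to-grid} its image $\pi(\partial Q)$ is a closed grid curve with at most four steps; its (at most four) vertices are integer points of $D$, hence distinct from $\bO$, so $\overline{\pi(\partial Q)}$ avoids $\bO$ and $w(\pi(\partial Q))$ is defined. Now $\overline{\pi(\partial Q)}$ is built from at most four unit grid edges, so the union of the bounded faces of $\bbR^2\setminus\overline{\pi(\partial Q)}$ is a polyomino whose boundary consists of at most four unit grid edges, hence is either empty or a single unit cell; since the interior of a unit cell contains no integer point, $\bO$ lies in the unbounded face and $w(\pi(\partial Q)) = 0$. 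What I want to emphasize is that this step relies essentially on $\pi$ mapping a single grid step to a single grid step (Lemma~\ref{fact:grid-to-grid}): a cell boundary is a loop of length $4$, so its image is a loop of length at most $4$, which is simply too short to wind around a lattice point.

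To conclude, I would assemble these local vanishings. Writing $D = Q_1 \cup \dots \cup Q_N$ and orienting each $\partial Q_i$ counterclockwise, the formal sum $\sum_{i=1}^N \partial Q_i$ equals $\lineC$ once every interior grid edge --- traversed once in each direction --- is cancelled. Applying $\pi$ edge by edge commutes with concatenation and with this cancellation (an edge and its reverse map to a path and its reverse), and each cancelled piece is a unit lattice segment missing $\bO$, so deleting it does not change the winding number about $\bO$; since winding number about $\bO$ is additive under concatenation of $\bO$‑avoiding curves, $w(\pi(C)) = \pm\sum_{i=1}^N w(\pi(\partial Q_i)) = 0$ by the claim. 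This contradicts $w(\pi(C))$ being odd, so some integer point enclosed by $\lineC$ must be mapped to $\bO$.

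I expect the per‑cell vanishing claim to be the only real obstacle; the remainder is standard discrete‑topology bookkeeping. The subtlety to watch is that all intermediate winding numbers used in the additivity step ($w(\pi(\partial Q_i))$, and those of the cancelled back‑and‑forth arcs) are well‑defined --- which is precisely why the first move is to rule out \emph{any} integer point of the closed region $D$ (not merely its interior) mapping to $\bO$.
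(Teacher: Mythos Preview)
Your proof is correct and follows essentially the same approach as the paper: decompose the region enclosed by $\lineC$ into unit grid cells, use additivity of winding number under this decomposition, and argue that the image under $\pi$ of a single cell boundary---being a closed grid walk of length at most four---cannot wind around the lattice point $\bO$. The paper phrases the per-cell step as an explicit case enumeration (its Figure~\ref{fig:square}) yielding ``winding number cannot be odd,'' whereas you give the slightly cleaner observation that such a short loop encloses at most one open unit cell and hence no lattice point, giving winding number exactly zero; the content is the same.
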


\begin{proof}
A \emph{grid square} $S$ is a simple closed grid curve of the form 
\[
  (i, j), (i+1, j), (i+1, j+1), (i, {j+1}), ({i}, {j})
\]
with $(i, j) \in \bbZ^2$.
A \emph{square} is $\overline{S}$ for some grid square $S$. If there is a grid square $S$ enclosed by $\lineC$ such that $\pi(S)$ meets $\bO$, then we are done by Lemma~\ref{fact:hit-it}. Otherwise, note that $\overline{\pi(C)}$ is the sum of the images of the corresponding squares. Hence, there is a grid square $S$ with $w(\pi(S))$ odd. By Lemma~\ref{fact:grid-to-grid}, $\pi(S)$ is a grid curve. By enumerating all possibilities (see Fig.~\ref{fig:square}), we conclude that $w(\pi(S))$ cannot be odd.
\end{proof}

\begin{figure}
  \centering
  \includegraphics[width=0.9\textwidth]{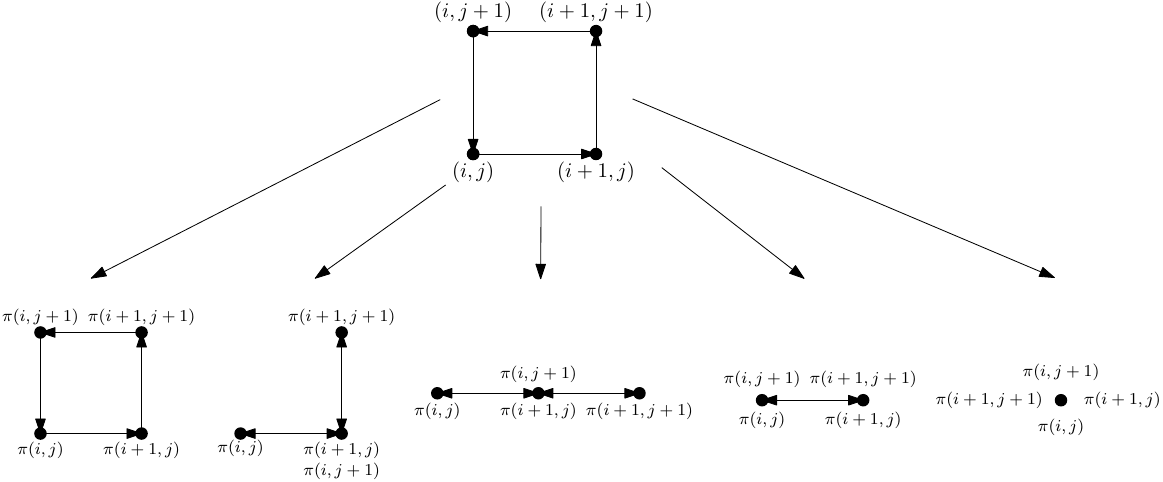}
  \caption{Up to symmetries, the different possibilities for the image under $\pi$ of a grid square $S$, which is always always a grid curve in $\bbZ^2$, by Lemma~\ref{fact:grid-to-grid}.  Note that the image cannot have odd winding number.}
  \label{fig:square}
\end{figure}

Next, we show how to decompose a curve $C$.  We restrict our attention to ``trapezoidal'' curves: Such a curve is the boundary of the intersection of the region bounded by the initial triangle $T$ with a grid-aligned rectangle.  This property is maintained inductively.
\begin{lemma}
\label{fact:curve-splitting}
  Given a trapezoidal curve $C$ whose image misses $\bO$, with $w(\pi(C))$ odd, we can construct two trapezoidal curves $C_1$ and $C_2$ so that
  \begin{enumerate}[(i)]
    \item \label{count:partitioning} the region $R$ surrounded by $\lineC$ is partitioned into region $R_1$ surrounded by $\lineC_1$ and region $R_2$ surrounded by $\lineC_2$.
    \item \label{count:reduction} $\area(R_1),\area(R_2)\le c \cdot \area(R)$, for an absolute constant $c<1$.
    \item \label{count:winding} either $\bO$ is in the image of $C_1$ and $C_2$ or $w(\pi(C))=w(\pi(C_1)) + w(\pi(C_2))$.
  \end{enumerate}
\end{lemma}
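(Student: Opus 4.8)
The plan is to split the region $R$ enclosed by $\lineC$ by a single axis-parallel grid line, chosen near the line that halves $\area(R)$, and then to check the three required properties. Throughout, let $\hat Q$ be the smallest grid-aligned rectangle containing $R$, and let $\hat w\ge\hat h$ be its side lengths (exchanging the two coordinate axes if necessary so that the first is the longer one). First I would pin down the shape of $R$. Since $C$ is trapezoidal, $R$ is the intersection of the region bounded by $T$ (Definition~\ref{def:T}) with a grid-aligned rectangle, and replacing that rectangle by $\hat Q$ leaves $R$ unchanged; thus $R$ is the intersection of $\hat Q$ with a half-plane whose bounding line has slope $1$. As $\hat Q$ is the bounding box of $R$, that line cannot separate two opposite sides of $\hat Q$ (otherwise $R$ would fail to touch one of the four sides of $\hat Q$), so it either misses the interior of $\hat Q$ --- whence $R=\hat Q$ is a grid rectangle --- or cuts a right isosceles triangle off one corner of $\hat Q$, with legs of common length $\ell\le\min(\hat w,\hat h)$. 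In either case $\area(R)\ge\hat w\hat h-\frac12\min(\hat w,\hat h)^2\ge\frac12\hat w\hat h=\frac12\area(\hat Q)$; this is the one geometric estimate we need. Finally, by Lemma~\ref{fact:promisingmapstozero} (applied contrapositively), the hypothesis that $w(\pi(C))$ is odd forces $R$ to contain a lattice point in its interior, so $\hat w,\hat h\ge 2$.

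Next I would perform the cut. Let $x=t^{*}$ be the vertical line with $\area(R\cap\{x\le t^{*}\})=\frac12\area(R)$ and let $x=t$ be the nearest grid line. Since a vertical slab of $R$ of width at most $\frac12$ has area at most $\frac12\hat h\le\frac12\area(R)$ (using $\hat w\ge 2$ and the estimate above), the line $x=t$ is interior to $\hat Q$, and, setting $R_1\coloneqq R\cap\{x\le t\}$ and $R_2\coloneqq R\cap\{x\ge t\}$, both $\area(R_1)$ and $\area(R_2)$ differ from $\frac12\area(R)$ by at most $\frac12\hat h$. A half-plane of the form $\{x\le t\}$ or $\{x\ge t\}$ meets a grid rectangle in a grid rectangle, so each $R_i$ is again the intersection of the region bounded by $T$ with a grid rectangle, and hence its boundary $C_i$ --- with the slope-$1$ side patched into a grid walk exactly as for $T$ --- is a trapezoidal grid curve; together with $R=R_1\cup R_2$ this yields property~(\ref{count:partitioning}). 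For property~(\ref{count:reduction}) I would combine $\area(R_i)\le\frac12\area(R)+\frac12\hat h$ with $\area(R)\ge\frac12\hat w\hat h$: this immediately gives $\area(R_i)\le\frac34\area(R)$ when $\hat w\ge 4$, and the finitely many small-box cases $\hat w\in\{2,3\}$ are dispatched by a short direct computation using the shape description of $R$.

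For the winding-number statement~(\ref{count:winding}) I would orient $\lineC,\lineC_1,\lineC_2$ as counterclockwise boundary cycles; then $\lineC_1$ and $\lineC_2$ each traverse the vertical segment on the line $x=t$ exactly once, in opposite directions, so that $\lineC=\lineC_1+\lineC_2$ as $1$-chains. Pushing this forward by $\pi$ and invoking Lemma~\ref{fact:grid-to-grid} gives $\pi(\lineC)=\pi(\lineC_1)+\pi(\lineC_2)$ as $1$-chains in $\bbR^2$. If the $\pi$-image of the cut segment passes through $\bO$ then, being a grid curve, it has a vertex at $\bO$, i.e.\ a lattice point of $C_1$ (equivalently of $C_2$) maps to $\bO$ --- the first alternative of~(\ref{count:winding}), and one may stop as in Lemma~\ref{fact:hit-it}. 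Otherwise $\pi(\lineC_1)$ and $\pi(\lineC_2)$ both avoid $\bO$ and $w(\pi(C))=w(\pi(C_1))+w(\pi(C_2))$.

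The step I expect to be the real obstacle is property~(\ref{count:reduction}): one has to confirm that the area-halving cut, once snapped to the grid, still leaves each of the two pieces with at most a constant fraction of $\area(R)$. The shape dichotomy above --- a grid rectangle, or a grid rectangle with a single corner triangle of controlled size removed --- is precisely what makes this an elementary estimate, and the bound $\area(R)\ge\frac12\area(\hat Q)$ is what lets one absorb the rounding error once $\hat w\ge 4$; the remaining cases $\hat w\in\{2,3\}$ require a finite, routine check. The other two properties are essentially bookkeeping.
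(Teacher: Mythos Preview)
Your argument is correct, but it takes a more elaborate route than the paper's. The paper does not look for the area-bisecting line at all: it simply cuts $R$ by the horizontal grid chord nearest to half its height (or, when the height is~$1$, by the analogous vertical chord), and observes directly that because the row widths of a trapezoidal region drop by at most one from each row to the next, this already gives a constant-fraction bound (the paper quotes $c=3/4$ for even height and $c=5/6$ in the worst odd case, at height~$3$). Your approach --- cut along the longer side at the true area bisector, snap to the nearest grid line, and absorb the rounding error via $\area(R)\ge\frac12\area(\hat Q)$ --- also works and even yields $c=3/4$ once $\hat w\ge 4$, at the cost of the shape analysis and the residual finite check for $\hat w\in\{2,3\}$; the paper's height-halving avoids both. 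One small imprecision in your write-up: the ``diagonal'' side of a trapezoidal curve is the staircase inherited from the grid curve~$T$ (Definition~\ref{def:T}), not a straight slope-$1$ segment, so $R$ is a grid rectangle with a \emph{staircase} region removed from one corner rather than a right isosceles triangle; your inequality $\area(R)\ge\frac12\hat w\hat h$ survives this correction (the bounding-box condition forces the removed staircase to have legs at most $\min(\hat w,\hat h)-1$, hence area at most $\frac12\hat w\hat h$), so correctness is unaffected. Property~(\ref{count:winding}) is argued in essentially the same way in both proofs.
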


\begin{proof}
  We already noted that the image of a grid square cannot have odd winding number, therefore $R$ is not a grid square.
  As long as $R$ is at least two units high, divide it by a horizontal grid chord into two near-equal-height pieces (that is, the two parts have equal height, or the lower one is one smaller)
  producing two regions $R_1$ and $R_2$. The curves $C_1$ and $C_2$ are the boundaries of the regions (refer to Fig.~\ref{fig:regions}).  If the height of $R$ is one, perform a similar partition by a vertical chord into to near-equal-width pieces.
\begin{figure}
  \centering
  \includegraphics[width=.5\textwidth]{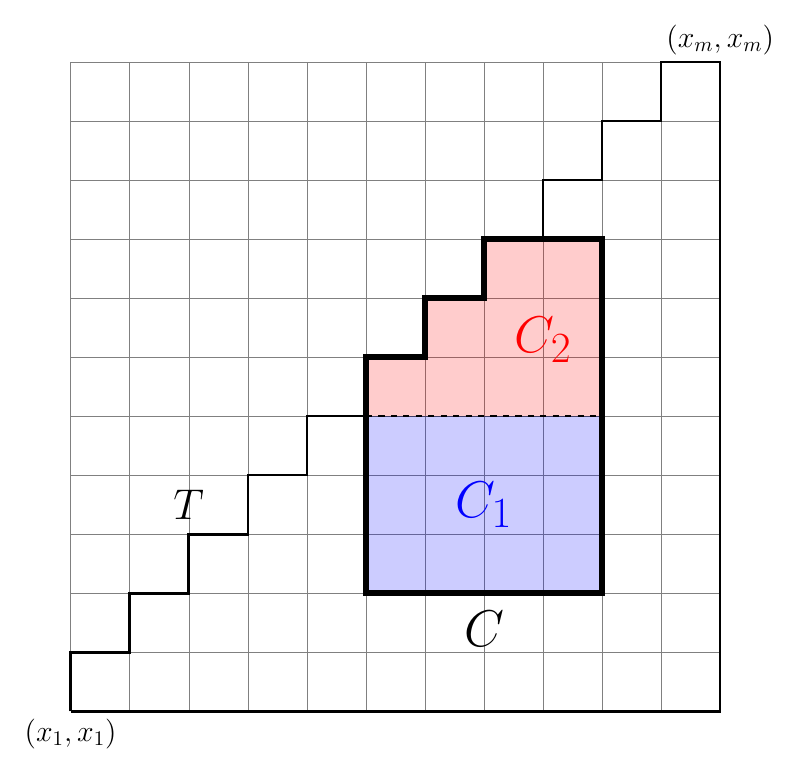}
  \caption{Curve $C$; the blue region is bounded by $C_1$, and the red by $C_2$, with the horizontal dividing chord drawn dashed.} 
  \label{fig:regions}
\end{figure}

  Property (\ref{count:partitioning}) is satisfied by construction.  If the image of the new chord misses $\bO$, then both $C_1$ and $C_2$ avoid $\bO$ and property (\ref{count:winding}) follows from the properties of the winding number on the plane.  Finally, an easy calculation shows that, if the split height/width is even, then each part contains at most $3/4$ of the original area; this fraction can rise to $5/6$ if $R$ has odd width or length (the extreme case is achieved at width/length of three), which proves property~(\ref{count:reduction}).
\end{proof}

\begin{lemma}
  \label{fact:winding}
  Given a simple closed grid curve $C$ in $[m]^2$ we can determine whether $\pi(C)$ contains a zero.  If not, we can compute $w(\pi(C))$, all in time $O(|C|+n)$.
\end{lemma}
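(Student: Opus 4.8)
The plan is to reduce the whole computation to one linear-time pass over the image curve $\pi(C)$, relying on Lemma~\ref{fact:grid-to-grid}.

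First I would apply Lemma~\ref{fact:grid-to-grid} to compute $\pi(C)$ in time $O(n+|C|)$; it is a closed grid curve with $|C|$ vertices. I then scan its vertices once: if some vertex equals $\bO$, I report that $\pi(C)$ contains a zero, and this vertex is the image under $\pi$ of a point of $C$ (so we have actually exhibited a zero, as in Lemma~\ref{fact:hit-it}). This scan costs $O(|C|)$ and disposes of the first assertion.

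If no vertex of $\pi(C)$ equals $\bO$, then the polygonal curve $\overline{\pi(C)}$ avoids $\bO$ entirely, since each of its edges is a grid segment of length at most one and therefore has no lattice point in its interior; hence $w(\pi(C))$ is well defined. To compute it in $O(|C|)$ time using only integer comparisons, I would use the classical quadrant-sweep: assign to each vertex $(a,b)\neq\bO$ a quadrant label in $\{Q_0,Q_1,Q_2,Q_3\}$, where $Q_0=\{a>0,\,b\ge 0\}$, $Q_1=\{a\le 0,\,b>0\}$, $Q_2=\{a<0,\,b\le 0\}$, $Q_3=\{a\ge 0,\,b<0\}$, numbered counterclockwise. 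Since consecutive vertices of $\pi(C)$ differ by at most one grid step and the curve misses $\bO$, the labels of consecutive vertices are either equal or cyclically adjacent: a single unit step between two lattice points, neither of which is $\bO$, can never move between two opposite quadrants. Consequently each step contributes a well-defined increment in $\{-1,0,+1\}$ (namely $0$, $+1$ for a counterclockwise quarter turn, $-1$ for a clockwise one), and one checks that the sum of these increments over the closed curve is divisible by $4$; that sum divided by $4$ equals $w(\pi(C))$. Accumulating the increments is a single $O(|C|)$ pass, so the total running time is $O(n+|C|)$.

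I do not expect a real obstacle here; the one point that needs a moment's care is the observation that along a grid curve avoiding the origin the quadrant label never jumps by two in a single step, which is precisely what makes the per-step increment unambiguous and guarantees that the running sum is a multiple of $4$. An equally linear-time alternative is a signed ray-crossing count against a ray emanating from $\bO$ with an infinitesimal upward tilt off the positive $x$-axis, chosen so as to miss every lattice point of $\pi(C)$; I would present whichever variant reads more cleanly.
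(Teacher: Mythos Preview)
Your proposal is correct and follows essentially the same approach as the paper: invoke Lemma~\ref{fact:grid-to-grid} to trace $\pi(C)$ in $O(n+|C|)$ time, scan for a vertex equal to $\bO$, and otherwise compute the winding number by a single linear pass. The only difference is cosmetic: the paper uses the signed ray-crossing count against the positive $x$-axis (with the bookkeeping for segments lying on the axis that you allude to in your final sentence), whereas your primary write-up uses the quadrant-sweep, which sidesteps that bookkeeping via the observation that a unit grid step avoiding $\bO$ cannot jump between opposite quadrants.
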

\begin{proof}
  By Lemma~\ref{fact:grid-to-grid},
  we can trace $\pi(C)$ step by step and, in particular, detect whether $\bO\in\pi(C)$. So suppose this is not the case.

  Consider the (open) ray $\rho$ from the origin directed to the right in~$\bbZ^2$. To determine the winding number
  of the curve $\overline{\pi(C)}$ not passing through the origin, it's sufficient to count how many times the curve crosses the ray $\rho$.  
  We can compute the number of times $\overline{\pi(C)}$ crosses $\rho$ by computing $\pi$ for every vertex of $C$ in order and counting the number of times $(X,0)$ occurs along it, with $X>0$.
  
    As $\overline{\pi(C)}$ may partially overlap~$\rho$, we need to check whether $\overline{\pi(C)}$ arrives at $(X,0)$ with $X>0$ from below the $\funR$-axis and (possibly after staying on the axis for a while) departs into the region above $\funR$-axis, or vice versa.  That would count as a signed crossing.  Arriving from below and returning below, or arriving from above and returning above, does not count as a crossing.

  All of the above calculations can be done in time $O(1)$ per step of $\pi(C)$, after proper initialization, by Lemma~\ref{fact:grid-to-grid}.
\end{proof}

\subparagraph*{Running time}
We now analyze the running time of the algorithm we described. We can traverse a length-$O(m)$ closed grid curve $C$, compute its image $\pi(C)$, and check whether it passes through the origin in time $O(m+n)$ by Lemma~\ref{fact:grid-to-grid}.  One can check whether $\pi(C)$ winds around the origin an odd number of times by Lemma~\ref{fact:winding}, also in time $O(m+n)$.

The number of rounds of the main loop of the algorithm is $O(\log m)$, as the starting curve cannot enclose an area larger than $O(m^2)$ and areas shrink by a constant factor in every iteration, by Lemma~\ref{fact:curve-splitting}.
Combining everything together, we conclude that $L$ can be computed in $O^*(n+m)$ time, and the algorithm  can then identify the pair of vertices of $L$ corresponding to an eight-partition in at most $O(\log m)$ rounds, each costing at most $O(m+n)$.
This concludes the proof of Theorem~\ref{thm:alg-detailed}.

\bibliography{ref-pure.bib}

\begin{thebibliography}{10}

\bibitem{dynamic-AgMat}
Pankaj~K. Agarwal and Jir{\'{\i}} Matousek.
\newblock Dynamic half-space range reporting and its applications.
\newblock {\em Algorithmica}, 13(4):325--345, 1995.

\bibitem{andrzejak1998results}
Artur Andrzejak, Boris Aronov, Sariel Har-Peled, Raimund Seidel, and Emo Welzl.
\newblock Results on $k$-sets and $j$-facets via continuous motion.
\newblock In {\em Proceedings of the 14th Annual Symposium on Computational
  Geometry}, pages 192--199, 1998.

\bibitem{socg-version}
Boris Aronov, Abdul Basit, Indu Ramesh, Gianluca Tasinato, and Uli Wagner.
\newblock {Eight-Partitioning Points in 3D, and Efficiently Too}.
\newblock In {\em 40th International Symposium on Computational Geometry (SoCG
  2024)}, pages 8:1--8:15, 2024.

\bibitem{av}
David Avis.
\newblock Non-partitionable point sets.
\newblock {\em Information Processing Letters}, 19(3):125--129, 1984.

\bibitem{blagojevic_topology_2018}
Pavle Blagojevi{\'c}, Florian Frick, Albert Haase, and G{\"u}nter Ziegler.
\newblock Topology of the {Gr{\"u}nbaum--Hadwiger--Ramos} hyperplane mass
  partition problem.
\newblock {\em Transactions of the American Mathematical Society},
  370(10):6795--6824, 2018.

\bibitem{blag_karasev_2016}
Pavle V.~M. Blagojevi{\'c} and Roman Karasev.
\newblock Partitioning a measure in {$\mathbb{R}^3$}.
\newblock Manuscript, 2016.

\bibitem{TMChan}
Timothy~M. Chan.
\newblock A dynamic data structure for 3-{D} convex hulls and 2-{D} nearest
  neighbor queries.
\newblock {\em J. {ACM}}, 57(3):16:1--16:15, 2010.

\bibitem{DLGMM}
Jes\'{u}s~A. De~Loera, Xavier Goaoc, Fr\'{e}d\'{e}ric Meunier, and Nabil~H.
  Mustafa.
\newblock The discrete yet ubiquitous theorems of {C}arath\'{e}odory, {H}elly,
  {S}perner, {T}ucker, and {T}verberg.
\newblock {\em Bulletin of the American Mathematical Society}, 56(3):415--511,
  2019.

\bibitem{dey1998}
Tamal~K Dey.
\newblock Improved bounds for planar $k$-sets and related problems.
\newblock {\em Discrete \& Computational Geometry}, 19:373--382, 1998.

\bibitem{edelsbrunner1986edge}
Herbert Edelsbrunner.
\newblock Edge-skeletons in arrangements with applications.
\newblock {\em Algorithmica}, 1:93--109, 1986.

\bibitem{E}
Herbert Edelsbrunner.
\newblock {\em Algorithms in Combinatorial Geometry}, volume~10 of {\em {EATCS}
  Monographs on Theoretical Computer Science}.
\newblock Springer, 1987.

\bibitem{gru}
Branko Gr{\"u}nbaum.
\newblock Partitions of mass-distributions and of convex bodies by hyperplanes.
\newblock {\em Pacific Journal of Mathematics}, 10(4):1257--1261, 1960.

\bibitem{H}
H.~Hadwiger.
\newblock Simultane {V}ierteilung zweier {K}\"{o}rper.
\newblock {\em Archiv der Mathematik (Basel)}, 17:274--278, 1966.

\bibitem{halperin2017arrangements}
Dan Halperin and Micha Sharir.
\newblock Arrangements.
\newblock In Jacob~E.\ Goodman, Joseph O'Rourke, and Csaba~D.\ T\'{o}th,
  editors, {\em Handbook of Discrete and Computational Geometry}, pages
  551--580. CRC Press LLC, 3rd edition, 1997.

\bibitem{har2011geometric}
Sariel Har-Peled.
\newblock {\em Geometric Approximation Algorithms}.
\newblock American Mathematical Society, USA, 2011.

\bibitem{knauer2011computational}
Christian Knauer, Hans~Raj Tiwary, and Daniel Werner.
\newblock On the computational complexity of {Ham-Sandwich} cuts, {Helly} sets,
  and related problems.
\newblock In {\em STACS'11}, pages 649--660, 2011.

\bibitem{krantz1999handbook}
Steven~George Krantz.
\newblock {\em Handbook of Complex Variables}.
\newblock Springer, 1999.

\bibitem{LMS}
Chi-Yuan Lo, Ji{\v{r}}{\'\i} Matou{\v{s}}ek, and William Steiger.
\newblock Algorithms for ham-sandwich cuts.
\newblock {\em Discrete \& Computational Geometry}, 11(4):433--452, 1994.

\bibitem{matousek_using_2008}
Ji{\v r}{\'\i} Matou{\v s}ek.
\newblock {\em Using the Borsuk--Ulam Theorem}.
\newblock Springer Berlin Heidelberg, 2003.

\bibitem{Meg}
Nimrod Megiddo.
\newblock Partitioning with two lines in the plane.
\newblock {\em Journal of Algorithms}, 6(3):430--433, 1985.

\bibitem{roldan2021survey}
Edgardo Rold{\'a}n-Pensado and Pablo Sober{\'o}n.
\newblock A survey of mass partitions.
\newblock {\em Bulletin of the American Mathematical Society}, 2021.

\bibitem{k-sets-3d}
Micha Sharir, Shakhar Smorodinsky, and G{\'{a}}bor Tardos.
\newblock An improved bound for \emph{k}-sets in three dimensions.
\newblock {\em Discrete \& Computational Geometry}, 26(2):195--204, 2001.

\bibitem{toth2001point}
G{\'e}za T{\'o}th.
\newblock Point sets with many $k$-sets.
\newblock {\em Discrete \& Computational Geometry}, 26(2):187--194, 2001.

\bibitem{Zivaljevic2017Survey}
Rade~T. \v{Z}ivaljevi\'{c}.
\newblock Topological methods.
\newblock In Jacob~E.\ Goodman, Joseph O'Rourke, and Csaba~D.\ T\'{o}th,
  editors, {\em Handbook of Discrete and Computational Geometry}, pages
  551--580. CRC Press LLC, 3rd edition, 1997.

\bibitem{YDEP}
F.~Frances Yao, David~P. Dobkin, Herbert Edelsbrunner, and Mike Paterson.
\newblock Partitioning space for range queries.
\newblock {\em SIAM Journal on Computing}, 18(2):371--384, 1989.

\end{thebibliography}

\appendix
\section{Limit arguments}\label{app:mass_facts}

We prove some standard facts using limit arguments.

\begin{lemma}[Limit argument for finite point sets]
  \label{lem:limiting_lemma}

  Let $X\subseteq \left(\bbS^3 \right)^3$ be a compact subset such that, for all $\mu$ mass distributions (with connected support) on $\bbR^3$ there is a plane configuration $\ccH\in X$ that eight-partitions $\mu$; then for any set $P$ of points in $\bbR^3$, there is a configuration $\ccH_\infty \in X$ that eight-partitions the point set. 
\end{lemma}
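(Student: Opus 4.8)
The plan is to approximate the counting measure on $P$ by mass distributions with connected support, invoke the hypothesis, and then extract a limiting configuration using compactness of $X$. Concretely, I would first fix a closed ball $B$ containing $P$ in its interior and, for each small $\epsilon>0$, replace each point $p\in P$ by the uniform probability measure $\lambda_{B_\epsilon(p)}$ on $B_\epsilon(p)$ (with $\epsilon$ small enough that these balls are pairwise disjoint and lie in $B$), and then add a vanishing amount of ``glue'' to connect the support, setting
\[
  \mu_\epsilon \;\coloneqq\; (1-\epsilon)\cdot\frac{1}{|P|}\sum_{p\in P}\lambda_{B_\epsilon(p)} \;+\; \epsilon\cdot\lambda_B,
\]
where $\lambda_B$ is the uniform probability measure on $B$. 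One checks routinely that $\mu_\epsilon$ is a probability measure, absolutely continuous with respect to Lebesgue measure (so hyperplanes are null and open sets are measurable), with density positive almost everywhere on $B$ and zero outside $B$; hence $\supp(\mu_\epsilon)=B$ is connected and $\mu_\epsilon$ is a mass distribution of the kind to which the hypothesis applies. So for each such $\epsilon$ there is a configuration $\ccH_{v_\epsilon}\in X$ that eight-partitions $\mu_\epsilon$, i.e.\ $\mu_\epsilon(\ccO^{\ccH_{v_\epsilon}}_\alpha)=\frac18$ for all $\alpha\in\ztwo^3$. Letting $\epsilon=1/\ell\to 0$ and using compactness of $X$, I would pass to a subsequence along which $v_{1/\ell}\to v_\infty$, with $\ccH_\infty\coloneqq\ccH_{v_\infty}\in X$, and claim that $\ccH_\infty$ eight-partitions $P$.

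To prove the claim I would argue by contradiction: suppose some open octant $\ccO^{\ccH_\infty}_\alpha$ contains points $p_1,\dots,p_m$ of $P$ with $m>|P|/8$. The tool I would use is a ``continuity from inside'' property of open octants: whenever a point $x$ lies in $\ccO^{\ccH_{v_\infty}}_\alpha$, there are $r>0$ and a neighbourhood $U$ of $v_\infty$ in $(\bbS^3)^3$ with $B_r(x)\subseteq\ccO^{\ccH_v}_\alpha$ for every $v\in U$ (it suffices to establish this for a single open half-space and then intersect three of them). Applying this to each $p_i$ and taking $\ell$ large enough that $v_{1/\ell}$ lies in all the neighbourhoods $U_i$ and also $1/\ell$ is smaller than all the radii $r_i$, we get $B_{1/\ell}(p_i)\subseteq\ccO^{\ccH_{v_{1/\ell}}}_\alpha$ for every $i$, so
\[
  \mu_{1/\ell}\!\left(\ccO^{\ccH_{v_{1/\ell}}}_\alpha\right)\;\ge\;\sum_{i=1}^{m}(1-\tfrac1\ell)\cdot\frac{1}{|P|}\;=\;\frac{m(1-1/\ell)}{|P|}\;>\;\frac18
\]
once $\ell$ is large, contradicting $\mu_{1/\ell}(\ccO^{\ccH_{v_{1/\ell}}}_\alpha)=\frac18$. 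This finishes the proof.

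The step I expect to require the most care is the continuity-from-inside property at \emph{degenerate} limits, i.e.\ when $v_\infty$ has a coordinate corresponding to a plane at infinity ($\pm e_4$): there one cannot speak of ``the limiting affine plane'', but the half-spaces still behave well. If $H_i^{v_\infty,+}=\bbR^3$, then every bounded set is contained in $H_i^{v,+}$ for all $v$ near $v_\infty$, and symmetrically $H_i^{v,+}$ eventually misses every bounded set when $H_i^{v_\infty,+}=\emptyset$; so the argument above goes through verbatim for such $v_\infty$ as well. (Alternatively, one can exclude degenerate limits outright: since $\ccH_{v_\epsilon}$ eight-partitions $\mu_\epsilon$, each of its planes bisects $\mu_\epsilon$ and hence meets the fixed ball $B$, and ``$H_v$ meets $\overline B$'' is a closed condition on $v\in\bbS^3$, so it is inherited by $v_\infty$.) The remaining points --- that $\mu_\epsilon$ is genuinely a mass distribution with connected support, and elementary half-space continuity in the non-degenerate case --- are routine.
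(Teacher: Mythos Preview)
Your proposal is correct and follows essentially the same approach as the paper: approximate the point set by smoothed measures with connected support (balls around the points plus a small ``glue'' term---you use a uniform measure on a large ball, the paper uses a Gaussian), extract a convergent subsequence of eight-partitioning configurations via compactness of $X$, and then verify that the limit configuration eight-partitions $P$ by the same ``eventually the small balls around points of an octant lie entirely in the corresponding octant of the approximating configuration'' argument. If anything, your treatment of the continuity-from-inside step and of degenerate limits (coordinates at $\pm e_4$) is more explicit than the paper's, which glosses over these details.
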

\begin{proof} Let $n$ be the number of points in $P$.
  Let $\mu_i$ be the measure defined as follows. At every point in $P$, place a ball of radius $\epsilon_i = \frac{1}{2^i}$ with uniform density and total mass $(1-\epsilon_i)/n$; finally, add a normal Gaussian distribution ``on the background,'' with total mass $\epsilon_i/n$.
  Note that the total measure $\mu_i$ of the complement of the union of balls is less than $\epsilon_i$.

  By choosing $i$ large enough, we can assume that a plane can intersect a collection of balls only if their centres are coplanar; hence, without loss of generality, we can assume that this happens for $i=1$.

  By assumption, there is a plane configuration $\ccH_i$ that eight-partitions the mass $\mu_i$ for each~$i$; by compactness of $X$, there is a subsequence $\ccH_{i_j}$ that converges to some limit $\ccH_\infty$; up to reindexing we can assume that the original sequence $\ccH_i$ does. The obtained limit point eight-partitions the original point set $P$: in fact, there is a $i_0$ big enough such that for every orthant $\alpha\in \ztwo^3$  and any $m\geq i_0$, every point $p\in P \cap \ccO^{\ccH_\infty}_\alpha$ is ``far away'' (e.g., at least $1/{2^{i_0}}$) from the planes in the configuration $\ccH_m$; hence
  \[
  \frac{1-\epsilon_m}{n}\lvert P\cap \ccO^{\ccH_\infty}_\alpha\rvert \leq \mu_m(\ccO^{\ccH_m}_\alpha) = \frac{1}{8}.  
\]
  Taking the limit in $m$ we obtain the desired result.
\end{proof}
\begin{corollary}
  \label{cor:all_partitions}
  Let $X$ and $P$ as above. If $\ccH_\infty = (H_1, H_2, H_3)$ is the configuration constructed in the proof of Lemma~\ref{lem:limiting_lemma}, then any plane $H_i$ in $\ccH_\infty$ bisects $P$ and any pair $(H_i, H_j)$ four-partitions the points.
\end{corollary}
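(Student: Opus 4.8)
The plan is to run the same limiting argument as in Lemma~\ref{lem:limiting_lemma}, but track the weaker partition conditions along the sequence instead of only the eight-partition condition. First I would recall that by Observation~\ref{obs:alt_sums}(\ref{count:iff}), the fact that $\ccH_i$ eight-partitions $\mu_i$ is equivalent to the vanishing of all the alternating sums $F_\alpha(\ccH_i,\mu_i)=0$ for $\alpha\in\ztwo^3\setminus\{\bplus\}$; in particular, by Observation~\ref{obs:alt_sums}(\ref{count:trivial}), this forces $F_{\alpha_i}((H_1^{(i)},H_2^{(i)}))=0$ and the analogous identities for the other pairs and for single planes, so each $\ccH_i$ already has the property that every plane bisects $\mu_i$ and every pair four-partitions $\mu_i$. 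The point is thus that bisection and four-partition are inherited by the same limit $\ccH_\infty=(H_1,H_2,H_3)$ obtained in the proof of Lemma~\ref{lem:limiting_lemma}, for the discrete point set $P$.

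Next I would repeat, verbatim, the convergence estimate from that proof, but applied to the smaller plane tuples. Fix a pair, say $(H_1,H_2)$, with limit $(H_1^\infty,H_2^\infty)$. For $i_0$ large enough, every point $p\in P$ not on any of the limiting planes lies at distance at least $2^{-i_0}$ from the planes $H_1^{(m)},H_2^{(m)}$ for all $m\geq i_0$; hence each such $p$ lies in the same open quadrant $\ccO^{(H_1^{(m)},H_2^{(m)})}_\beta$ as it does for the limit. Combining with the fact that $(H_1^{(m)},H_2^{(m)})$ four-partitions $\mu_m$ and with the mass accounting already used in Lemma~\ref{lem:limiting_lemma} (each point carries mass $(1-\epsilon_m)/n$, and the background Gaussian has mass $\epsilon_m/n$), we get $\frac{1-\epsilon_m}{n}\,\lvert P\cap \ccO^{(H_1^\infty,H_2^\infty)}_\beta\rvert \leq \mu_m(\ccO^{(H_1^{(m)},H_2^{(m)})}_\beta)=\tfrac14$ for every $\beta\in\ztwo^2$, and letting $m\to\infty$ yields $\lvert P\cap \ccO^{(H_1^\infty,H_2^\infty)}_\beta\rvert\leq n/4$, i.e.\ $(H_1,H_2)$ four-partitions $P$. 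The same argument with the single plane $H_i$ (using that $H_i^{(m)}$ bisects $\mu_m$) gives that $H_i$ bisects $P$. Since the argument is symmetric in the indices, it applies to every plane and every pair.

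The only mild subtlety — and the step I would be most careful about — is that the limiting planes $H_1^\infty,H_2^\infty$ may pass through some points of $P$, so "points not on the limiting planes" does not exhaust $P$; but this only helps, since points on a plane of $\ccH_\infty$ contribute to neither open quadrant, so the open-quadrant counts can only be smaller, and the inequality $\lvert P\cap\ccO_\beta\rvert\le n/4$ is still what we want (note we only claim a partition in the "at most a fraction" sense for point sets, exactly as in the definition given in Section~\ref{sec:topology}). One should also note that the $i$ and the subsequence are the \emph{same} ones chosen in the proof of Lemma~\ref{lem:limiting_lemma}, so no new extraction is needed — we are simply reading off additional properties of the already-constructed $\ccH_\infty$. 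This completes the proof.
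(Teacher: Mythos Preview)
Your proposal is correct and follows essentially the same approach as the paper's proof: you reuse the sequence $(\mu_i,\ccH_i)$ and the limit $\ccH_\infty$ from Lemma~\ref{lem:limiting_lemma}, observe that each $\ccH_i$ already bisects/four-partitions $\mu_i$, and run the identical mass-accounting inequality for single planes and pairs before passing to the limit. The paper's version is slightly terser (it takes $\mu_m(H_{m,1}^\alpha)=\tfrac12$ and $\mu_m(H_{m,1}^{\alpha_1}\cap H_{m,2}^{\alpha_2})=\tfrac14$ as immediate from the eight-partition rather than routing through Observation~\ref{obs:alt_sums}), but the argument is the same.
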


\begin{proof} For simplicity we show the result for the first plane $H_1$ and the pair $(H_1, H_2)$, all the other cases are identical.
  First, construct $\mu_i$ and $\ccH_i = (H_{i, 1}, H_{i, 2}, H_{i, 3})$ converging to the limit $\ccH_\infty$ as in the proof of Lemma~\ref{lem:limiting_lemma}. 
  
  Again, by choosing $i_0$ big enough we obtain that, for any $m\geq i_0$ and any sign $\alpha\in \ztwo$ every point $p\in P\cap H_1^\alpha$ is sufficiently far form $H_{m, 1}$ hence
  \[
    \frac{1-\epsilon_m}{n}\lvert P\cap H_1^\alpha\rvert \leq \mu_m(H_{m, 1}^\alpha) = \frac{1}{2}.
    \]
    Similarly, for any pair of signs $(\alpha_1, \alpha_2)\in \ztwo^2$, any point $p\in P\cap H_1^{\alpha_1}\cap H_2^{\alpha_2}$ is sufficiently far from both $H_{m, 1}$ and $H_{m, 2}$, therefore
    \[
      \frac{1-\epsilon_m}{n}\lvert P\cap H_1^{\alpha_1}\cap H_2^{\alpha_2}\rvert \leq \mu_m(H_{m, 1}^{\alpha_1}\cap H_{m, 2}^{\alpha_2}) = \frac{1}{4}.
      \]
      By taking the limit we obtain the desired result.
    \end{proof}
    
    \begin{lemma}[Limit argument for mass distributions with possibly disconnected support]
      \label{lem:con_support}
      Let $X$ be a compact set in $\left(\bbS^3\right)^3$ such that, for any mass distribution with connected support there is a configuration $\ccH\in X$ that eight-partitions the measure. Let $\mu$ be a ``general'' mass distribution. Then there is a plane arrangement $\ccH_\infty\in X$ that eight-partitions $\mu$.
    \end{lemma}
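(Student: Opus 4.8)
The plan is to mimic the limit arguments already carried out in Lemma~\ref{lem:limiting_lemma} and Corollary~\ref{cor:all_partitions}, the only difference being that the approximating sequence of mass distributions should have connected support rather than being concentrated near a finite point set. First I would construct, from the given ``general'' mass distribution $\mu$, a sequence $\mu_i$ of mass distributions with connected support converging to $\mu$ in the relevant (weak) sense. A clean way to do this: fix a ``background'' mass distribution with connected support whose support is all of $\bbR^3$ — say a normalized Gaussian $\nu$ — and set $\mu_i \coloneqq (1-\epsilon_i)\mu + \epsilon_i \nu$ with $\epsilon_i = 2^{-i}$. Each $\mu_i$ has connected support (indeed $\supp(\mu_i)=\bbR^3$, since adding $\epsilon_i\nu$ fills in everything), every hyperplane still has measure zero under $\mu_i$, and $\mu_i \to \mu$ as $i\to\infty$ with $\mu_i(\bbR^3) = \mu(\bbR^3)$ for all $i$ (after normalizing $\nu(\bbR^3)=\mu(\bbR^3)$, or just keeping total masses matched).

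Next, by hypothesis there is for each $i$ a configuration $\ccH_i \in X$ that eight-partitions $\mu_i$. By compactness of $X\subseteq(\bbS^3)^3$, pass to a convergent subsequence $\ccH_{i_j}\to \ccH_\infty\in X$; after reindexing assume $\ccH_i\to\ccH_\infty$. It remains to check that $\ccH_\infty=(H_1,H_2,H_3)$ eight-partitions $\mu$ itself. Fix a sign vector $\alpha\in\ztwo^3$. The open orthant $\ccO^{\ccH_i}_\alpha$ converges to $\ccO^{\ccH_\infty}_\alpha$ in the sense that, for any compact set $K\subseteq\ccO^{\ccH_\infty}_\alpha$, we have $K\subseteq \ccO^{\ccH_i}_\alpha$ for all large $i$ (the planes move continuously, and $K$ is bounded away from the limiting planes), and conversely the open orthants $\ccO^{\ccH_i}_\alpha$ are eventually contained in any open neighbourhood of $\overline{\ccO^{\ccH_\infty}_\alpha}$. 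Combining these with $\mu_i\to\mu$ and the fact that each bounding plane of $\ccH_\infty$ is $\mu$-null, one gets $\mu(\ccO^{\ccH_\infty}_\alpha) \le \liminf_i \mu_i(\ccO^{\ccH_i}_\alpha)$ and $\mu(\ccO^{\ccH_\infty}_\alpha)\ge \limsup_i \mu_i(\ccO^{\ccH_i}_\alpha)$, hence $\mu(\ccO^{\ccH_\infty}_\alpha)=\lim_i \mu_i(\ccO^{\ccH_i}_\alpha)=\mu(\bbR^3)/8$. Since $\alpha$ was arbitrary, $\ccH_\infty$ eight-partitions $\mu$.

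For the lower-semicontinuity direction I would use: for $K\subseteq \ccO^{\ccH_\infty}_\alpha$ compact, $\mu(K)=\lim_i\mu_i(K)\le \liminf_i\mu_i(\ccO^{\ccH_i}_\alpha)$ for large $i$, then take sup over such $K$ using inner regularity (a mass distribution is a finite Borel measure on $\bbR^3$, hence Radon). For the upper direction, given $\delta>0$ choose an open set $U\supseteq\overline{\ccO^{\ccH_\infty}_\alpha}$ with $\mu(U)\le \mu(\overline{\ccO^{\ccH_\infty}_\alpha})+\delta = \mu(\ccO^{\ccH_\infty}_\alpha)+\delta$ (using that the boundary planes are $\mu$-null and outer regularity); then $\ccO^{\ccH_i}_\alpha\subseteq U$ for $i$ large, so $\mu_i(\ccO^{\ccH_i}_\alpha)\le \mu_i(U)\to\mu(U)\le \mu(\ccO^{\ccH_\infty}_\alpha)+\delta$, and let $\delta\to 0$. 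The only genuine subtlety — and the step I expect to require the most care — is the set-convergence bookkeeping for the orthants together with the interplay of the two measures converging and the domains moving; everything else is bracketing with the null-plane hypothesis and standard regularity of finite Borel measures. Note also one must double-check that planes ``at infinity'' do not arise in the limit, but this is handled exactly as in the excerpt: such a configuration would leave an empty orthant, contradicting that $\ccH_i$ eight-partitions $\mu_i$ and $\mu_i\to\mu$ with $\mu$ a genuine mass distribution; alternatively one restricts $X$ to configurations of genuine (finite) planes, which is where compactness of $X$ is used.
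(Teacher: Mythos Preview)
Your construction of the approximating sequence $\mu_i=(1-\epsilon_i)\mu+\epsilon_i\nu$ and the compactness extraction of $\ccH_\infty$ match the paper exactly. The lower-semicontinuity half of your limit argument is fine. The genuine gap is in the upper-semicontinuity step: the claim that, for an open $U\supseteq\overline{\ccO^{\ccH_\infty}_\alpha}$, one has $\ccO^{\ccH_i}_\alpha\subseteq U$ for all large $i$ is false, because orthants are unbounded. For a single plane already, take $H_\infty=\{z=0\}$, $U=\{z>-\epsilon\}$, and $H_i=\{z=x/i\}$; then $H_i\to H_\infty$ in $\bbS^3$, yet $H_i^+$ contains points with $z$ arbitrarily negative (take $x\ll 0$), so $H_i^+\not\subseteq U$ for any $i$. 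Thus your route to $\mu(\ccO^{\ccH_\infty}_\alpha)\ge 1/8$ does not go through as written.

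Fortunately you do not need that direction at all, and this is precisely where the paper's proof is simpler. From $(1-\epsilon_i)\mu\le\mu_i$ one gets directly $\mu(\ccO^{\ccH_i}_\alpha)\le \tfrac{1}{8(1-\epsilon_i)}$, and since for the \emph{fixed} measure $\mu$ the map $\ccH\mapsto\mu(\ccO^{\ccH}_\alpha)$ is continuous (indicator functions converge $\mu$-a.e.\ off the limit planes, then dominated convergence), passing to the limit yields $\mu(\ccO^{\ccH_\infty}_\alpha)\le \tfrac18$ for every $\alpha$. Now sum over $\alpha\in\ztwo^3$: the eight orthant masses add up to $\mu(\bbR^3)=1$, so all inequalities are forced to be equalities. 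Your own lower-semicontinuity argument already delivers the same $\le\tfrac18$, so you could have stopped there and invoked the sum-to-one trick; either way, the flawed upper bound is unnecessary. (This also disposes of the ``plane at infinity'' worry: if some limit plane were at infinity, one orthant would be empty and the eight masses could sum to at most $\tfrac78$.)
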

    \begin{proof}
      Define $\epsilon_i \coloneq \frac{1}{2^i}$ and let $\mu_i$ the measure defined, on a measurable set $A\subseteq \bbR^3$, as
      \[
        \mu_i(A)\coloneqq \left(1-\epsilon_i\right)\mu(A) + \epsilon_i \mathcal{N}(A),
      \]
      where $\mathcal{N}$ is a normal Gaussian distribution on $\bbR^3$. Then, $\mu_i$ is a mass distribution with connected support hence there is a configuration $\ccH_i$ that eight-partitions $\mu_i$.  By compactness, up to taking a subsequence, $\ccH_i$ converges to a configuration $\ccH_\infty$. 
      
      Now, for any $\alpha\in \bbZ_2^3$, we have that
      \[
        \left(1-\epsilon_i \right)\mu(\ccO^{\ccH_i}_\alpha) \leq \mu_i(\ccO^{\ccH_i}_\alpha) = \frac18.
      \]
    
      For any fixed measure $\tilde \mu$, the map $\ccH \to \tilde \mu(\ccO^{\ccH}_\alpha)$ is continuous; hence by taking the limit in $i$ we obtain that, for all $\alpha\in \bbZ_2^3$,
      \[
        \mu(\ccO^{\ccH_\infty}_\alpha) \leq \frac18.
      \]
      However, since
      \[
        \sum_{\alpha\in \bbZ_2^3} \mu(\ccO^{\ccH_\infty}_\alpha) = \mu(\bbR^3) = 1,
      \]
      it follows that all the previous inequalities are equalities.
    \end{proof}

\section{Four-partitioning in the plane with a prescribed bisector}
\label{app:planepartitions}

This section is devoted to the proof of Lemma~\ref{lem:2dpartition}.  For convenience, we restate the lemma here. Both the lemma and proof are due to~\cite{blag_karasev_2016}. 
\blagkarasev*
\begin{proof}
Suppose, without loss of generality, that $v = (0, 1)$. We first prove existence. Let $\alpha \in [0, \pi/2]$, and  rotate $v$ counterclockwise and clockwise by angle $\alpha$ to obtain $u_\alpha$ and $w_\alpha$, respectively. Then, since the measure has connected support, there exist unique lines $\ell_\alpha$ and $m_\alpha$ that are perpendicular to $u_\alpha$ and $w_\alpha$, respectively, and bisect $\mu$. Note that $v$ bisects the angle between $\ell_\alpha$ and $m_\alpha$.

The (oriented) lines $\ell_\alpha$ and $m_\alpha$ partition the plane into four octants, which we label $P_N , P_E , P_S , P_W$ (north, east, south, west) in the obvious manner. Since both lines are bisecting, we have
\[
\mu(P_N) = \mu(P_S) = x,\quad \mu(P_W) = \mu(P_E) = \mu(\bbR^2)/2 - x = y.
 \]
When $\alpha$ tends to 0, $P_W$ and $P_E$ tend to empty sets and evidently $x > y$ for $\alpha$ sufficiently close to $0$.
When $\alpha$ tends to $\pi/2$, $P_N$ and $P_S$ tend to empty sets and then $x < y$ for $\alpha$ sufficiently close to $\pi/2$.
Since $x$ depends continuously on $\alpha$, we must have $x = y$ somewhere in between, by the intermediate value theorem. Thus, we have existence.

We now show uniqueness. Assume we have a partition $P_N, P_E, P_S, P_W$ with angle $\alpha$ and another partition $Q_N, Q_E, Q_S, Q_W$
with angle $\alpha'$. Assume without loss of generality that $\alpha' \leq \alpha$. Moreover, since for $\alpha' = \alpha$ the partition is defined uniquely, we may assume $\alpha' < \alpha$. Let $p = \ell_\alpha \cap m_\alpha$ and consider the following cases:
\begin{enumerate}
\item $p \in Q_N$: In this case $P_N \subset Q_N$ and both sets have the same measure. This contradicts connectivity of the set where the density is positive, since the density is positive in $Q_S$ and in $P_N$, it must be positive somewhere in $Q_N \setminus P_N$, implying $\mu(Q_N) > \mu (P_N)$.
\item $p \in Q_E$: In this case $P_W \subset Q_W$ and we obtain a similar contradiction.
\item $p \in Q_S$ and $p \in Q_W$ are similar to considered cases.
\end{enumerate}
Since the lines $\ell_\alpha, m_\alpha, \ell_{\alpha'}$, and $m_{\alpha'}$ are distinct, this covers all cases. In each case, we obtain a contradiction, hence, we have uniqueness.

As for continuity, it follows from the standard fact that a map with compact codomain and closed graph must in fact be continuous. The codomain is compact since we are only interested in directions of the halving lines that afterwards produce halving lines continuously, thus working with $\bbS^1 \times \bbS^1$ as the space of parameters.
\end{proof}

\end{document}